\documentclass[aap]{imsart}
\RequirePackage[colorlinks,citecolor=blue,urlcolor=blue]{hyperref}

\usepackage[utf8]{inputenc}
\usepackage[english]{babel}
\usepackage{amsmath,amssymb,amsthm,verbatim,mathtools,bbm,mathrsfs,float,subcaption}
\usepackage[numbers,sort&compress]{natbib}
\usepackage{algorithm,algpseudocode}
\usepackage{enumitem}
\usepackage{url}

\usepackage{xfrac}

\usepackage{tikz} 
\usetikzlibrary{automata}
\usetikzlibrary{calc,decorations.pathreplacing, arrows.meta, positioning}

\usepackage{float}
\usepackage[md]{titlesec}
\usepackage{ulem}[normalem]
\usepackage{soul,cancel}

\newcommand{\me}{\mathrm{e}}
\newcommand{\rank}[1]{\operatorname{rank}(#1)}

\DeclareMathOperator{\A}{\mathcal{A}}
\DeclareMathOperator{\scrA}{\mathscr{A}}
\DeclareMathOperator{\scrD}{\mathscr{D}}
\DeclareMathOperator{\E}{\mathbb{E}}
\DeclareMathOperator{\V}{\mathbb{V}}
\DeclareMathOperator{\Prob}{\mathbb{P}}
\DeclareMathOperator{\tP}{\text{P}}
\DeclareMathOperator{\Pa}{\mathcal{P}}
\DeclareMathOperator{\scrP}{\mathscr{P}}

\DeclareMathOperator{\Gap}{\mathrm{SpecGap}}

\DeclareMathOperator{\tT}{\text{T}}
\DeclareMathOperator{\tK}{\text{K}}

\DeclareMathOperator{\tC}{\text{C}}
\DeclareMathOperator{\tG}{\text{G}}
\DeclareMathOperator{\tA}{\text{A}}

\newcommand{\norm}[1]{\left\lVert#1\right\rVert}

\newcommand{\abs}[1]{\left|#1\right|}
\newcommand{\iidsim}{\stackrel{\mbox{\tiny{iid}}}{\sim}}
\newcommand{\ind}{\mathbbm{1}}
\newcommand{\defeq}{\coloneqq}

\newtheorem{assump}{Assumption}

\newtheorem{theorem}{Theorem}
\newtheorem{proposition}{Proposition}
\newtheorem{lemma}{Lemma}
\newtheorem{definition}{Definition}
\newtheorem*{definition*}{Definition}
\newtheorem*{notation*}{Notation}
\newtheorem{corollary}{Corollary}
\newtheorem{remark}{Remark}
\theoremstyle{plain}

\newcommand{\bb}{\mathbf{b}}
\newcommand{\s}{\mathbf{s}}
\newcommand{\bt}{\mathbf{t}}
\newcommand{\x}{\mathbf{x}}
\newcommand{\tx}{\tilde{x}}

\newcommand{\tpsi}{\tilde{\psi}}
\newcommand{\y}{\mathbf{y}}

\newcommand{\C}{\mathcal{C}}
\newcommand{\D}{\mathcal{D}}
\newcommand{\calE}{\mathcal{E}}
\newcommand{\I}{\mathcal{I}}
\newcommand{\calQ}{\mathcal{Q}}
\newcommand{\calA}{\mathcal{A}}
\newcommand{\Smut}{\mathcal{S}}
\newcommand{\calS}{\mathcal{S}}
\newcommand{\X}{\mathcal{X}}
\newcommand{\Db}{\Delta \beta}
\newcommand{\bigO}{\mathcal{O}}

\newcommand{\Q}{\mathbf{Q}}
\newcommand{\tQ}{\tilde{\mathbf{Q}}}

\newcommand{\tg}{\tilde{\gamma}}
\newcommand{\gmin}{\gamma_{\min}}
\newcommand{\gmax}{\gamma_{\max}}

\newcommand{\tgmin}{\tg_{\min}}
\newcommand{\tgmax}{\tg_{\max}}



%

\makeatletter
\newtheorem{repeatthm@}{Lemma}

\makeatother

\usepackage{thm-restate}

\makeatletter
\newcommand{\mytag}[2]{%
  \text{#1}%
  \@bsphack
  \begingroup
    \@onelevel@sanitize\@currentlabelname
    \edef\@currentlabelname{%
      \expandafter\strip@period\@currentlabelname\relax.\relax\@@@%
    }%
    \protected@write\@auxout{}{%
      \string\newlabel{#2}{%
        {#1}%
        {\thepage}%
        {\@currentlabelname}%
        {\@currentHref}{}%
      }%
    }%
  \endgroup
  \@esphack
}
\makeatother

\begin{document}

\begin{frontmatter}

\title{Improved Bounds for Context-Dependent Evolutionary Models Using Sequential Monte Carlo}
\runtitle{Improved SMC Bounds for Context-Dependent Evolutionary Models}

\begin{aug}
\author[A]{\fnms{Joseph}~\snm{Mathews}\ead[label=e1]{joseph.mathews@duke.edu}}
\and
\author[B]{\fnms{Scott C.}~\snm{Schmidler}\ead[label=e2]{scott.schmidler@duke.edu}\orcid{0000-0000-0000-0000}}
\address[A]{Department of Statistical Science, Duke University \printead[presep={,\ }]{e1}}

\address[B]{Department of Statistical Science,
Duke University \printead[presep={,\ }]{e2}}
\end{aug}

%





\begin{abstract} 
Statistical inference in evolutionary models with site-dependence is a long-standing challenge in phylogenetics and computational biology. We consider the problem of approximating marginal sequence likelihoods under dependent-site models of biological sequence evolution. We prove a polynomial mixing time bound for a Markov chain Monte Carlo algorithm that samples the conditional distribution over latent sample paths, when the chain is initialized with a warm start.  We then introduce a sequential Monte Carlo (SMC) algorithm for approximating the marginal likelihood, and show that our mixing time bound can be combined with recent importance sampling and finite-sample SMC results to obtain bounds on the finite sample approximation error of the resulting estimator. Our results show that the proposed SMC algorithm yields an efficient randomized approximation scheme for many practical problems of interest, and offers a significant improvement over a recently developed  importance sampler for this problem. Our approach combines recent innovations in obtaining bounds for MCMC and SMC samplers, and may prove applicable to  other problems of approximating marginal likelihoods and Bayes factors.
%
%
\end{abstract}

\begin{keyword}[class=MSC]
\kwd[Primary ]{65C60}
\kwd[; secondary ]{65C60}
\kwd{60J22}
\end{keyword}

\begin{keyword}
\kwd{Markov chain Monte Carlo}
\kwd{Sequential Monte Carlo}
\kwd{Phylogenetics}
\end{keyword}

\end{frontmatter}

\section{Introduction}
\label{sec:Intro}
Let $\x = (x_1,x_2,\ldots,x_n)$ and $\y = (y_1,y_2,\ldots,y_n)$ denote two DNA sequences. A fundamental quantity in phylogenetics is the probability that $\x$ transitions to $\y$ under a given model of DNA evolution. Calculation of  sequence transition probabilities is required for evaluating (marginal) likelihoods in a wide variety of statistical inference problems: the reconstruction of phylogenetic tree topologies \cite{Felenstein:1985,MrBayes:2012,Felenstein:1973}; the estimation of divergence times (branch lengths) \cite{Sanderson:1997,Thorne:1998,Kishino:2001}, mutation model parameters \cite{Rodriguez:1990,Yang:1994}, and selection coefficients \cite{Halpern:1998,Yang:2008}; and the reconstruction of ancestral sequences \cite{Pagel:2004,Yang:1995}, to name just a few. Let $\x_t = (x_1(t),\ldots,x_n(t))$ denote the state of the sequence at time $t$. Standard evolutionary models assume each site $x_i(t)$ evolves according to a continuous-time Markov chain (CTMC) with rate matrix $\Q$ \cite{Felenstein:1973}. Typically, the processes $x_i(t)$ and $x_j(t)$
are assumed to evolve independently for $i \neq j$, for computational tractability. Under this \textit{independent site model} (ISM) assumption, transition probabilities are straightforward to compute since they factor into a product of transition probabilities at each site: 
\begin{align}
\label{eqn:typical MargLik}
\Pr(\x_T = \y \mid  \x_0 = \x) = \prod^n_{i=1} p_{(T,\Q)}(y_i \mid x_i) =  \prod^n_{i=1} (e^{T\Q})_{x_i,y_i}
\end{align}
%
Some well-known choices for $\Q$ include the Jukes-Cantor (JC69) \cite{Jukes:1969} and generalized time reversible (GTR) \cite{Tavare:1986} models. This independence assumption is critical to the tractability of computations for reconstructing phylogenetic trees and parameter estimation more generally (e.g. Felenstein's pruning algorithm \cite{Felenstein:1973}). 
While independent site models are appealing in their simplicity and computational convenience, they fail to capture known important features of biological evolution that create dependence among sites; examples include CpG di-nucleotide mutability \cite{Pederson:1998}, structural constraints in RNA and proteins \cite{Robinson:2003}, and enzyme-driven somatic hypermutation in B-cell affinity maturation \cite{Wiehe:2018,Mathews:2023b}.

A variety of \textit{dependent} site models (DSMs) have been proposed to relax this independence assumption \cite{Robinson:2003,Siepel:2004,Jensen:2000,Larson:2020,Hwang:2004,VonHaeseler:1998,Christensen:2005,Arndt:2005,Lunter:2004}, incorporating varying amounts of dependence. Codon models \cite{Goldman:1994} allow individual nucleotide substitution rates to depend on sites within the same codon, but still assume independence among the codons themselves. Jensen and Pedersen \cite{Jensen:2000,Jensen:2001} describe a Markov random field model where the substitution rate at a given site depends on its  
neighboring codons. \citet{Robinson:2003} give a model of protein evolution that incorporates dependencies among codons distant in the sequence, based on their spatial proximity in the protein tertiary structure. 
However, computing marginal sequence likelihoods exactly under these models of site dependence is difficult or intractable since the corresponding likelihood no longer factors.

As a result, Markov chain Monte Carlo (MCMC) algorithms have been proposed which address statistical inference in these problems by sampling unobserved sequence evolution paths from $\x$ to $\y$, with the desired transition probability obtained by marginalization over all such paths \cite{Jensen:2000,Robinson:2003, Hwang:2004,Rodrigue:2005,Rodrigue:2006,Hobolth:2009,HobolthThorne:2014,Li:2024}.
However, MCMC has important disadvantages for use in evaluating likelihoods within iterative sampling (Bayesian) or optimization (MLE) algorithms, due to its inherently serial nature \cite{Vandewerken:2013}, the need to assess convergence empirically \cite{Gelman:1992,Cowles:1996}
and the difficulty of doing so in high-dimensions \cite{Brooks:1998,Vandewerken:2017},
and the rarity of available quantitative mixing time and approximation error bounds \cite{Rosenthal:1995,Jones:2001}. Recently  \citet{Mathews3:2025} proposed an alternative approach based on importance sampling, using an ISM as an instrumental distribution. This approach is attractive in its ability to leverage the substantial body of existing phylogenetics software, which often provides the ability to sample evolutionary paths under the site independence assumption. However, while the the sample complexity of this importance sampler grows much slower than the problem dimension (sequence length $n$), the complexity nonetheless grows exponentially in the number $r$ of observed mutations, rendering the importance sampling complexity prohibitively large for many applied problems of interest.

Our results are two-fold. First, we establish a 
upper bound on the mixing time for a component-wise Metropolis algorithm for this problem \cite{Robinson:2003,Li:2024} under a warm start. Although this algorithm has been used in applications, \cite{Robinson:2003,Li:2024}, to our knowledge this bound constitutes the first rigorous convergence rate analysis. The main technical difficulty addressed in doing so is the failure of the density ratio between the ISM and DSM models $\mu$ and $\pi$ to be uniformly bounded. This is because there is no limit on the number of possible unobserved jumps (mutations followed by subsequent reversion mutations) along any endpoint-conditioned path from $\x$ to $\y$. Thus to establish our result, we bound the \textit{approximate spectral gap} of \citet{Atchade:2021}, which enables us to consider the spectral gap of the MCMC chain restricted to a high probability subset of the state space, and combine this analysis with a bound on the moment generating function of the mutation count process.

Our second contribution is a sequential Monte Carlo (SMC) algorithm for approximating marginal sequence likelihoods under DSMs, along with corresponding finite sample error bounds on the resulting estimator, establishing a randomized approximation scheme for this problem with a sample complexity that significantly improves on previous results \cite{Mathews3:2025}, in some cases providing an \textit{exponential} improvement in sample complexity. The SMC algorithm proceeds by sequentially sampling endpoint-conditioned paths from a sequence of DSMs $\pi_0, \pi_1, \ldots,\pi_{V} = \pi$ with the context-dependence  `tempered', such that the initial distribution $\pi_0$  is an ISM. We derive error bounds under mutation models exhibiting \textit{neighborhood context-dependence}, in which mutation rates at each site are allowed to depend on other sites in a local neighborhood; such models nevertheless lead to \textit{global} dependence among the marginal processes at all $n$ sites in the sequence.



This result combines our newly-obtained mixing time bound for the Metropolis algorithm with recent results of \citet{Marion:2018b} showing that a warm-start mixing time bound for the SMC mutation (Markov) kernel suffices to establish SMC error bounds, provided the $\chi^2$-divergence between any two intermediate distributions $\pi_v$ and $\pi_{v-1}$ is uniformly bounded.

The remainder of this paper is organized as follows: Section~\ref{sec: background} establishes notation and introduces the two algorithms studied here, a previously-developed component Metropolis kernel and our SMC algorithm; Section~\ref{sec:MainResults} presents the main results of this paper, including the MCMC mixing time and SMC complexity bounds; Sections~\ref{sec: mixing time bound}
and~\ref{sec: SMC} 
give the proofs of the MCMC mixing time bound and SMC complexity bounds, respectively; 
and Section~\ref{sec:Conclusion} summarizes our results and discusses future directions. Some technical results needed in the proofs are deferred to Appendix~\ref{Appdx:SMC Results}.

\section{Background and Notation}\label{sec: background}

\subsection{Models of Molecular Evolution}
\label{sec:Models of DNA Evolution}
Let $\x = (x_1,\ldots,x_n)$ denote a sequence where $x_i \in \scrA$ for some alphabet $\scrA$ (e.g. $\scrA = \{\tA,\tG,\tC,\tT\}$) of size $a := |\scrA|$. Let 
\begin{align*}
    \tx_i = (x_{i_1},\ldots,x_{i_{k/2}},x_i,x_{i_{k/2 + 1}},\ldots,x_{i_k})
\end{align*}
denote the \textit{context} of site $x_i$ and $\C_i$ be the set of sites lying in the context of site $i$. 
The case $k = 0$ corresponds to an independent site model. We assume that sites evolve according to a time-inhomogeneous CTMC, where
\begin{align}
\label{eqn:CD Rates}
\tg_i(b; \tx_i) = \gamma_i(b; x_i)\phi(b; \tx_i) \quad \text{ for } \; b \in \scrA \setminus x_i
\end{align}
%
is the (context-dependent) rate at which $x_i \in \scrA$ mutates to $b$, with the context-dependency given by the multiplier $\phi: \scrA^{k+1} \rightarrow (0,\infty)$ 
and the context-independent rate by $\gamma_i: \scrA^2 \rightarrow (0,\infty)$
The subscript on the rates indicates a possible dependence on the site at which the mutation occurs. Let $\tg_i(\cdot; \tx_i) = 
\sum_{b \neq x_i}  \tg_i(b; \tx_i)$ 
denote the rate at which site $i$ exits state $x_i$,
and $\tg(\cdot ; \x) =  \sum^n_{i=1} \tg_i(\cdot; \tx_i)$ the total rate at which sequence $\x$ mutates.
All of our results are stated under the standard assumption that multiple substitutions cannot occur simultaneously, a natural one for most sequence evolution models.
%
%
\paragraph*{\textbf{Example 1: Models of DNA evolution}} 
\label{sec:CpGIntro}
CpG models are DSMs that have been used to account for low observed CG frequencies across codon boundaries in lentiviral genes \cite{Jensen:2000} and mammalian genomes \cite{Hwang:2004} with rates
\begin{align}
\label{eqn:CpG Island Rates}
\tg_i(b; \tx_i) = \gamma_i(b; x_i) \lambda^{\ind_{\text{CG}}(x_{i - 1},x_i) + \ind_{\text{CG}}(x_i,x_{i + 1})},
\end{align}
where $\lambda \in (0,\infty)$ is a constant reflecting the relative bias against formation of CG pairs across codon boundaries \cite{Lunter:2004, Arndt:2005, Christensen:2005}.
\paragraph*{\textbf{Example 2: Models of antibody maturation}} 
DSMs have been used to account for sequence-context-dependent somatic hypermutation (SHM) patterns in affinity maturation of antibody sequences \cite{Wiehe:2018}, with 
$\gamma_i \equiv 1$ and $\phi$ corresponding to a set of $4^5$ context-dependent rates (i.e. $k=4$) estimated from data \cite{Yaari:2013}.
\paragraph*{\textbf{Example 3: Structure-dependent evolution of proteins}}
DSMs have been used to model dependence  between codons arising from sequence-structure compatibility in protein evolution \cite{Robinson:2003} using a codon model with context-dependent parameter
\begin{align*}
\phi(y; \tx_i) = \phi_{\text{SA}}(y; \tx_i)\phi_{\text{SC}}(y; \tx_i),
\end{align*}
where $y$ is a codon and $\phi_{\text{SA}}$ and $\phi_{\text{SC}}$ are measures of energetic compatibility of the encoded amino acid with the conserved 3D protein structure, specified in terms of solvent accessibility and 
specific intra-sequence sidechain interactions. Note that $\tx_i$ are not necessarily contiguous in the DNA sequence, allowing for long-range dependence in sequence positions arising from 3D structure.

Let $\x_t = (x_1(t),\ldots,x_n(t))$ denote the state of the sequence at time $t$. We are interested in the calculation of probabilities of the form
\begin{align*}
\text{Pr}(\x_{T} = \y \mid \x_{0} = \x) = (e^{T\tQ})_{\x,\y} := p_{(T,\tQ)}(\y \mid \x),
\end{align*}
where $T$ is a fixed observation time, $\tQ$ the $a^n \times a^n$ rate matrix defined by the context-dependent rates \eqref{eqn:CD Rates} of the Markov process operating on the space of all sequences:
\begin{align}
\label{eqn: tilde Q expanded definition}
\tQ_{\x,\x^{\prime}} = 
\begin{cases}
\tg_{i}(b; \tilde{x}_{i}) & \text{ for } \text{d}_{\text{H}}(\x,\x^{\prime}) = 1 \text{ and }x^{\prime}_{i} = b \neq x_{i}   \\
-\tg(\cdot; \x) & \text{ for } \text{d}_{\text{H}}(\x,\x^{\prime}) = 0 \\
0 & \text{ for } \text{d}_{\text{H}}(\x,\x^{\prime}) > 1,
\end{cases}  
\end{align}
and $(e^{T\tQ})_{\x,\y}$ denotes the element of the matrix $e^{T\tQ}$ corresponding to the sequences $\x$ and $\y$. However, direct computation of $e^{T\tQ}$ is intractable as $\rank{\tQ}$ grows exponentially in $n$. Alternatively, we can write $p_{(T,\tQ)}(\y\mid\x)$ as a marginalization over latent \textit{paths} that start in $\x$ and end in $\y$ at time $T$. Specifically, let
%
\begin{align*}
\Pa =   (m,t^1,\ldots,t^m,s^1,\ldots,s^m,b^1,\ldots,b^m )
\end{align*}
denote a path of \textit{length} $m$,
where $m \in \{0,1,\ldots \}$ is the number of mutations occurring along the path, $t^1,\ldots,t^m \in \mathbb{R}_+$ are the times of the mutation events satisfying $ t_0 = 0 < t^1 < \ldots < t^m < T$, $s^1,\ldots,s^m \in \{1,\ldots,n\}$ the sites at which the mutations occur, and $b^1,\ldots,b^m \in \scrA$ are the values of the base changes. At times, we will make the length of the path explicit by writing $\Pa^l$ and letting $\scrP^l$ denote the set of all length $l$ paths. Let 
\begin{alignat*}{2}   
\x^j    &:=&\; \x(t^j)     &= \x\bigl(j; s^1,\ldots,s^m,b^1,\ldots,b^m,m\bigr)\\
\tx^j_i &:=&\; \tx_i(t^j)  &= \tx_i\bigl(j; s^1,\ldots,s^m,b^1,\ldots,b^m,m\bigr)
\end{alignat*}
%
denote the sequence and context at site $i$, respectively, following the $j^{\text{th}}$ jump along a given path, i.e. in the interval $t\in[t_j,t_{j+1})$, and let $\Delta^t(j) := t^{j+1} - t^j$ with $\Delta^t(m) := T - t^m$ be the inter-arrival times between jumps. Let $\scrP = \cup^{\infty}_{l=0} \scrP^l$ denote the set of all such paths. Let $\nu^l := \nu^l_t \otimes \nu_s^l \otimes \nu^l_b$, where $\nu^l_t$ denotes the Lebesgue measure on $[0,T]^l$, and $\nu_s^l$ and $\nu^l_b$ the counting measures on $\{1,\ldots,n\}^l$ and $\scrA^l$, respectively, and define the measure $\nu(d\Pa) := \sum^{\infty}_{l=0} \ind_l(dm) \nu^l(d\bt^l, d\s^l, d\bb^l)$. Then we can write 
\begin{align}
\label{eqn:Transition Probability}
p_{(T,\tQ)}(\y \mid  \x)  = \sum^{\infty}_{l=0}\int_{\scrP^l} \tP_{(T,\tQ)}(\y, \Pa \mid \x) \nu^l(d\Pa^l) = \int_{\scrP} \tP_{(T,\tQ)}(\y, \Pa \mid \x) \nu(d\Pa) ,
\end{align}
%
where the conditional joint density of a path from $\x$ ending in $\y$ is given by 
\begin{align}
\label{eqn:Path Density}
\tP_{(T,\tQ)}(\y,\Pa \mid \x) := \left[\prod^{m(\Pa)}_{j=1} \tg_{s^j}(b^j; \tx^{j-1}_{s^j})
\me^{-\Delta^t(j-1) \tg(\cdot; \x^{j-1})}
 \right] \me^{-\Delta^t(m)\tg(\cdot; \y) } \ind_{\x^m = \y}(\Pa),
\end{align}
if the times satisfy the ordering constraint $0<t^1<\ldots<t^m<T$, and zero otherwise. We let $r = d_{\text{H}}(\x,\y)$ denote the Hamming distance between $\x$ and $\y$, and $\calS = \{i : y_i \neq x_i \}$ denote the set of \textit{observed} mutated sites. Note that \eqref{eqn:Path Density} is zero unless $\Pa \ni \calS$, so each endpoint conditioned path $\Pa$ contains $r$ \textit{required} jumps and $m(\Pa) - r$ \textit{extra} jumps.

It follows from \eqref{eqn:Transition Probability} that $p_{(T,\tQ)}(\y \mid \x)$ can be approximated by Monte Carlo integration by sampling from the distribution
%
\begin{align*}
\pi(\Pa \mid \x,\y) :=  \frac{\tP_{(T,\tQ)}(\y, \Pa \mid \x)}{p_{(T,\tQ)}(\y \mid \x)},
\end{align*}
where we have suppressed the dependence of $\pi$ on $T$ for brevity; hereafter $T$ will be assumed fixed.
\begin{remark}
At times we will abuse notation by using the same symbol for both a probability measure and its density with respect to $\nu$, e.g., the density of $\pi$ with respect to $\nu$ is written as  $\pi(\Pa \mid \x,\y)$. It will also often be convenient to leave conditioning on $\x$ and $\y$ implicit, writing $\pi(\Pa)$ and $\mu(\Pa)$ in place of $\pi(\Pa \mid \x,\y)$ and $\mu(\Pa \mid \x,\y)$.
\end{remark}
However, generating samples from the joint distribution $\pi(\Pa \mid \x, \y)$ is not straightforward: we must sample the evolution of all $n$ sites jointly such that the endpoint constraint $\x_T=\y$ is satisfied at time $T$ but, as noted above, constructing the rate matrix of the joint process on the space of all sequences (of size $a^n$) is intractable for even moderate $n$. However, under an \textit{independent} site model (ISM), paths can be sampled efficiently on a site-by-site basis by specialized algorithms  \cite{Hobolth:2009}, a fact which we will take advantage of below.

\subsection{MCMC for Endpoint-Conditioned Paths}
\label{sec: MCMC definition}
Sampling $\pi(\Pa \mid \x, \y)$ under site dependence can be performed by MCMC \cite{Jensen:2000,Robinson:2003, Hwang:2004,Rodrigue:2005,Rodrigue:2006,Hobolth:2009,HobolthThorne:2014,Li:2024}.  We begin with a
simple component-wise Metropolis algorithm which updates paths one site at a time 
using an ISM as a proposal distribution for endpoint-conditioned paths, and accepting or rejecting according the Metropolis criteria under the DSM \cite{Robinson:2003,Li:2024}. Below we will consider a modification of this chain which uses \textit{blocked} site updates.

The ISM is defined as follows. Let $\Q_i= (\gamma_i(y; x))$ for $x,y\in \scrA$ be an $a \times a$ rate matrix corresponding to the CTMC at site $i$ (see \eqref{eqn:CD Rates}, with $\phi \equiv 1$). Consider the endpoint-conditioned distribution
\begin{align}
\label{Eqn:ISM}
\mu(\Pa \mid \x, \y) \propto \tP_{(T,\Q)}(\y, \Pa \mid \x),
\end{align}
with rate matrix $\Q^{(n)} = \mathbf{I}_a \otimes \Q^{(n-1)} + \Q_n \otimes \mathbf{I}_{a^{n-1}}$
where $\Q^{(1)} = \Q_1$ and 
$\mathbf{I}_a$ is the $a$-dimensional identity matrix. The density $\tP_{(T,\Q)}(\y, \Pa \mid \x)$ is given by \eqref{eqn:Path Density} but with $\phi \equiv 1$. In this case the joint density \eqref{eqn:Path Density} can be factored by site. Let 
\begin{align*}
\Pa_i = (m_i,t^1_i,\ldots,t^{m_i}_i,b^1_i,\ldots,b^{m_i}_i)
\end{align*}
denote the path at site $i$ defined by $\Pa$. That is, 
$(t^1_i,\ldots,t^{m_i}_i,b^1_i,\ldots,b^{m_i}_i) = \{(t^j,b^j) \in \Pa: s^j = i \}$ and $m_i = \sum^m_{j=1} \ind(s^j = i)$ is the number of jumps at site $i$. We let $\scrP^l_i$ denote the set of all length $l$ paths, and $\scrP_i = \cup^{\infty}_{l=0} \scrP^l_i$ be the set of all paths, at the $i$th site. Define $\Delta_i^t(j) := t_i^{j+1} - t_i^j$ with $\Delta_i^t(m_i) := T - t_i^{m_i}$. The joint density of a path at site $i$ that begins at $x_i$ and ends at $y_i$ is given by 
\begin{align}
\label{eqn:Path Density ISM}
 \tP_{(T,\Q_i)}(y_i,\Pa_i \mid x_i) = \left[\prod^{m_i}_{j=1} \gamma_i(b^j_i; b^{j-1}_i))
\me^{-\Delta_i^t(j-1) \gamma_i(\cdot; b^{j-1}_i)}
 \right] \me^{-\Delta_i^t(m_i)\gamma_i(\cdot; y_i)} \ind_{ \{b^{m_i}_i = y_i \} }(\Pa_i).
\end{align}
Computing the transition probability \eqref{eqn:Transition Probability} under the ISM is straightforward:
\begin{align}
\label{eqn: Norm Constant Mu}
p_{(T,\Q)}(\y \mid \x) := \int_{\scrP} \tP_{(T,\Q)}(\y , \Pa \mid \x) \nu(d\Pa) &= \prod^n_{i=1} 
 \int_{\scrP_i} \tP_{(T,\Q_i)}(y_i,\Pa_i \mid x_i) \nu_i(d\Pa_i) \\
&= \prod^n_{i=1} (e^{T\Q_i})_{x_i,y_i},
\end{align}
where $\nu_i(d\Pa_i) := \sum^{\infty}_{l=0} \ind_{dm_i}(l) \nu_i^l(d\bt^l_i, d\bb^l_i) $ for $\nu^l_i = \nu^l_t \otimes \nu^l_b$.

Similarly, letting $\mu_i(\Pa_i \mid x_i, y_i) \propto \tP_{(T,\Q_i)}(y_i,\Pa_i \mid x_i)$ denote the endpoint-conditioned measure for site $i$, we have $\mu =  \mu_1 \times \ldots \times \mu_n$ under the ISM.
As noted previously, sampling paths from the endpoint-conditioned measure $\mu$ under the ISM is straightforward, as the path at each site can be drawn independently and exactly using established algorithms \cite{Hobolth:2009}.

To construct an MCMC algorithm to sample from $\pi(\Pa \mid \x, \y)$,
we define a $\pi$-invariant mutation kernel $\tK$  that randomly selects a \textit{block} of mutated sites and proposes a joint update to the paths at all sites in the block. Let $\mathscr{I} = \{\I_1,\ldots,\I_B\}$ denote a partition of all site indices $\{1,\ldots,n\}$ into $B$ \textit{blocks} and let $\Pa_{\I_j} = \{\Pa_i : i \in \I_j\}$ denote the projection of the path $\Pa$ onto the index set $\I_j$. Let $\tK_{(j)}$ be a Metropolis-Hastings kernel defined on $\mathscr{P}_{\I_{j}}$ that updates $\Pa_{\I_j}$ jointly by proposing from the ISM $\mu$:
\begin{align*}
\tK_{(j)}(\Pa_{\I_j},d\Pa^{\prime}_{\I_j}) &:=  \mu_{\I_j}(d\Pa_{\I_j}^{\prime}) \alpha_j(\Pa_{\I_j}, \Pa_{\I_j}^{\prime}) \\
&+ \delta_{\Pa_{\I_j}}(d\Pa^{\prime}_{\I_j}) \left[1- \int_{\scrP_{\I_j}}  \mu_{\I_j}(d\Pa_{\I_j}^{\prime}) \alpha_j(\Pa_{\I_j}, \Pa_{\I_j}^{\prime}) \right],
\end{align*}
where $\mu_{\I_j}(\Pa_{\I_j}) \propto \tP_{(T, \Q)}(\y_{\I_j} ,\Pa_{\I_j} \mid \x_{\I_j} )$ denotes the restriction of the ISM to the sites $\I_j$, and accepting or rejecting according to
\begin{align}
\alpha_j(\Pa_{\I_j},\Pa_{\I_j}^{\prime}) &:= 
\min\left\{1,\frac{w(\Pa_{\I_j}^{\prime})}{w(\Pa_{\I_j})}\right\},
\label{Eqn:MetropAcceptProb}
\end{align}
thus leaving the conditional distribution $\pi(\Pa_{\I_j} \mid \Pa_{\I_{[-j]}},\x,\y)$
invariant.
The mutation kernel $\tK$ chooses a partition element $\I_j$ uniformly at random and updates $\Pa_{\I_j}$ via $\tK_{(j)}$, yielding joint kernel
\begin{align}
\label{eqn:MWG Main Body}
\tK(\Pa, d\Pa^{\prime}) := \frac{1}{B}\sum^B_{j=1} \tK_{(j)}(\Pa_{\I_j}, d\Pa_{\I_j}^{\prime}) \delta_{\Pa_{\I_{[-j]}}}(d\Pa^{\prime}_{\I_{[-j]}}).
\end{align}
Note that $\tK$ is implicitly a function of the partition $\mathscr{I}$ but we do not make this dependence explicit in the notation; the choice of partition is discussed later in Section~\ref{sec: SMC Background}.

We will need the notion of a warm-start mixing time for a Markov chain. A distribution $\eta$ is said to be \textit{$\omega$-warm with respect to $\pi$} \cite{Vempala:2005} if 
\begin{align}
\label{eqn:resampling warm}
\sup_{B \subset \scrP} \frac{\eta(B \mid C)}{\pi(B)} \leq \omega.
\end{align}
Let $\mathscr{M}_{\omega }(\pi)$ denote the set of all $\omega$-warm distributions with respect to $\pi$ and define the \textit{warm mixing time} of a Markov kernel $\tK$ by
\begin{align}\label{eqn: warm mixing time definition}
\tau(\epsilon,\omega) := \inf\left\{s: \sup_{\eta \in \mathscr{M}_{\omega}(\pi)} \norm{ \eta \tK^s(\cdot) - \pi(\cdot)}_{\text{\tiny{TV}}} \leq \epsilon  \right\},
\end{align}
where $\eta \tK^s(\cdot) := \int_{  \scrP} \eta(d\Pa)\tK^s(\Pa, \cdot)$ and $\norm{\cdot}_{\text{\tiny{TV}}}$ denotes total variation distance.

\subsection{An SMC Algorithm for DSMs}
\label{sec: SMC Background}
Here, we introduce an alternative to the importance sampling scheme of \citet{Mathews3:2025}, based on sequential Monte Carlo \cite{DelMoral:2006,Chopin:2002}. This SMC scheme replaces the single-stage importance sampling of \citet{Mathews3:2025} with 
a multi-stage procedure which more finely controls the variance. Sequential Monte Carlo (SMC) introduces a set of intermediate ``bridging" distributions, along with resampling, to form a telescoping product estimator for the marginal likelihood $p_{(T,\tQ)}(\y \mid \x)$. This reduces the $L^2$ distance required by any individual IS estimation step, where
\begin{align*}
 L^2(\pi, \mu) := \int_{\scrP}\left(\frac{\pi(\Pa \mid \x, \y)}{\mu(\Pa \mid \x, \y)}  \right)^2 \mu(d\Pa \mid \x, \y)
\end{align*}
is the squared $L^2(\mu)$ norm of $\pi/\mu$.
 The SMC algorithm introduced here sequentially samples from a sequence of distributions $\mu=\pi_0,\pi_1,\pi_2,\ldots,\pi_V= \pi$ all defined on $\scrP$. Let $q_v$ denote the corresponding unnormalized densities for $v\in \{0,\ldots,V\}$, so
\begin{align}
\pi_v(\Pa) = q_v(\Pa) / z_v,
\label{Eqn:TemperedPi}
\end{align}
with $z_v = \int_{\scrP} q_v(\Pa) d\nu(\Pa)$ the normalizing constant of $\pi_v$. The algorithm proceeds as follows: 
\begin{algorithm}[ht]
\caption{Sequential Monte Carlo (SMC) Sampler for DSMs}
\label{alg:smc}
\begin{algorithmic}[1]
  \State \textbf{Initialization:} Sample $\hat{\Pa}_0^{(1)},\ldots,\hat{\Pa}_0^{(N)} \iidsim \pi_0$.
  \For{$v = 1, \ldots, V$}
    \State \textbf{Resampling:} For $i = 1,\ldots,N$, sample $\tilde{\Pa}_v^{(i)} = \hat{\Pa}_{v-1}^{(i)}$ with probability
      \Statex \quad $\displaystyle \frac{w_v(\hat{\Pa}_{v-1}^{(i)})}{\sum_{j=1}^N w_v(\hat{\Pa}_{v-1}^{(j)})}$, where $w_v(\Pa) = q_v(\Pa)/q_{v-1}(\Pa)$.
    \State \textbf{Mutation:} For $i = 1,\ldots,N$, sample
      $\hat{\Pa}^{(i)}_v \mid \tilde{\Pa}^{(i)}_v \sim \tK^s_v(\tilde{\Pa}^{(i)}_v, \cdot)$,
      where $\tK_v$ is an ergodic $\pi_v$-invariant Markov kernel.
  \EndFor
\end{algorithmic}
\end{algorithm}

%
We define the run time of the SMC sampler as $NVs$, which is the total number of Markov transition steps required in a single run of the algorithm. Here, we choose $\pi_{0},\ldots,\pi_{V}$ to be a sequence of `tempered' DSMs with decreasing interaction strength as follows. Let $0 = \beta_{0} < \beta_1 < \ldots < \beta_{V} = 1$ denote a set of inverse temperatures and define
\begin{align}
\label{eqn:CD Rates SMC}
\tg_{i,v}(b; \tx_i) = \gamma_i(b; x_i)\phi^{\beta_{v}}(b; \tx_i) \quad \text{ for } \; b \in \scrA \setminus x_i.
\end{align}
The (unnormalized) conditional joint density \eqref{eqn:Path Density} of a path $\Pa$ from $\x$ to  $\y$ under the tempered model becomes
\begin{align}
\label{eqn:Path Density Tempered}
\tP_{(T,\tQ_{v})}(\y,\Pa \mid \x) := \left[\prod^{m(\Pa)}_{j=1} \tg_{s^j,v}(b^j; \tx^{j-1}_{s^j})
\me^{-\Delta^t(j-1) \tg_{v}(\cdot; \x^{j-1})}
\right] \me^{-\Delta^t(m)\tg_{v}(\cdot; \y) } \ind_{\x(m(\Pa)) = \y}(\Pa),
\end{align}
where $\tQ_v$ is the $a^n \times a^n$ rate matrix obtained from \eqref{eqn:CD Rates SMC}.  This defines a  sequence of endpoint-conditioned path distributions $\mu =\pi_0,\pi_1,\ldots,\pi_V = \pi$ with
\begin{align}
\label{eqn:Tempered sequence}
\pi_v(\Pa) := \pi_v(\Pa \mid \x, \y) = \frac{\tP_{(T,\tQ_v)}(\y,\Pa \mid \x)}{\int_{\scrP} \tP_{(T,\tQ_v)}(\y,\Pa \mid \x) \nu(d\Pa)} = \frac{\tP_{(T,\tQ_v)}(\y,\Pa \mid \x)}{z_v}.
\end{align}
Here $z_v = p_{(T,\tQ_v)}(\y \mid \x)$ denotes the transition probability under the DSM with rates \eqref{eqn:CD Rates SMC}
. Define the product estimator of $z_v$ by the recursion
\begin{align*}
\hat{z}_v(\hat{\Pa}^{(1:N)}_{1:v}) = \hat{z}_v := z_0\prod^v_{v'=1} 
\hat{z}_{v'}
\end{align*}
and so an estimate of $p_{(T,\tQ)}(\y \mid \x)$ is obtained by
\begin{align}
\label{eqn: SMC estimator}
\hat{z}_V(\hat{\Pa}^{(1:N)}_{1:V}) = \hat{z}_V := z_0\prod^V_{v=1} \left(\frac{1}{N}\sum^N_{i=1} w_v(\hat{\Pa}^{(i)}_{v-1}) \right).
\end{align}
We let $\tK_v$ denote the $\pi_v$-invariant blocked component-wise Metropolis chain \eqref{eqn:MWG Main Body} defined in Section~\ref{sec: MCMC definition} that randomly selects a block of mutated sites and proposes a joint update to the paths at all sites in the block. Similarly, let $\tau_v(\epsilon,\omega)$ be the $\omega$-warm mixing time for the kernel $\tK_v$. Denote by $\tK_1,\ldots,\tK_V$ the mutation kernels targeting $\pi_1,\ldots,\pi_V$, respectively.

\section{Main Results}
\label{sec:MainResults}
We now state the main results of the paper, which concern the convergence rate of the MCMC algorithm defined in Section~\ref{sec: MCMC definition} and the sample-size requirements for approximating the marginal likelihood $p_{(T,\tQ)}(\y \mid \x)$ using the SMC algorithm introduced in Section~\ref{sec: SMC Background} (Algorithm~\ref{alg:smc}) using the MCMC algorithm as a mutation kernel. Supporting results are established in the following sections. Of primary interest is the scaling of these quantities with the size of the input problem; here measured by the length $n$ of the input sequences.
%
%
%
As we will see, the number of observed mutations $r =\text{d}_{\text{H}}(\x,\y)$ and the time interval $T$ also play important roles; thus we must consider the relative growth of $r(n)$ and $T(n)$ as $n$ increases. Luckily, there is a natural interval of interest for $T$ determined by $n$ and $r$, centered at $r/n$ \cite{Mathews2:2025}. Hence we will adopt the following assumption, the justification for which is discussed immediately after:
\begin{restatable}{assump}{rTAssumption}
\label{assump: T assumption}
The time interval $T(n) = \mathcal{O}(\frac{r(n)}{n})$ and the observed mutation count $r(n) = \mathcal{O}(n^{\frac{1}{2}})$.
\end{restatable}
%
%
%
In what follows, we often write $r$ and $T$ instead of $r(n)$ and $T(n)$ for brevity, except where we wish to emphasize the dependence explicitly.

The efficiency of the MCMC and SMC algorithms depend critically on the assumption that $T = \bigO(r/n)$, i.e. that $T$ not be too far from $r/n$. 
Because $T$ and mutation rates are not simultaneously identifiable, rate matrices are commonly scaled to one expected substitution per site per unit time, making $r/n$ -- a well known measure of genetic distance often called the \textit{p-distance} -- a natural estimate of $T$. However under DSMs, estimators of $T$ such as the maximum likelihood estimate (MLE) or posterior mean are not available in closed form and require iterative optimization or MCMC sampling, with the marginal likelihood evaluated at each iteration.  \citet{Mathews2:2025} show that the likelihood decays exponentially for values of $T$ far from $r/n$, and therefore the posterior distribution of $T$ concentrates close to $r/n$, under any 
reasonable prior distribution, so that larger values of $T$ can be safely omitted from consideration in such algorithms without compromising their accuracy.

Our first main result is a bound on the $\omega$-warm mixing time of the blocked component-wise Metropolis chain introduced in Section~\ref{sec: MCMC definition}. This bound depends on the size of the largest subset of observed mutations having overlapping contexts. Formally, let
\begin{align*}
   r_\star \defeq \max \{m: \exists 
   i_1,i_2,\ldots,i_m \in \Smut \text{ with } \C_{i_j} \cap \C_{i_{j'}} \neq \emptyset  \text{ for some $j' < j$}   \},
\end{align*}
%
so $r_\star$ is the largest component (connected subgraph) in the connectivity graph of $\Smut$.
We have the following result for $k/2$-nearest-neighbor models such as the CpG model \eqref{eqn:CpG Island Rates} and the S5F model of somatic hypermutation \cite{Yaari:2013}:
\begin{theorem}
\label{thm: mixing time bound}
Suppose the rate function $\gamma_i$ at each site $i$ depends only on its immediate $k/2$ neighbors to the left and right, and Assumption~\ref{assump: T assumption} holds. Then the blocked component-wise Metropolis-Hastings chain with blocks given by the connected components of $\Smut$ has $\omega$-warm mixing time upper bounded by 
\begin{align*}
\tau\left(\epsilon ,\omega \right) = \bigO\big(\exp(c \cdot r_\star)\big),
\end{align*}
for error tolerance $\epsilon \in (0,1)$ and model-dependent constant $c \in (0,\infty)$.
\end{theorem}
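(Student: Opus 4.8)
The plan is to control the warm-start mixing time through the spectral gap, reducing the problem to a single block and then bounding the gap of each per-block independence-Metropolis sampler. First I would exploit the block structure: because the blocks are the connected components of $\Smut$ and the multiplier $\phi$ depends only on the immediate $k/2$ neighbors, mutated sites in distinct blocks have disjoint contexts, so the endpoint-conditioned target localizes into a product $\pi \approx \bigotimes_j \pi_{\I_j}$ over blocks (and likewise $\mu \approx \bigotimes_j \mu_{\I_j}$), with the interaction between blocks carried only by transient jumps at the non-mutated boundary sites. For a product target updated by random scan the spectral gap tensorizes, $\Gap(\tK) \gtrsim B^{-1}\min_j \Gap(\tK_{(j)})$, so it suffices to lower-bound the gap of the per-block kernel $\tK_{(j)}$, which is a pure independence sampler with proposal $\mu_{\I_j}$ and target $\pi_{\I_j}$. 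Blocks carrying no observed mutation have conditional path law concentrated on the empty path and mix in $\bigO(1)$ steps; the bottleneck is the largest block with observed mutations, whose size is $r_\star$ by definition. The number of blocks $B \le r$ enters only as a polynomial-in-$n$ prefactor dominated by the exponential term.

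The central difficulty is that the per-block weight $w = \pi_{\I_j}/\mu_{\I_j}$ is not essentially bounded: a path may contain arbitrarily many extra mutation--reversion jumps, each contributing a factor $\phi(b;\tx)\in[\phi_{\min},\phi_{\max}]$ to the Radon--Nikodym derivative, so $\operatorname{ess\,sup} w = \infty$ and the independence sampler is not uniformly ergodic. To circumvent this I would use the approximate spectral gap of \citet{Atchade:2021}, which replaces the true gap with the gap of the chain restricted to a high-probability set. Fix a truncation level $J$ and let $A_J \subset \scrP_{\I_j}$ be the block paths with at most $J$ jumps. On $A_J$ the weight is bounded by $W_J := (\phi_{\max}/\phi_{\min})^{J}\,\me^{\bigO(r_\star T)}$, the exponential prefactor arising from the mismatch between the DSM and ISM total exit rates integrated over $[0,T]$; under Assumption~\ref{assump: T assumption} the block contributes $\bigO(r_\star)\cdot T = \bigO(r_\star \cdot r/n) = \bigO(1)$ to this exponent and is negligible. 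Since the gap of an independence sampler equals $(\operatorname{ess\,sup} w)^{-1}$, the restricted gap is at least $W_J^{-1}$, and Atchadé's framework upgrades this to a bound on the true warm mixing time provided $\pi(A_J^c)$ is small enough.

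To certify that $A_J$ carries nearly all the mass I would bound the moment generating function of the jump count in a block. The required jumps number exactly the observed mutated sites in the block, at most $r_\star$, while the extra jumps form an approximately Poisson count of mean $\bigO(r_\star T) = \bigO(1)$, since the per-site rates are bounded and $T = \bigO(r/n)$. A Chernoff bound on this MGF gives $\pi(A_J^c) \le \me^{-c' J}$ once $J$ exceeds the $\bigO(r_\star)$ mean, so choosing $J = \Theta(r_\star + \log(1/\epsilon))$ discards negligible mass while keeping $W_J = \exp(\bigO(r_\star))$. Substituting $\Gap(\tK_{(j)}) \gtrsim W_J^{-1} = \exp(-\bigO(r_\star))$ into the reversible warm-start bound $\tau(\epsilon,\omega) = \bigO\!\big(\Gap(\tK)^{-1}\log(\omega/\epsilon)\big)$ then yields $\tau(\epsilon,\omega) = \bigO(\exp(c\,r_\star))$.

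I expect the main obstacle to be the interface between the unbounded weight and the approximate-spectral-gap machinery: the truncation $J$ must be large enough that the MGF tail $\pi(A_J^c)$ meets the tolerance required by \citet{Atchade:2021}---where the admissible discarded mass is itself coupled to the restricted gap---yet small enough that $W_J$ stays $\exp(\bigO(r_\star))$ rather than growing with $n$. Making the MGF bound on the extra-jump count hold uniformly over the warm family $\mathscr{M}_\omega(\pi)$, and verifying that the product factorization is not spoiled by transient jumps at the non-mutated sites bridging adjacent components, are the two places where I would expect the argument to demand the most care.
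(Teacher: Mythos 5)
Your proposal follows the same overall architecture as the paper's proof: Atchad\'e's approximate spectral gap under a warm start, restriction to a high-probability set of paths with bounded jump counts certified by an MGF/Chernoff bound, tensorization of the gap over blocks, and a Metropolized-independence-sampler gap bound per block. However, there is a genuine gap at the decisive step, and it is exactly the one you flag and then defer: the product factorization. The Diaconis--Saloff-Coste tensorization $\Gap(\tK) \gtrsim B^{-1}\min_j \Gap(\tK_{(j)})$ requires the invariant measure to be an \emph{exact} product and the kernel to be a product chain. Here $\pi$ is \emph{not} a product over the connected components of $\Smut$: even though observed mutations in distinct blocks have disjoint contexts, unobserved (extra) jumps at non-mutated sites near block boundaries make the DSM rates inside one block depend on the current path in the neighboring block, so the endpoint-conditioned law does not factor, and an ``approximate product'' statement carries no quantitative force in a spectral gap argument. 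The paper resolves this by constructing a surrogate DSM $\hat{\pi}$ that uses ISM rates ($\phi \equiv 1$) at the division sites $\scrD$, which \emph{is} an exact product over blocks; it then proves two-sided bounds $\theta_{\scrD}^{-1} \leq \pi_0/\hat{\pi}_0 \leq \theta_{\scrD}$ on the restricted set, transfers the gap from the product chain $\hat{\tK}_0$ to $\tK_0$ by a comparison lemma, and only then applies tensorization --- to $\hat{\tK}_0$, not to $\tK$.

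This construction is not a technicality one can route around with your single per-block truncation $A_J$. The density-ratio correction $\theta_{\scrD}$ between $\pi$ and the product surrogate involves the jump counts at \emph{all} division sites $\scrD$ and their boundaries $\partial\scrD$ jointly, and it stays polynomial in $(B,\epsilon^{-1})$ --- rather than exponential in $r$ --- only because the division sites carry no observed mutations, so that $\zeta_{\scrD} = \bigO(nT^2) = \bigO(1)$ under Assumption~\ref{assump: T assumption}. This is why the paper's restricted set $\scrP_0$ intersects three families of events, bounding $m(\Pa_{\I_j})$, $m(\Pa_{\D_j})$, and $m(\Pa_{\partial\D_j})$ separately with thresholds $M_\epsilon(\I_j) = \bigO(r_\star) + \log(3B/\epsilon)$ versus $M_\epsilon(\D_j) = \bigO(1) + \log(3B/\epsilon)$. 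If instead you control division-site jumps only through the per-block totals summed across blocks, the correction inflates to $\exp(\bigO(r))$, which destroys precisely the improvement from $r$ to $r_\star$ that the theorem is meant to deliver. Two smaller inaccuracies: Atchad\'e's condition as used here is $\pi(\scrP_0^c) \leq \epsilon^2/(20\omega^2)$, which is not coupled to the restricted gap as you suggest; and the per-block gap bound is obtained from two-sided proposal-to-target density bounds on the restricted set (yielding $2/\theta_{\I_j}^4$), since ``$\Gap = (\operatorname{ess\,sup} w)^{-1}$'' is unavailable when the weight is unbounded on the full space.
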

Theorem~\ref{thm: mixing time bound} shows that the computational complexity of approximately sampling from $\pi(\Pa \mid \x,\y)$ under the block chain, when initialized according to a warm start, grows 
%
exponentially only in the size of the largest \textit{contiguous block} of observed mutation contexts, denoted $r_\star$. In practice $r_\star$ is often significantly smaller than the total number of mutations $r$ on which the importance sampler of \citet{Mathews3:2025} depends exponentially. As we will see, this mixing time bound  enables a similar complexity bound for the SMC algorithm (Algorithm~\ref{alg:smc}), leading to a significant reduction in computational complexity for approximating $p_{(T,\tQ)}(\y \mid \x)$ using SMC  compared to the importance sampling approach of \citet{Mathews3:2025}:
%
\begin{theorem}
\label{thm:SMC bound}
Suppose each site $i$ depends on its immediate $k/2$ neighbors to the left and right, and Assumption~\ref{assump: T assumption} holds. 
Then Algorithm~\ref{alg:smc}
approximates the marginal sequence likelihood $p_{(T,\tQ)}(\y \mid \x)$ with $\epsilon$-relative error in time
\begin{align}
\label{eqn:SMC main thm body}
\bigO\big( \exp(c \cdot r_\star) \big),
\end{align}
where $c \in (0,\infty)$ is a model-dependent constant.
%
\end{theorem}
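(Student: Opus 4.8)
The plan is to combine the warm-start mixing time bound of Theorem~\ref{thm: mixing time bound} with a finite-sample SMC error bound of the type established by \citet{Marion:2018b}, invoked in the form stated in Appendix~\ref{Appdx:SMC Results}. Recall the run time of Algorithm~\ref{alg:smc} is $NVs$, the product of the number of particles $N$, the number of tempering stages $V$, and the number of mutation steps $s$ applied per stage. The Marion-type bound guarantees $\epsilon$-relative error in the estimator $\hat{z}_V$ provided two conditions hold: (i) each mutation kernel $\tK_v$ is run for $s \ge \tau_v(\epsilon_1,\omega)$ steps from an $\omega$-warm start, for a suitable inner tolerance $\epsilon_1$; and (ii) the $\chi^2$-divergence $\chi^2(\pi_v \| \pi_{v-1})$ between consecutive bridging distributions is uniformly bounded by a constant $\kappa$. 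Given these, the bound yields an admissible choice of $N$ and certifies that the warm start propagates through resampling, so the argument reduces to controlling $s$, $V$, and $N$ and multiplying.

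First I would verify that Theorem~\ref{thm: mixing time bound} applies uniformly to every kernel $\tK_v$, $v = 1,\ldots,V$. Tempering only replaces the context multiplier $\phi$ by $\phi^{\beta_v}$ in the rates \eqref{eqn:CD Rates SMC}; since $\beta_v \in (0,1]$, each $\pi_v$ is again a neighbourhood-context DSM whose multiplier is bounded in terms of $\phi$, and crucially the connectivity graph of $\Smut$ -- and hence $r_\star$ -- is determined by the context structure $\{\C_i\}$ alone and is independent of $\beta_v$. Consequently Theorem~\ref{thm: mixing time bound} gives $\tau_v(\epsilon_1,\omega) = \bigO(\exp(c\,r_\star))$ uniformly in $v$, so it suffices to take $s = \bigO(\exp(c\,r_\star))$.

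Next I would design the inverse-temperature schedule $0 = \beta_0 < \cdots < \beta_V = 1$ so that condition (ii) holds with $V$ controlled. The key structural fact is that, because contexts of sites in distinct connected components of $\Smut$ do not overlap, both the unnormalized density $q_v$ and the reference measure factor over components, so that $1 + \chi^2(\pi_v \| \pi_{v-1}) = \prod_{c}\bigl(1 + \chi^2(\pi_v^{(c)} \| \pi_{v-1}^{(c)})\bigr)$, where each factor involves a component of size at most $r_\star$. I would then choose the spacing $\beta_v - \beta_{v-1}$ -- geometrically or adaptively -- small enough that this product stays below $1+\kappa$, bounding $V$ in the process, and use the resulting uniform $\chi^2$ bound to certify that resampling at each stage produces an $\omega$-warm start with $\omega$ a function of $\kappa$ only, which is exactly the input Theorem~\ref{thm: mixing time bound} requires.

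The main obstacle is the third step: bounding $\chi^2(\pi_v \| \pi_{v-1})$ when the density ratio $\pi_v/\pi_{v-1}$ is \emph{not} uniformly bounded. This is the same pathology that complicates Theorem~\ref{thm: mixing time bound}: an endpoint-conditioned path may contain arbitrarily many extra (mutation--reversion) jumps, so the ratio, which accumulates a factor $\phi^{\beta_v - \beta_{v-1}}$ at every jump, has unbounded support and a naive $\chi^2$ integral can diverge. I expect to resolve this exactly as in the mixing-time analysis: truncate to a high-probability subset of $\scrP$ on which the number of extra jumps is controlled, bound the residual tail using a moment-generating-function estimate for the mutation-count process, and optimize $\beta_v - \beta_{v-1}$ against this tail so that each step contributes a bounded $\chi^2$. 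With $s$, $V$, and $N$ in hand, the run time is $NVs$; since $V$ and $N$ are at most polynomial in the problem size under Assumption~\ref{assump: T assumption} while $s = \bigO(\exp(c\,r_\star))$, the product collapses to $\bigO(\exp(c'\,r_\star))$ after absorbing polynomial factors into the constant, giving the stated bound.
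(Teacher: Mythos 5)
Your overall architecture is the same as the paper's: invoke the finite-sample SMC bound of Theorem~\ref{thm:Marion SMC}, note that each tempered $\pi_v$ is again a $k$-neighborhood DSM with multiplier $\phi^{\beta_v}$ so that the warm mixing time bound (Lemma~\ref{lemma:Mixing Time}, hence Theorem~\ref{thm: mixing time bound}) applies uniformly in $v$ with $r_\star$ unchanged, bound the divergence between consecutive bridging distributions by truncating the jump count and controlling the tail with the MGF estimate (Lemma~\ref{lemma:DSM MGF Subset}), tune the spacing $\Db$ against that tail, and multiply $N$, $V$, $s$. This is precisely how the paper assembles Theorem~\ref{thm: SMC Complexity Bound} from Theorem~\ref{thm:chi square bound tempering} (which takes $\Db \leq c_1/\zeta$ to get $\max_v L^2(\pi_v,\pi_{v-1}) = \bigO(1)$, $V = \bigO(\zeta)$, $N=\bigO(V^3)$) and Lemma~\ref{lemma:Mixing Time}, and then specializes to an island partition to turn the exponent into $\bigO(r_\star)$.

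However, the ``key structural fact'' in your second paragraph is false: $1+\chi^2(\pi_v\|\pi_{v-1})$ does \emph{not} factor over the connected components of $\Smut$, because $\pi_v$ itself is not a product measure over those components. In a neighborhood DSM every site, mutated or not, is coupled to its neighbors, so dependence chains through the unmutated sites lying between components; moreover the weight $q_v/q_{v-1}$ accumulates a factor $\phi^{\Db}$ at \emph{every} jump anywhere in the sequence and an exponential correction involving the total exit rate, which sums over all $n$ sites. This is exactly why the paper's mixing-time analysis cannot treat $\pi$ as a product distribution and instead introduces the surrogate $\hat{\pi}$ of \eqref{eqn: hat pi definition}, obtained by replacing the rates at the division sites $\scrD$ with ISM rates; only after that modification does the measure factor over blocks, and the cost is paid through the comparison constants $\theta_{\scrD}$, $\theta_{\D_j}$. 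Your factorization identity skips this step entirely, so the $V$-bounding argument as you state it rests on a false premise. Fortunately the claim is also unnecessary: the route you sketch in your final paragraph (truncate $m(\Pa)$, Cauchy--Schwarz plus the MGF bound for the tail, spacing $\Db = \bigO(1/\zeta)$) is carried out \emph{globally}, with no component decomposition, in Theorem~\ref{thm:chi square bound tempering}, and it alone yields $V=\bigO(\zeta)$ and $N=\bigO(V^3)$. Deleting the factorization step and letting the global truncation argument control $V$ repairs the proof, after which it coincides with the paper's.
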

Theorem~\ref{thm:SMC bound} says that the SMC algorithm (Algorithm~\ref{alg:smc}) provides a significant improvement in computational complexity compared to the importance sampler studied in \cite{Mathews3:2025}. Indeed, we will see that SMC provides a fully polynomial randomized approximation scheme (FPRAS) for the worst-case CpG model problem used to establish the exponential \textit{lower} bound for the importance sampler in Theorem~2 of \cite{Mathews3:2025}, thus providing an \textit{exponential} speed-up in runtime.


Theorems~\ref{thm: mixing time bound} and~\ref{thm:SMC bound} apply to DSMs where the context $\C_i$ of each site $i$ is restricted to the $k$ contiguous nearest neighbors (Assumption~\ref{assumption:neighborhood context} in section~\ref{sec:island partitions} below).  DSMs with long-range interactions (e.g. Example 3 in section~\ref{sec:Models of DNA Evolution}) can violate this condition.  The following result generalizes Theorems~\ref{thm: mixing time bound} and~\ref{thm:SMC bound} to such ``non-neighboring" contexts:
\begin{theorem}
\label{thm:Nonlocal}
Let $\mathscr{I}$ be any partition of $\{1,\ldots,n\}$ and let
\begin{align*}
\I_{j,\text{e}} \defeq \{i \in \I_j :  \C_i \cap \I_j^c \neq \emptyset\}
\end{align*}
%
denote the set of \emph{edge sites} in $\I_j$. If $\mathscr{I}$ satisfies
%
$\max_j|\I_{j,e}| = \bigO(r)$
and $x_i = y_i$ for all $i \in \I_{j,e}$ (no observed mutations at edge sites),
%
then under Assumption~\ref{assump: T assumption}
\begin{enumerate}
\item the $\omega$-warm mixing time of the component-wise Metropolis-Hastings chain is upper bounded by
\begin{align*}
\tau\left(\epsilon ,\omega \right) = \bigO\big(\exp(c \cdot \min_{\mathscr{I}}r_\star(\mathscr{I}) )\big),
\end{align*}
for error tolerance $\epsilon \in (0,1)$ and model-dependent constant $c \in (0,\infty)$, and
\item Algorithm~\ref{alg:smc} approximates the marginal sequence likelihood $p_{(T,\tQ)}(\y \mid \x)$ with $\epsilon$-relative error in time
\begin{align}
\bigO(\exp(c' \cdot \min_{\mathscr{I}}r_\star(\mathscr{I}) )),
\end{align}
with $c' \in (0,\infty)$ a model-dependent constant.
\end{enumerate}
\end{theorem}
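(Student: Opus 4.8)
The plan is to reduce Theorem~\ref{thm:Nonlocal} to a block-by-block application of the machinery already developed for Theorems~\ref{thm: mixing time bound} and~\ref{thm:SMC bound}, with the general partition $\mathscr{I}$ taking the role played there by the connected components of $\Smut$. The essential observation is that the edge-site conditions isolate the blocks: because every $i \in \I_{j,\text{e}}$ satisfies $x_i = y_i$ and contexts leave $\I_j$ only through such sites, the conditional law $\pi(\Pa_{\I_j} \mid \Pa_{\I_{[-j]}}, \x, \y)$ left invariant by $\tK_{(j)}$ depends on the frozen paths $\Pa_{\I_{[-j]}}$ only through their (endpoint-matched) values at the $\mathcal{O}(r)$ edge sites. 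Throughout I write $r_\star(\mathscr{I})$ for the largest number of observed mutations that must be updated jointly within a single block, the natural generalization of $r_\star$ to which the definition collapses when $\mathscr{I}$ is the connected-component partition.

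First I would bound the within-block mixing time of each $\tK_{(j)}$ by reusing the approximate-spectral-gap argument of \citet{Atchade:2021} underlying Theorem~\ref{thm: mixing time bound}: restrict the block chain to the high-probability set on which the number of extra jumps inside $\I_j$ is controlled, and bound the restricted density ratio $\pi/\mu$ there. Decomposing this ratio into an interior factor --- governed by the at most $r_\star(\mathscr{I})$ observed mutations whose contexts lie wholly inside $\I_j$ --- and an edge factor coming from the context multipliers $\phi$ evaluated at the mutation-free edge sites, the interior factor reproduces the $\exp(c \cdot r_\star(\mathscr{I}))$ scaling exactly as in the connected-component case.

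The crux of the argument, and the step I expect to be the main obstacle, is controlling the edge factor without introducing a spurious $\exp(c \cdot r)$ term: although the $\mathcal{O}(r)$ edge sites carry no observed mutations, their extra (reversion) jumps still couple adjacent blocks through the context multipliers. I would bound the moment generating function of the extra-jump count aggregated over all edge sites of a block, mirroring the MGF argument that controls the unbounded density ratio in Theorem~\ref{thm: mixing time bound}. Under Assumption~\ref{assump: T assumption} the per-site exit rate integrated over $[0,T]$ is $\mathcal{O}(T) = \mathcal{O}(r/n)$, so the expected edge-jump count is $\mathcal{O}(r\,T) = \mathcal{O}(r^2/n) = \mathcal{O}(1)$ by the constraint $r = \mathcal{O}(n^{1/2})$; the associated MGF is then uniformly bounded and the edge factor contributes only an $\mathcal{O}(1)$ multiple. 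Combining the within-block gap with the uniform block-selection step of $\tK$ yields $\tau(\epsilon,\omega) = \mathcal{O}(\exp(c \cdot r_\star(\mathscr{I})))$, and minimizing over admissible partitions gives the $\min_{\mathscr{I}}$ in part~(1).

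For part~(2) I would feed this warm-start mixing time into the finite-sample SMC framework of \citet{Marion:2018b} precisely as in the proof of Theorem~\ref{thm:SMC bound}. It then remains to verify that the tempered sequence defined by \eqref{eqn:CD Rates SMC} and \eqref{eqn:Tempered sequence} has uniformly bounded $\chi^2$-divergence between consecutive stages $\pi_{v-1}$ and $\pi_v$; this follows from the same interior/edge decomposition, since the single-step weight $w_v = q_v/q_{v-1}$ depends only on $\phi^{\beta_v - \beta_{v-1}}$ and is therefore controlled on the high-probability set by the interior count $r_\star(\mathscr{I})$ together with the $\mathcal{O}(1)$ edge MGF bound. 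Choosing $N$, $V$, and $s$ so that the run time $NVs = \mathcal{O}(\exp(c' \cdot r_\star(\mathscr{I})))$ and again optimizing over partitions then delivers the claimed $\epsilon$-relative-error bound.
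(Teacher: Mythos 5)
Your proposal is correct and follows essentially the same route as the paper: both reduce Theorem~\ref{thm:Nonlocal} to the machinery of Theorems~\ref{thm: mixing time bound} and~\ref{thm:SMC bound} with the edge sites $\I_{j,\text{e}}$ playing the role of the division sites $\D_j$, and both hinge on bounding the MGF of the jump count at the $\bigO(r)$ mutation-free edge sites, where Assumption~\ref{assump: T assumption} gives $T \cdot \bigO(r) = \bigO(r^2/n) = \bigO(1)$ so the edge coupling contributes only a constant factor. The paper packages this as a generalized MGF lemma (Lemma~\ref{lemma:DSM MGF Subset Nonlocal}) whose correction factor $e^{Tq\abs{\calA_{\text{e}}}(\tgmax-\tgmin)}$ is $\bigO(1)$ by exactly your computation, after which the mixing-time bound (Lemma~\ref{lemma:Mixing Time}) and $L^2$ bound (Theorem~\ref{thm:chi square bound tempering}) carry over with $r_{\I_{j,\text{e}}}=0$.
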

Theorem~\ref{thm:Nonlocal} thus provides conditions under which the marginal likelihood $p_{(T,\tQ)}(\y \mid \x)$ \eqref{eqn:Transition Probability} can be efficiently approximated even in DSMs exhibiting long-range dependencies.

\section{Mixing Time Bound}
\label{sec: mixing time bound}
In this section we provide the proof of Theorem~\ref{thm: mixing time bound}; supporting results are given in Appendix~\ref{Appdx:SMC Results}. We first introduce notation and an overview of the proof, before stating key lemmas.

\subsection{Background and Notation}
Define the inner product $\langle f,g \rangle_{\pi} := \int f(x)g(x)\pi(dx)$. Our approach to bounding $\tau(\epsilon,\omega)$ (defined in \eqref{eqn: warm mixing time definition}) will frequently involve the \textit{spectral gap} of a Markov kernel $\tK$: 
\begin{align}
\label{eqn:SpecGap definition}
\Gap(\tK) := \inf_{\substack{f \in L^2(\pi)\\ \V_{\pi}[f] \neq 0}} \frac{\calE_{\tK}(f,f)}{\V_{\pi}[f]},
\end{align}
where $\calE_{\tK}(f,f)=\langle f,(I-\tK)f\rangle_{\pi}$ is the Dirichlet form and $\V_{\pi}[f] = \langle f,f \rangle_{\pi} - \langle f,1 \rangle_{\pi} $. The spectral gap characterizes the rate at which $\tK$ converges to $\pi$. For example, a standard argument (see e.g. \cite{Marion:2023}) gives
\begin{align}
\label{eqn:Warm vs Spec}
\tau(\epsilon,\omega) \leq\frac{\log(2\epsilon^{-1}) + \log(\omega-1)}{\Gap(\tK)}.
\end{align}
%
%
Moreover, motivated by the notion of the $s$-conductance \cite{Lovasz:1999}, \citet{Atchade:2021} 
showed that under a warm start it suffices to bound $\Gap(\tK_{\mid \scrP_0})$ for any $\scrP_0 \subset \scrP$ satisfying $\pi(\scrP_0) \geq 9/10$, where $\tK_{\mid \scrP_0}$ denotes the \textit{restriction of} $\tK$ to the subset $\scrP_0$:
\begin{align}
\label{eqn:K restriction}
\tK_{\mid \scrP_0}(\Pa, A) := \tK(\Pa,A) + 
\delta_A(\Pa)\tK(\Pa,\mathscr{P}^c_0) \quad \text{ for } A \subset \scrP_0 
\end{align}
%
(Note that $\tK_{\mid \scrP_0}$ is reversible with respect to $\pi_{\mid \scrP_0} = \pi \cdot \ind_{\scrP_0}/\pi(\scrP_0)$ when $\tK$ is reversible with respect to $\pi$.) For technical reasons, we will assume that $\tK$ is \textit{lazy} and remains in its current state with probability $1/2$ so that $\tK(\Pa, \Pa) \geq 1/2$; if necessary this can be ensured by taking $\tK'=\frac{1}{2}(I + \tK)$. A bound on $\Gap(\tK')$ implies a bound on $\Gap(\tK)$ as the two quantities differ only by a factor of $1/2$.
%
\begin{theorem}
\label{thm:Atchade Bound}(\citet{Atchade:2021})
Let $\epsilon \in (0,1)$ be fixed. Assume $\tK$ is reversible and lazy. If $\pi(\scrP^c_0) \leq \epsilon^2/ (20 \omega^2)$, then
\begin{align*}
\tau(\epsilon,\omega) \leq \frac{\log(2\epsilon^{-2}) + \log(\omega^2)}{\Gap(\tK_{\mid \scrP_0})}.
\end{align*}
%
\end{theorem}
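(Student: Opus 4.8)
The plan is to transfer the spectral gap of the restricted kernel $\tK_{\mid \scrP_0}$ into a warm-start mixing bound for the full kernel $\tK$ by comparing four objects: the full chain law $\eta \tK^s$, the restricted chain law $\eta_0 \tK_{\mid \scrP_0}^s$ (where $\eta_0 := \eta(\cdot \mid \scrP_0)$), and the two targets $\pi$ and $\pi_{\mid \scrP_0}$. The backbone is the triangle inequality
\[
\norm{\eta \tK^s - \pi}_{\text{\tiny{TV}}} \leq \norm{\eta \tK^s - \eta_0 \tK_{\mid \scrP_0}^s}_{\text{\tiny{TV}}} + \norm{\eta_0 \tK_{\mid \scrP_0}^s - \pi_{\mid \scrP_0}}_{\text{\tiny{TV}}} + \norm{\pi_{\mid \scrP_0} - \pi}_{\text{\tiny{TV}}},
\]
and the task is to make each of the three terms $\bigO(\epsilon)$ for $s$ equal to the claimed bound.

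First I would dispatch the last term, which is immediate: $\norm{\pi_{\mid \scrP_0} - \pi}_{\text{\tiny{TV}}} = \pi(\scrP_0^c) \leq \epsilon^2/(20\omega^2) \leq \epsilon$ by hypothesis. For the middle term I would invoke the standard $L^2$/spectral-gap contraction for the reversible chain $\tK_{\mid \scrP_0}$, writing $\lambda := \Gap(\tK_{\mid \scrP_0})$ for brevity. An $\omega$-warm $\eta$ induces $\eta_0$ with $d\eta_0/d\pi_{\mid \scrP_0} = \bigO(\omega)$, so $\sqrt{\chi^2(\eta_0 \,\|\, \pi_{\mid \scrP_0})} = \bigO(\omega)$, and the spectral decomposition of the self-adjoint operator $\tK_{\mid \scrP_0}$ on $L^2(\pi_{\mid \scrP_0})$ gives $\norm{\eta_0 \tK_{\mid \scrP_0}^s - \pi_{\mid \scrP_0}}_{\text{\tiny{TV}}} \leq \tfrac{1}{2}\,\bigO(\omega)\,(1-\lambda)^s$. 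Using $1-\lambda \leq e^{-\lambda}$ and taking $s \geq [\log(2\epsilon^{-2}) + \log\omega^2]/\lambda$ drives this below $\bigO(\epsilon)$; this is exactly the source of the claimed bound and mirrors \eqref{eqn:Warm vs Spec}.

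The remaining first term requires comparing the full and restricted dynamics. I would couple $\tK$ and $\tK_{\mid \scrP_0}$ so that they follow identical trajectories until the first time $\rho$ at which $\tK$ proposes a transition into $\scrP_0^c$: by the definition \eqref{eqn:K restriction}, inside $\scrP_0$ the two kernels agree on all moves remaining in $\scrP_0$ and differ only in that $\tK_{\mid \scrP_0}$ replaces an attempted exit with a self-loop. Hence the two laws agree on $\{\rho > s\}$, and the first term is bounded by the escape probability $\Prob_\eta(\rho \leq s)$ plus the initial conditioning mismatch $\eta(\scrP_0^c)$. Controlling $\Prob_\eta(\rho \leq s)$ is the crux: at stationarity the outward flux out of $\scrP_0$ is governed by $\pi(\scrP_0^c)$, and a warm start inflates it by at most a factor $\omega$, giving a per-step escape probability of order $\omega\,\pi(\scrP_0^c)$ and hence $\Prob_\eta(\rho \leq s) = \bigO(s\,\omega\,\pi(\scrP_0^c))$; substituting $\pi(\scrP_0^c) \leq \epsilon^2/(20\omega^2)$ together with the value of $s$ above keeps this $\bigO(\epsilon)$, with the universal constant $20$ chosen precisely to absorb the logarithmic factor carried by $s$.

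The main obstacle is exactly this escape-probability bound, because $\omega$-warmness is \emph{not} preserved along the trajectory under a general $\tK$, so one cannot simply assert $(\eta \tK^j)(\scrP_0^c) \leq \omega\,\pi(\scrP_0^c)$ step by step. The rigorous route, following \citet{Atchade:2021}, controls the accumulated escape mass through a conductance-type flow argument built on the reversibility of $\tK$, or equivalently by working directly with the restricted Dirichlet form and the $s$-conductance of \citet{Lovasz:1999}; reconciling the resulting constants with the specific logarithmic term and the factor $20$ appearing in the hypothesis is the technical heart of the statement.
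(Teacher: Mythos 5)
First, a point of comparison: the paper does not prove this statement at all --- it is imported verbatim from \citet{Atchade:2021} as a black-box tool --- so your proposal must stand on its own merits, and it does not. Your second and third terms are handled correctly: $\norm{\pi_{\mid \scrP_0} - \pi}_{\text{\tiny{TV}}} = \pi(\scrP_0^c)$, the conditioned start $\eta_0$ is $\tfrac{20}{19}\omega$-warm for $\pi_{\mid \scrP_0}$, and the restricted kernel inherits laziness (restriction only adds diagonal mass), so the $L^2$/spectral contraction at rate $(1-\Gap(\tK_{\mid\scrP_0}))^s$ is legitimate. The genuine gap is in the first term, and your own diagnosis of it is off in both directions. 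Contrary to your stated worry, warmness \emph{is} preserved along the full chain: $\eta \le \omega\pi$ as measures and $\pi\tK = \pi$ give $\eta\tK^j(B) \le \omega\,\pi\tK^j(B) = \omega\pi(B)$ for every $j$, and stationarity bounds the per-step exit flux $\int_{\scrP_0}\pi(d\Pa)\tK(\Pa,\scrP_0^c)$ by $\pi(\scrP_0^c)$; so your per-step escape estimate of order $\omega\,\pi(\scrP_0^c)$ is fine. What fails is the union bound over time. The escape term becomes $s\,\omega\,\pi(\scrP_0^c)$ with $s = [\log(2\epsilon^{-2})+\log(\omega^2)]/\Gap(\tK_{\mid\scrP_0})$, i.e. of order $\epsilon^2\log(2\epsilon^{-2}\omega^2)/\bigl(20\,\omega\,\Gap(\tK_{\mid\scrP_0})\bigr)$. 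The hypothesis $\pi(\scrP_0^c) \le \epsilon^2/(20\omega^2)$ carries no dependence on the restricted gap, so this quantity is unbounded as $\Gap(\tK_{\mid\scrP_0}) \to 0$; the constant $20$ can absorb a constant, not a factor of $1/\Gap(\tK_{\mid\scrP_0})$. Your argument therefore establishes the theorem only in the regime $\Gap(\tK_{\mid\scrP_0}) \gtrsim \epsilon\log(2\epsilon^{-2}\omega^2)/\omega$, not in general.

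The missing idea is exactly what makes \citet{Atchade:2021} (and the $s$-conductance analysis of \citet{Lovasz:1999} that it refines) nontrivial: one needs a bound of the form ``geometric decay plus a floor,'' where the floor depends on $\omega$ and $\pi(\scrP_0^c)$ but is \emph{independent of the number of steps}, so that escaped mass is never multiplied by $s$. Atchad\'e achieves this functional-analytically rather than pathwise: he introduces an approximate spectral gap --- roughly, the infimum of $\calE_{\tK}(f,f)/\V_{\pi}[f]$ restricted to test functions of bounded oscillation whose variance exceeds a threshold $\zeta$ --- iterates the resulting one-step variance inequality to get total-variation decay at rate $(1-\lambda_\zeta)$ down to a floor of order $\omega\sqrt{\zeta}$, and separately shows that $\lambda_\zeta$ is comparable to $\Gap(\tK_{\mid\scrP_0})$ once $\pi(\scrP_0^c)$ is small relative to $\zeta$; the calibration $\zeta \asymp \epsilon^2/\omega^2$ is what produces the $\epsilon^2/(20\omega^2)$ hypothesis. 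No union bound over trajectories appears anywhere. Since your proposal delegates precisely this step to ``following \citet{Atchade:2021},'' and the statement being proved \emph{is} Atchad\'e's theorem, the proposal is circular at its crux: the reductions it does contain are the routine parts, and the part it omits is the theorem.
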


We will refer to the block-update chain $\tK$ \eqref{eqn:MWG Main Body} defined in 
Section~\ref{sec: SMC Background} as a \textit{product chain} in the special case that
$\pi = \pi_1 \times \ldots \times \pi_B$ is a product distribution, with $\pi_j$ a distribution defined on $\scrP_{\I_j}$. Later we will use the following result regarding product chains:
\begin{theorem}(\citet{Diaconis:1996})
\label{thm: Product Chain SpecGap}
Let $\tK$ be a product chain. Then
\begin{align*}
\Gap(\tK) = \frac{1}{B} \min_j \Gap(\tK_{(j)}).
\end{align*}
\end{theorem}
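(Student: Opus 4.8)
The plan is to establish the two inequalities $\Gap(\tK) \geq \frac{1}{B}\min_j \Gap(\tK_{(j)})$ and $\Gap(\tK) \leq \frac{1}{B}\min_j \Gap(\tK_{(j)})$ separately, working throughout with the Dirichlet form rather than the operator spectrum. Writing $\tK = \frac{1}{B}\sum_{j=1}^B \tK^{(j)}$, where $\tK^{(j)}(\Pa,d\Pa') := \tK_{(j)}(\Pa_{\I_j},d\Pa'_{\I_j})\,\delta_{\Pa_{\I_{[-j]}}}(d\Pa'_{\I_{[-j]}})$ is the full-space operator that applies $\tK_{(j)}$ to block $\I_j$ and the identity to $\I_{[-j]}$, linearity of $I-\tK$ gives the additive decomposition $\calE_{\tK}(f,f) = \frac{1}{B}\sum_{j=1}^B \calE_{\tK^{(j)}}(f,f)$ for every $f\in L^2(\pi)$. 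Since $\pi=\pi_1\times\cdots\times\pi_B$ is a product and each $\tK^{(j)}$ touches only coordinate $j$ (and is therefore $\pi$-reversible), I would condition on the remaining coordinates $\Pa_{\I_{[-j]}}\sim \pi_{[-j]}:=\prod_{j'\neq j}\pi_{j'}$ and write each term as an expected single-block Dirichlet form, $\calE_{\tK^{(j)}}(f,f) = \E_{\pi_{[-j]}}\big[\calE_{\tK_{(j)}}(f(\cdot,\Pa_{\I_{[-j]}}),f(\cdot,\Pa_{\I_{[-j]}}))\big]$, the inner form taken with respect to $\pi_j$ in the free coordinate.

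For the lower bound, I apply the definition of $\Gap(\tK_{(j)})$ to the slice $g = f(\cdot,\Pa_{\I_{[-j]}})$ for each fixed value of the other coordinates, yielding $\calE_{\tK_{(j)}}(g,g)\geq \Gap(\tK_{(j)})\,\V_{\pi_j}[g]$; taking expectations and summing gives $\calE_{\tK}(f,f)\geq \frac{1}{B}\big(\min_j\Gap(\tK_{(j)})\big)\sum_{j=1}^B \E_{\pi_{[-j]}}\big[\V_{\pi_j}[f(\cdot,\Pa_{\I_{[-j]}})]\big]$. The crux — and the one place the product structure of $\pi$ is indispensable — is dominating the full variance by this sum of expected conditional variances. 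This is exactly the Efron--Stein (tensorization) inequality $\V_\pi[f]\leq \sum_{j=1}^B \E_{\pi_{[-j]}}[\V_{\pi_j}[f]]$, which I would invoke (or derive in a line from an ANOVA/martingale decomposition of $f$ along the blocks). Substituting yields $\calE_{\tK}(f,f)\geq \frac{1}{B}(\min_j\Gap(\tK_{(j)}))\,\V_\pi[f]$, and dividing by $\V_\pi[f]$ and taking the infimum over $f$ gives the lower bound.

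For the matching upper bound, I would exhibit a test function that saturates the Rayleigh quotient. Let $j^\star\in\argmin_j\Gap(\tK_{(j)})$ and let $g^\star\in L^2(\pi_{j^\star})$ be an extremizer for $\Gap(\tK_{(j^\star)})$ (an eigenfunction when the gap is attained, a near-optimal sequence otherwise). Lift it to $f(\Pa):=g^\star(\Pa_{\I_{j^\star}})$, depending on block $j^\star$ alone. Because $\pi$ is a product, $\V_\pi[f]=\V_{\pi_{j^\star}}[g^\star]$; moreover $\tK^{(j)}f=f$ for every $j\neq j^\star$ since $f$ is constant in those coordinates, so $\calE_{\tK^{(j)}}(f,f)=0$ there and only the $j^\star$ term survives. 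Hence $\calE_{\tK}(f,f)=\frac{1}{B}\calE_{\tK_{(j^\star)}}(g^\star,g^\star)$ and the quotient equals $\frac{1}{B}\Gap(\tK_{(j^\star)})=\frac{1}{B}\min_j\Gap(\tK_{(j)})$, giving $\Gap(\tK)\leq \frac{1}{B}\min_j\Gap(\tK_{(j)})$. Combining the two inequalities proves the equality.

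I expect the only genuine subtlety to be the Efron--Stein step in the lower bound: the additive decomposition of $\calE_{\tK}$ and the coordinate-slicing hold for any random-scan chain, but the passage from the sum of expected conditional variances back to $\V_\pi[f]$ relies essentially on independence of the blocks under $\pi$, and is precisely what pins the constant to $\frac{1}{B}\min_j$ rather than something weaker. A secondary, routine point is the case in which the infimum defining $\Gap(\tK_{(j^\star)})$ is not attained, handled by passing to an approximating sequence of test functions.
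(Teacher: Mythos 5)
Your proof is correct. Note, however, that the paper itself gives no proof of this statement to compare against: Theorem~\ref{thm: Product Chain SpecGap} is imported by citation from Diaconis and Saloff-Coste (1996), where it appears as a standard fact about random-scan product chains. Your argument is essentially the classical one for that result, and both halves are sound: the additive Dirichlet-form decomposition $\calE_{\tK}(f,f) = \frac{1}{B}\sum_j \calE_{\tK^{(j)}}(f,f)$ follows from linearity and $\pi$-invariance alone, the slicing identity $\calE_{\tK^{(j)}}(f,f) = \E_{\pi_{[-j]}}\bigl[\calE_{\tK_{(j)}}(f(\cdot,\Pa_{\I_{[-j]}}),f(\cdot,\Pa_{\I_{[-j]}}))\bigr]$ uses the product structure of $\pi$, and you correctly identify the Efron--Stein tensorization $\V_\pi[f]\leq \sum_j \E_{\pi_{[-j]}}[\V_{\pi_j}[f]]$ as the step where independence of the blocks is indispensable and where the constant $\frac{1}{B}\min_j$ is pinned down. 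The upper bound via functions lifted from a single block (with an approximating sequence when the infimum defining $\Gap(\tK_{(j^\star)})$ is not attained) is also the standard matching construction, so the equality, not merely an inequality, is established.
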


\subsubsection{Neighborhood Models and Island Partitions}
\label{sec:island partitions} 
We will establish bounds for the blocked Metropolis algorithm in the case of DSMs where the context of a site is given by the $k/2$-\textit{nearest neighbors} to the left and to the right of each site. We refer to this as a $k$-\textit{neighborhood}. For example, the $2$-neighborhood of a site is its immediate left and right neighbors. The CpG model \eqref{eqn:CpG Island Rates} of genome sequence evolution \cite{Hwang:2004,Jensen:2000} is an example of a $2$-neighborhood DSM, while the S5F model \cite{Yaari:2013} of somatic hypermutation in B cell receptors \cite{Wiehe:2018,Mathews:2023b,Li:2025} is a $4$-neighborhood DSM. Later these results will be extended to SMC algorithms using the blocked Metropolis chain as a mutation kernel (Algorithm~\ref{alg:smc}).
\begin{assump}
\label{assumption:neighborhood context}
The context at each site is contained in its $k$-neighborhood:
\begin{align*}
\tx_i = (x_{i-\frac{k}{2}},\ldots,x_{i-1},x_i,x_{i+1},\ldots,x_{i+\frac{k}{2}}) 
\end{align*}
\end{assump}
Under Assumption~\ref{assumption:neighborhood context}, each $\tx_i$ forms a set of contiguous sites. It will be natural to choose a  partition $\mathscr{I}$ of $\{1,\ldots,n\}$ in which the sites of each block are also contiguous. When no observed mutations lie among the sites at the beginning and ending of each contiguous block, we refer to such a partition as an \textit{island partition}. More formally, recall $\C_i \subset \{1,\ldots,n\}$ denotes the set of sites lying in the context of site $i$ and define an \textit{edge site} of a partition element $\I_j$ to be a site in $\I_j$ whose context overlaps a neighboring block. Let 
$\D_j$ be the set of edge sites for block $j$:
\begin{align}
\label{eqn: Division Def}
\D_j = \{i \in \I_j: (\C_i \cap \I_{j-1}) \cup (\C_i \cap \I_{j+1}) \neq \emptyset\} \qquad \text{ for } j=1,\ldots,B,
\end{align}
and let $\mathscr{D} = \{\D_1,\ldots,\D_{B}\}$. We formalize this choice of partition in the following assumption which will be used in stating our main results:
\begin{assump}
\label{assumption:island partition}
$\mathscr{I}$ is an \emph{island partition}, i.e., the sites in each of the blocks $\I_1,\ldots,\I_B$ are contiguous and $x_i = y_i$ for all $i \in \mathscr{D}$.
\end{assump}
Establishing our result involves bounding the number of jumps that occur at sites in $\calS^c$, and in particular the number of jumps $m(\Pa_{\D_j})$ among the division sites $\mathscr{D}$. Bounding $m(\Pa_{\D_j})$ will also require bounding the number of jumps at the boundaries of the division sites. Let $\partial \D_j = (\cup_{i \in \D_j} \C_i \setminus \D_j) \cap \I_j$ denote the division boundary for block $j$, and $\partial \mathscr{D} =  \cup_{j=1}^B \partial \D_j$ the set of all such boundary sites. An example is shown in Figure~\ref{fig: DandC}.
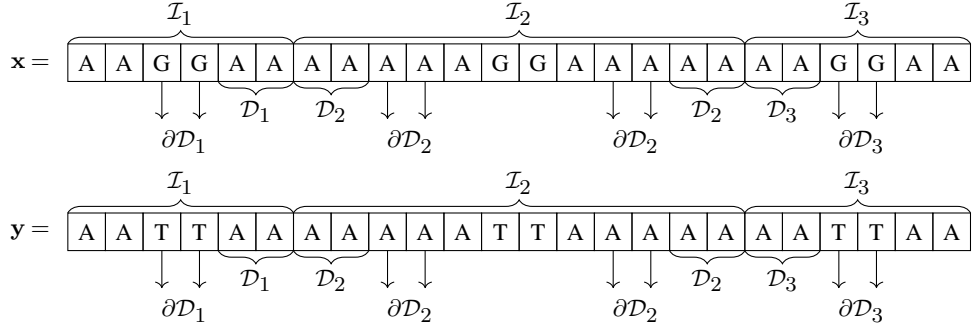
\begin{figure}
\centering
\begin{tikzpicture}[scale=0.5, baseline=(current bounding box.center)]
\node at (-1,0.5) {$\x =$};
\foreach \x/\y in {
  0/$\tA$,1/$\tA$,2/$\tG$,3/$\tG$,4/$\tA$,5/$\tA$,6/$\tA$,7/$\tA$,
  8/$\tA$,9/$\tA$,10/$\tA$,11/$\tG$,12/$\tG$,13/$\tA$,14/$\tA$,15/$\tA$,
  16/$\tA$,17/$\tA$,18/$\tA$,19/$\tA$,20/$\tG$,21/$\tG$,22/$\tA$,23/$\tA$}{
    \draw (\x,0) rectangle (\x+1,1);
    \node at (\x+0.5,0.5) {\y};
}
\draw[decorate,decoration={brace,amplitude=5pt}] (0,1) -- (6,1)  node[midway,yshift=0.4cm] {$\I_1$};
\draw[decorate,decoration={brace,amplitude=5pt}] (6,1) -- (18,1) node[midway,yshift=0.4cm] {$\I_2$};
\draw[decorate,decoration={brace,amplitude=5pt}] (18,1) -- (24,1) node[midway,yshift=0.4cm] {$\I_3$};
\draw[decorate,decoration={brace,amplitude=5pt,mirror}] (4,0)  -- (6,0)  node[midway,yshift=-0.4cm] {$\D_1$};
\draw[decorate,decoration={brace,amplitude=5pt,mirror}] (6,0)  -- (8,0)  node[midway,yshift=-0.4cm] {$\D_2$};
\draw[decorate,decoration={brace,amplitude=5pt,mirror}] (16,0) -- (18,0) node[midway,yshift=-0.4cm] {$\D_2$};
\draw[decorate,decoration={brace,amplitude=5pt,mirror}] (18,0) -- (20,0) node[midway,yshift=-0.4cm] {$\D_3$};
\foreach \x in {2.5,3.5}{ \draw[<-] (\x,-1) -- (\x,0); }
\node at (3.1,-1.6) {$\partial \D_1$};

\foreach \x in {8.5,9.5}{ \draw[<-] (\x,-1) -- (\x,0); }
\node at (9.1,-1.6) {$\partial \D_2$};

\foreach \x in {14.5,15.5}{ \draw[<-] (\x,-1) -- (\x,0); }
\node at (15.1,-1.6) {$\partial \D_2$};

\foreach \x in {20.5,21.5}{ \draw[<-] (\x,-1) -- (\x,0); }
\node at (21.1,-1.6) {$\partial \D_3$};

\def\ybot{-4.5}           
\def\ytop{-3.5}           
\def\ymid{-4.0}           
\def\arrowtail{-5.5}       
\def\arrowhead{-4.5}       
\def\labely{-6.1}          

\node at (-1,\ymid) {$\y =$};
\foreach \x/\y in {
  0/$\tA$,1/$\tA$,2/$\tT$,3/$\tT$,4/$\tA$,5/$\tA$,6/$\tA$,7/$\tA$,
  8/$\tA$,9/$\tA$,10/$\tA$,11/$\tT$,12/$\tT$,13/$\tA$,14/$\tA$,15/$\tA$,
  16/$\tA$,17/$\tA$,18/$\tA$,19/$\tA$,20/$\tT$,21/$\tT$,22/$\tA$,23/$\tA$}{
    \draw (\x,\ybot) rectangle (\x+1,\ytop);
    \node at (\x+0.5,\ymid) {\y};
}
\draw[decorate,decoration={brace,amplitude=5pt}] (0,\ytop)  -- (6,\ytop)  node[midway,yshift=0.4cm] {$\I_1$};
\draw[decorate,decoration={brace,amplitude=5pt}] (6,\ytop)  -- (18,\ytop) node[midway,yshift=0.4cm] {$\I_2$};
\draw[decorate,decoration={brace,amplitude=5pt}] (18,\ytop) -- (24,\ytop) node[midway,yshift=0.4cm] {$\I_3$};
\draw[decorate,decoration={brace,amplitude=5pt,mirror}] (4,\ybot)  -- (6,\ybot)  node[midway,yshift=-0.4cm] {$\D_1$};
\draw[decorate,decoration={brace,amplitude=5pt,mirror}] (6,\ybot)  -- (8,\ybot)  node[midway,yshift=-0.4cm] {$\D_2$};
\draw[decorate,decoration={brace,amplitude=5pt,mirror}] (16,\ybot) -- (18,\ybot) node[midway,yshift=-0.4cm] {$\D_2$};
\draw[decorate,decoration={brace,amplitude=5pt,mirror}] (18,\ybot) -- (20,\ybot) node[midway,yshift=-0.4cm] {$\D_3$};
\foreach \x in {2.5,3.5}{ \draw[<-] (\x,\arrowtail) -- (\x,\arrowhead); }
\node at (3.1,\labely) {$\partial \D_1$};

\foreach \x in {8.5,9.5}{ \draw[<-] (\x,\arrowtail) -- (\x,\arrowhead); }
\node at (9.1,\labely) {$\partial \D_2$};

\foreach \x in {14.5,15.5}{ \draw[<-] (\x,\arrowtail) -- (\x,\arrowhead); }
\node at (15.1,\labely) {$\partial \D_2$};

\foreach \x in {20.5,21.5}{ \draw[<-] (\x,\arrowtail) -- (\x,\arrowhead); }
\node at (21.1,\labely) {$\partial \D_3$};
\end{tikzpicture}
 
\caption{Example of an island partition with $B = 3$ blocks showing $\mathscr{D}$ and $\partial \mathscr{D}$ for $k=4$.}
\label{fig: DandC}
\end{figure}

\subsubsection{Proof Overview}
In light of Theorem~\ref{thm:Atchade Bound}, in order to upper bound the mixing time of $\tK$ it suffices to obtain a lower bound on $\Gap(\tK_{\mid \scrP_0})$ for some high probability set $\scrP_0 \subset \scrP.$
We choose $\scrP_0$ to be a subset of paths where the number of jumps at each site is uniformly bounded.  We can then apply the following lemma, which is straightforward to verify. Recall that $\mathscr{I} = \{\I_1,\ldots,\I_{B}\}$ denotes a partition of $\{1,\ldots,n\}$ into $B$ blocks. 
\begin{lemma}
\label{lemma:product restriction}
Let $\scrP_0 = \cap^B_{j=1} \scrP_{0,j}$ where $\scrP_{0,j} \subset \scrP_{\I_j}$.  Then
\begin{align*}
\tK_{\mid \scrP_0}(\Pa, d\Pa^{\prime}) = \frac{1}{B}\sum^B_{i=1} \tK_{(j) \mid  \scrP_{0,j}}(\Pa_{\I_j}, d\Pa_{\I_j}^{\prime}) \delta_{\Pa_{\I_{[-j]} }}(d\Pa^{\prime}_{\I_{[-j]} }).
\end{align*}
%
\end{lemma}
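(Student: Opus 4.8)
The plan is to verify the identity directly from the definitions \eqref{eqn:K restriction} and \eqref{eqn:MWG Main Body}, exploiting the product structure of both the path space and the set $\scrP_0$. Since each path factors as $\Pa \leftrightarrow (\Pa_{\I_1},\ldots,\Pa_{\I_B})$, the space decomposes as $\scrP = \scrP_{\I_1}\times\cdots\times\scrP_{\I_B}$ and, by hypothesis, $\scrP_0 = \scrP_{0,1}\times\cdots\times\scrP_{0,B}$ with $\scrP_{0,j}\subset\scrP_{\I_j}$. Because a (sub-)probability kernel is determined by its values on measurable rectangles $A = A_1\times\cdots\times A_B$ (these form a generating $\pi$-system, so agreement there extends to all measurable $A\subset\scrP_0$ by a Dynkin/monotone-class argument), it suffices to check that the two sides agree on such $A$ for a fixed starting path $\Pa\in\scrP_0$.

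The single genuine observation is that the block kernel $\tK$ updates exactly one block per step while holding all others fixed through the point masses $\delta_{\Pa_{\I_{[-j]}}}$. Hence, starting from $\Pa\in\scrP_0$ (so $\Pa_{\I_i}\in\scrP_{0,i}$ for every $i$), a block-$j$ proposal produces a path $\Pa'$ whose coordinates $\Pa'_{\I_i}=\Pa_{\I_i}\in\scrP_{0,i}$ are unchanged for $i\neq j$; consequently $\Pa'\in\scrP_0$ if and only if the updated coordinate satisfies $\Pa'_{\I_j}\in\scrP_{0,j}$. Integrating \eqref{eqn:MWG Main Body} against $\ind_{\scrP_0^c}$ and using this equivalence collapses the escape probability of the full chain into a sum of per-block escape probabilities,
\[
\tK(\Pa,\scrP_0^c) = \tfrac{1}{B}\sum_{j=1}^B \tK_{(j)}(\Pa_{\I_j},\scrP_{0,j}^c),
\]
which is the key identity making the restriction operation commute with the block structure.

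It then remains to expand both sides on a rectangle $A = A_1\times\cdots\times A_B$ with $A_j\subset\scrP_{0,j}$ and match terms. On the right-hand side I would substitute $\tK_{(j)\mid\scrP_{0,j}}(\Pa_{\I_j},A_j) = \tK_{(j)}(\Pa_{\I_j},A_j) + \ind_{A_j}(\Pa_{\I_j})\,\tK_{(j)}(\Pa_{\I_j},\scrP_{0,j}^c)$ from \eqref{eqn:K restriction} and factor the product of point masses $\prod_{i\neq j}\ind_{A_i}(\Pa_{\I_i})$. The $\tK_{(j)}(\Pa_{\I_j},A_j)$ pieces reassemble into $\tK(\Pa,A)$, while the $\ind_{A_j}(\Pa_{\I_j})$ factors combine with the remaining indicators into $\prod_i\ind_{A_i}(\Pa_{\I_i}) = \ind_A(\Pa)$, so the second pieces reassemble into $\ind_A(\Pa)\,\tfrac{1}{B}\sum_j\tK_{(j)}(\Pa_{\I_j},\scrP_{0,j}^c) = \delta_A(\Pa)\,\tK(\Pa,\scrP_0^c)$, which is precisely the holding term appearing in the definition of $\tK_{\mid\scrP_0}$. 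This reproduces \eqref{eqn:K restriction} and completes the verification.

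The argument is essentially bookkeeping and carries no analytic difficulty. The only point requiring care is the escape-probability identity above, which relies crucially on both $\scrP_0$ factoring over the blocks of $\mathscr{I}$ and the block chain changing a single block at a time: were $\scrP_0$ not a product over $\mathscr{I}$, the event $\{\Pa'\notin\scrP_0\}$ would depend jointly on all coordinates and the per-block decomposition would fail.
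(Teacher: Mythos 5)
Your proof is correct, and it matches the approach the paper intends: the paper states Lemma~\ref{lemma:product restriction} without proof (calling it ``straightforward to verify''), and your direct verification from \eqref{eqn:K restriction} and \eqref{eqn:MWG Main Body} is exactly that bookkeeping, with the right key point identified --- namely that for $\Pa \in \scrP_0$ a single-block update escapes $\scrP_0$ precisely when the updated block escapes $\scrP_{0,j}$, giving $\tK(\Pa,\scrP_0^c) = \tfrac{1}{B}\sum_{j}\tK_{(j)}(\Pa_{\I_j},\scrP_{0,j}^c)$, after which the rectangle/$\pi$-system argument closes the gap rigorously.
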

%
%
\noindent We will lower bound the spectral gap of $\tK_{\mid \scrP_0}(\Pa, d\Pa^{\prime}) $ by the spectral gap of
a carefully-chosen product chain which admits more direct analysis. This product chain is obtained by omitting all context-dependence \textit{between} partition blocks, by setting $\phi \equiv 1$ for rates defined at all edge sites $\scrD = \{\D_1,\ldots,\D_{B}\}$ (see \eqref{eqn: Division Def} and Figure~\ref{fig: DandC}). In particular, let 
\begin{equation}
\label{eqn: hat rates}
\hat{\gamma}_i(b; \tx_i)  =
\begin{cases}
\gamma_i(b; x_i) & \text{ for }  i \in \scrD \\
\tg_i(b; \tx_i) & \text{ for }  i \in \{1,\ldots,n\} \setminus \scrD
\end{cases}
\end{equation}
Let $\hat{\Q}$ be the $a^n \times a^n$ rate matrix with rates \eqref{eqn: hat rates} and define $\hat{\pi}$ for the corresponding DSM by
\begin{align*}
\hat{\pi}(\Pa) := \hat{\pi}(\Pa \mid \x, \y) \propto \tP_{(T,\hat{\Q})}(\y, \Pa \mid \x).
\end{align*}
Note that sites in $\I_j \setminus \D_j$ evolve under the same rates in both $\pi$ and $\hat{\pi}$, but
$\Pa_{\I_1},\ldots,\Pa_{\I_{B}}$ are independent under $\hat{\pi}$ since $\hat{\gamma}_i = \gamma_i$ for $i \in \scrD$. Therefore $\hat{\pi}$ is a product measure
\begin{align}
\label{eqn: hat pi definition}
\hat{\pi}(\Pa) = \hat{\pi}_1(\Pa_{\I_1}) \times \ldots \times \hat{\pi}_B(\Pa_{\I_B}),
\end{align}
where $\hat{\pi}_j(\Pa_{\I_j}) :=   \hat{\pi}_j(\Pa_{\I_j} \mid \x_{\I_j}, \y_{\I_j})  \propto \tP_{(T,\hat{\Q})}(\y_{\I_j}, \Pa_{\I_j} \mid \x_{\I_j})$ is an endpoint conditioned model with rates \eqref{eqn: hat rates}. Consider the blockwise Metropolis-Hastings chain defined in Section~\ref{sec: SMC Background} with invariant distribution $\pi$. Let $\hat{\tK}$ be the $\hat{\pi}$-invariant modified chain which uses the same (blockwise) independent-site proposal distribution, but utilizes $\hat{\pi}$ in place of $\pi$ in the acceptance \eqref{Eqn:MetropAcceptProb}: 
\begin{align}
\hat{\tK}(\Pa, d\Pa^{\prime}) := \frac{1}{B}\sum^{B}_{i=1} \hat{\tK}_{(j)}(\Pa_{\I_j}, d\Pa_{\I_j}^{\prime}) \delta_{\Pa_{\I_{[-j]}}}(d\Pa^{\prime}_{\I_{[-j]}}).
\label{Eqn:KHatDef}
\end{align}
Then for any $\scrP_0 = \cap^{B}_{j=1} \scrP_{0,j}$ with $\scrP_{0,j} \subset \scrP_{\I_j}$,
$\hat{\tK}_{\mid \scrP_0}$ is a product chain by \eqref{eqn: hat pi definition} and Lemma~\ref{lemma:product restriction}, and hence Theorem~\ref{thm: Product Chain SpecGap} gives
\begin{align*}
\Gap(\hat{\tK}_{\mid \scrP_0}) = \frac{1}{B} \min_j \Gap(\hat{\tK}_{(j) \mid \scrP_{0,j}}).
\end{align*}
We will choose $\scrP_0 $ such that all of $\pi(\Pa)$, $\hat{\pi}(\Pa)$, and $\mu(\Pa)$ are uniformly bounded, enabling us to define a function $\delta(\x,\y,\tQ,\Q,T)$ such that, by a simple comparison argument \cite{Diaconis:1993}
%
\begin{align}
\label{eqn:Mixing Time Overview One}
\Gap(\tK_{\mid \scrP_0}) \geq \delta\,   \Gap(\hat{\tK}_{\mid \scrP_0}) =  \frac{\delta}{B}  \min_j \Gap(\hat{\tK}_{(j) \mid \scrP_{0,j}}).
\end{align}
Finally, since each $\hat{\tK}_{(j) \mid \scrP_{0,j}}$ is a \textit{Metropolized independence sampler}  with uniformly bounded proposal and target densities, a bound on $\min_j \Gap(\hat{\tK}_{(j) \mid \scrP_{0,j}})$ follows easily. In what follows we adopt the following notation for brevity
\begin{align*}
\tK_{\mid \scrP_0} = \tK_0 \quad\qquad \tK_{(j) \mid \scrP_{0,j}} = \tK_{0,j} \quad\qquad \pi_{\mid \scrP_0} = \pi_0 \quad\qquad \pi_{j \mid \scrP_{0,j}} = \pi_{0,j} 
\end{align*}
and use analogous notation for $\hat{\tK}$, $\hat{\tK}_{(j) \mid \scrP_{0,j}}$, $\hat{\pi}_{\mid \scrP_0}$, and $\hat{\pi}_{j \mid \scrP_{0,j}}$ as well (e.g. $\hat{\pi}_{\mid \scrP_0} = \hat{\pi}_0$).

\subsection{Main Results for Mixing Times}
The bounds below are stated in terms of the following quantities. For a subset of sites $\calA \subset \{1,\ldots,n\}$ we let $\x_{\calA}$ and $\y_{\calA}$ denote the corresponding subsequences and define
\begin{align}
\label{eqn: zeta definition}
n_{\calA} := \abs{\calA} \quad\quad r_{\calA} := \text{d}_{\text{H}}(\x_{\calA}, \y_{\calA}) \quad\quad \zeta_{\calA} := r_{\calA} + r_{\calA}T + (n_{\calA} - r_{\calA}) T^2.
\end{align}
%
In addition, let $\zeta \defeq \zeta_{\{1,\ldots,n\}}$ denote the special case that $\A = \{1,\ldots,n\}$.  Key to our analysis is the following bound on the MGF of $m(\Pa_{\A})$, the number of jumps in sequence path $\Pa$ which occur at sites in the subset $\A$, under the DSM $\pi$; the proof and explicit constants are deferred to Appendix~\ref{Appdx:SMC Results}.
\begin{lemma}
\label{lemma:DSM MGF Subset Main Body}
Let $\calA \subset \{1,\ldots,n\}$ be a set of contiguous site indices and $\theta \in (0,\infty)$. If $\pi$ is a \emph{$k$-neighborhood} DSM  (Assumption~\ref{assumption:neighborhood context}), then there exists a model-dependent constant $\lambda(\theta)$ such that
\begin{equation}
\label{Eqn:DSMMGFBound}
\E_{\pi}[\theta^{m(\Pa_{\calA})}] \leq  e^{\lambda(\theta)\zeta_{\calA}}.
\end{equation}
\end{lemma}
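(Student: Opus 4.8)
The plan is to bound the moment generating function $\E_{\pi}[\theta^{m(\Pa_{\calA})}]$ by relating the number of jumps under the DSM $\pi$ to the number of jumps under the ISM $\mu$, for which the per-site jump counts are independent and Poisson-like, and then exploit the structure of the tempered context-multiplier $\phi$ to control the discrepancy. Concretely, I would first write the density $\pi(\Pa)$ explicitly via \eqref{eqn:Path Density}, and observe that because $\phi: \scrA^{k+1} \to (0,\infty)$ takes values in a compact range over the finite alphabet $\scrA^{k+1}$, there are model-dependent constants $0 < \phi_{\min} \le \phi_{\max} < \infty$ with $\phi_{\min} \le \phi(b;\tx_i) \le \phi_{\max}$ for all arguments. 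This bounds each context-dependent rate $\tg_i(b;\tx_i) = \gamma_i(b;x_i)\phi(b;\tx_i)$ above and below by a constant multiple of the context-independent rate $\gamma_i(b;x_i)$, which is the key to comparing $\pi$ against the product-structured ISM $\mu$.

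The main step is to translate the jump-count MGF under $\pi$ into a calculation under the endpoint-conditioned ISM $\mu = \mu_1 \times \cdots \times \mu_n$, where $m(\Pa_{\calA}) = \sum_{i \in \calA} m_i$ is a sum of independent per-site jump counts. For each site $i$, the endpoint-conditioned path measure $\mu_i(\cdot \mid x_i, y_i)$ is a CTMC bridge, and the count $m_i$ of jumps along a bridge of length $T$ has an MGF that one can bound explicitly: a site with $x_i = y_i$ (i.e. $i \notin \Smut$) contributes $r=0$ required jumps and only even numbers of extra jumps, while a site with $x_i \ne y_i$ contributes at least one required jump. Summing the logarithms of the per-site MGFs and separating the contributions of the $r_{\calA}$ mutated sites from the $(n_{\calA} - r_{\calA})$ unmutated sites produces precisely the three terms $r_{\calA} + r_{\calA}T + (n_{\calA}-r_{\calA})T^2$ appearing in $\zeta_{\calA}$: the leading $r_{\calA}$ counts the required jumps, $r_{\calA}T$ the extra odd-parity excursions at mutated sites, and $(n_{\calA}-r_{\calA})T^2$ the leading-order (quadratic in $T$) probability of a back-and-forth excursion at an unmutated site. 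I would verify this by a short computation with the $2\times 2$ or $a\times a$ matrix exponential bridge, absorbing alphabet-size and rate-magnitude dependence into $\lambda(\theta)$.

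The remaining issue is the passage from $\mu$ to $\pi$, which is delicate precisely because the density ratio $\pi/\mu$ is \emph{not} uniformly bounded (this is exactly the difficulty flagged in the introduction). The cleanest route is to avoid the ratio entirely: rather than importance-reweighting from $\mu$, I would directly bound $\E_{\pi}[\theta^{m(\Pa_{\calA})}]$ by observing that replacing every context multiplier $\phi$ by its extreme values $\phi_{\min}$ or $\phi_{\max}$ yields a \emph{new} ISM-like rate on $\calA$ (since a constant multiplier just rescales $\gamma_i$), and that the endpoint-conditioned jump-count MGF is monotone in these rescaled rates in a controllable way. Because $\calA$ is contiguous and $\pi$ is a $k$-neighborhood model, only sites within $k/2$ of the boundary of $\calA$ feel context from outside $\calA$, so the marginal law of $\Pa_{\calA}$ under $\pi$ is dominated by a product of bounded-rate bridges up to boundary corrections that are themselves $\bigO(1)$ per endpoint and hence absorbable into $\lambda(\theta)$. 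The main obstacle I anticipate is making this domination rigorous: one must argue that conditioning on the paths outside $\calA$ (which determine the contexts seen at the boundary sites of $\calA$) only perturbs the effective rates inside $\calA$ by bounded multiplicative factors, uniformly over the conditioning, so that the conditional MGF is bounded by $e^{\lambda(\theta)\zeta_{\calA}}$ pointwise and the bound survives taking the outer expectation. This is where the finiteness of $\phi_{\min}, \phi_{\max}$ and the contiguity of $\calA$ do the essential work, and care is needed to ensure the boundary terms do not accumulate with $|\calA|$.
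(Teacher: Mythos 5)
Your high-level strategy is in the right family — compare $\pi$ on $\calA$ to a context-free (ISM-like) reference model, use per-site bridge jump-count bounds to produce the three terms of $\zeta_{\calA}$, and use contiguity plus the $k$-neighborhood assumption to confine outside influence to $\bigO(k)$ boundary sites. But the proof has a genuine gap exactly where you yourself flag the ``main obstacle'': the transfer of the MGF bound from the reference model back to $\pi$. You propose to do this via monotonicity of the endpoint-conditioned jump-count MGF in the rates, together with a stochastic domination of the conditional law of $\Pa_{\calA}$ given the outside paths. Neither claim is established, and the monotonicity claim is doubtful as stated: replacing $\phi$ by $\phi_{\min}$ or $\phi_{\max}$ is \emph{not} a uniform rescaling of the generator (the multiplier $\phi(b;\tx_i)$ varies with the target base and with the time-varying context), and endpoint-conditioned (bridge) jump-count laws do not respond monotonically to such non-uniform rate changes — conditioning on the endpoints couples the effect of raising some rates and lowering others in a way that defeats a naive domination argument. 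Since everything else in your outline (the per-site bridge computation, the $\bigO(1)$ boundary corrections) is comparatively routine, the unproved domination step is not a detail: it is the lemma.

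The paper closes this step without any monotonicity or domination claim. It introduces a modified model $\pi'$ with rates \eqref{eqn: pi hat rates}: original DSM rates outside $\calA$, unit-rate standard symmetric (ISM) rates inside $\calA$, and proves a \emph{two-sided pathwise bound} on the likelihood ratio $\tP_{(T,\tQ)}(\y,\Pa\mid\x)/\tP_{(T,\tQ')}(\y,\Pa\mid\x)$ whose dependence on the path is only through $m(\Pa_{\calA})$, up to constants; the edge sites contribute a factor $e^{Tq\abs{\calA_{\text{e}}}(\tgmax-\tgmin)}$ with $\abs{\calA_{\text{e}}}\le k$ by contiguity (this uses the telescoping bound of Lemma~\ref{lemma: Uniform Bounds on Delta} for the interior exit-rate sums). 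Writing $\E_{\pi}[\theta^{m(\Pa_{\calA})}]$ as a ratio of unnormalized integrals and applying the upper ratio bound in the numerator and the lower one in the denominator yields
\begin{equation*}
\E_{\pi}[\theta^{m(\Pa_{\calA})}] \;\leq\; e^{Tq\abs{\calA_{\text{e}}}(\tgmax-\tgmin)}\,
\frac{\E_{\pi'}\bigl[(\theta\,\tgmax e^{T\tilde{\delta}})^{m(\Pa_{\calA})}\bigr]}{\E_{\pi'}\bigl[(\tgmin e^{-T\tilde{\delta}})^{m(\Pa_{\calA})}\bigr]},
\end{equation*}
where under $\pi'$ the path $\Pa_{\calA}$ is exactly a symmetric ISM bridge. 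The numerator is then handled by the ISM MGF bound (Lemma~\ref{lemma:expectation and pr bound}), but the denominator requires a \emph{lower} bound, obtained by restricting to minimal paths with $m(\Pa_{\calA})=r_{\calA}$ and invoking the lower bound on $\Prob_{\mu}(m(\Pa_{\calA})=r_{\calA})$ from the same lemma. Your proposal contains no analogue of this denominator step — the need to lower-bound the probability of exactly $r_{\calA}$ jumps under the reference bridge — yet it is an essential half of the argument and a source of the $\zeta_{\calA}$ structure in the final exponent. So: right skeleton, but the change-of-measure mechanism and the minimal-path lower bound both need to be supplied before this is a proof.
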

Lemma~\ref{lemma:DSM MGF Subset Main Body} provides a bound on the MGF $M(t)$ of $m(\Pa)$ (taking $\theta = e$). Applying the Chernoff bound $\Pr[m(\Pa_{\calA}) > M] 
\leq  \E[e^{m(\Pa_{\calA})}]e^{-M}
$ gives that the number $m(\Pa_{\calA})$ of jumps in $\calA$ is bounded above by $\lambda(e)\zeta_{\calA} = \bigO(\zeta_{\calA})$
with high probability, decaying as $e^{-(M-\lambda(e)\zeta_{\calA})}$, while applying Jensen's inequality $\E_{\pi}[e^{m(\Pa_{\calA})}] \geq e^{\E_{\pi}[m(\Pa_{\calA})]}$ gives a bound on the expected number of jumps in $\calA$:
\begin{align}
\E_{\pi}[m(\Pa_{\calA})] \leq \bigO(\zeta_{\calA}).
\end{align}
Therefore, the expected number of jumps in $\calA$ is bounded by $\bigO(r_{\calA} + r_{\calA}T + (n_{\calA} - r_{\calA}) T^2)$, which under Assumption~\ref{assump: T assumption} is  $\bigO(1)$ when $r_{\calA} = 0$.
 This fact will play an important role in bounding the number of extra jumps among the edge sites $\scrD$ of a partition defined in Section~\ref{sec: SMC}, recalling that $r_{\scrD} = 0$ for \textit{island partitions} (Assumption~\ref{assumption:island partition}).

We will use Lemma~\ref{lemma:DSM MGF Subset Main Body} to find a high probability subset of paths $\scrP_0 \subset \scrP$ on which $m(\Pa)$ is uniformly bounded. The set $\scrP_0$ will play an important role in bounding the warm mixing time of $\tK$ to establish Theorem~\ref{thm: mixing time bound}.   

\begin{lemma}
\label{lemma:DSM Conc Bounds}
Let $\epsilon \in (0,1)$ and for $\calA \subset \{1,\ldots,n\}$ define the event
\begin{align}
\scrP_0(\epsilon,\calA) = \{\Pa: m(\Pa_{\calA}) \leq M_{\epsilon}(\calA) \},
\label{Eqn:ScriptQDef}
\end{align}
where $M_{\epsilon}(\calA) = M(\zeta_{\calA},B,\epsilon) := \zeta_{\calA} \lambda(e) + \log(3B/\epsilon)$, and $\lambda(\cdot)$ and $\zeta_{\calA}$ are defined in Lemma~\ref{lemma:DSM MGF Subset} (section~\ref{Sec:MGFbound}). Let $\scrP_{0,j}(\epsilon) = \scrP_{0,j} := \scrP_0(\epsilon,\I_j)$, 
\begin{align}
\overline{\scrP}_{0,j}(\epsilon) := \scrP_0(\epsilon,\I_j) \cap \scrP_0(\epsilon,\D_j) \cap \scrP_0(\epsilon,\partial \D_j),
\label{Eqn:P0jDef}
\end{align}
and $\scrP_0(\epsilon) = \scrP_0 := \cap^B_{j=1} \overline{\scrP}_{0,j}(\epsilon)$.
Then $\pi(\scrP_0) \geq 1 - \epsilon$.
\end{lemma}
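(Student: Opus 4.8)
The plan is to prove the high-probability bound by applying a Chernoff argument to each of the level-set events that define $\scrP_0$, then closing with a union bound. First I would unfold the definitions: since $\scrP_0 = \cap_{j=1}^B \overline{\scrP}_{0,j}(\epsilon)$ and each $\overline{\scrP}_{0,j}(\epsilon) = \scrP_0(\epsilon,\I_j) \cap \scrP_0(\epsilon,\D_j) \cap \scrP_0(\epsilon,\partial\D_j)$, the set $\scrP_0$ is an intersection of exactly $3B$ events of the form $\scrP_0(\epsilon,\calA) = \{\Pa : m(\Pa_{\calA}) \le M_\epsilon(\calA)\}$, where $\calA$ ranges over the $3B$ index sets $\I_j, \D_j, \partial\D_j$ for $j = 1,\ldots,B$. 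Passing to complements, $\scrP_0^c = \cup_{j=1}^B \big( \scrP_0(\epsilon,\I_j)^c \cup \scrP_0(\epsilon,\D_j)^c \cup \scrP_0(\epsilon,\partial\D_j)^c \big)$ is a union of $3B$ tail events $\{m(\Pa_\calA) > M_\epsilon(\calA)\}$, so it suffices to bound each of these by $\epsilon/(3B)$.

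For fixed $\calA$ I would apply Lemma~\ref{lemma:DSM MGF Subset Main Body} at $\theta = e$, which gives $\E_\pi[e^{m(\Pa_\calA)}] \le e^{\lambda(e)\zeta_\calA}$, and combine it with Markov's inequality applied to the nonnegative random variable $e^{m(\Pa_\calA)}$ to obtain the Chernoff-type bound $\Pr[m(\Pa_\calA) > M] \le e^{-M}\,\E_\pi[e^{m(\Pa_\calA)}] \le e^{\lambda(e)\zeta_\calA - M}$. The threshold $M_\epsilon(\calA) = \zeta_\calA \lambda(e) + \log(3B/\epsilon)$ appearing in the definition of $\scrP_0(\epsilon,\calA)$ is chosen precisely so that substituting $M = M_\epsilon(\calA)$ cancels the $\lambda(e)\zeta_\calA$ term, leaving $\Pr[m(\Pa_\calA) > M_\epsilon(\calA)] \le e^{-\log(3B/\epsilon)} = \epsilon/(3B)$. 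Summing over the $3B$ sets via the union bound then yields $\pi(\scrP_0^c) \le 3B \cdot \epsilon/(3B) = \epsilon$, i.e. $\pi(\scrP_0) \ge 1 - \epsilon$, as claimed.

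The only delicate point, and the main (if modest) obstacle, is that Lemma~\ref{lemma:DSM MGF Subset Main Body} is stated for a \emph{contiguous} index set $\calA$, whereas under an island partition the edge set $\D_j$ and its boundary $\partial\D_j$ each typically split into a left and a right segment and so are not contiguous. To invoke the MGF bound for these sets I would rely on the additivity of $\zeta_\calA = r_\calA + r_\calA T + (n_\calA - r_\calA)T^2$ over disjoint index sets: because the quantities $r_\calA$ and $n_\calA$ are additive, the argument establishing Lemma~\ref{lemma:DSM MGF Subset Main Body} extends from a single contiguous run to a finite union of contiguous runs with the same constant $\lambda(e)$ and the correspondingly summed $\zeta_\calA$. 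This keeps the exact cancellation in the threshold, and hence the per-event bound $\epsilon/(3B)$, intact, so the union bound goes through unchanged.
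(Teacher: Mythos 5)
Your proof is correct and is essentially the paper's own argument: a Chernoff bound obtained by applying Markov's inequality to $e^{m(\Pa_{\calA})}$ together with the MGF bound, with the threshold $M_{\epsilon}(\calA) = \zeta_{\calA}\lambda(e) + \log(3B/\epsilon)$ chosen to make each tail event have probability at most $\epsilon/(3B)$, followed by a union bound over the $3B$ events. Your worry about non-contiguity of $\D_j$ and $\partial\D_j$ is already resolved by the lemma the statement cites, namely the appendix version (Lemma~\ref{lemma:DSM MGF Subset}), which holds for \emph{arbitrary} index sets at the cost of an edge-site factor $e^{Tq\abs{\calA_{\text{e}}}(\tgmax-\tgmin)}$ that is $\bigO(1)$ under Assumption~\ref{assump: T assumption}; your additivity argument is an alternative route to the same conclusion.
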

\begin{proof}
The result follows immediately by application of the Chernoff bound (using the MGF bound of Lemma~\ref{lemma:DSM MGF Subset}) to obtain tail inequalities for each set:
\begin{align*}
\Prob_{\pi}(m(\Pa_{\D_j}) > \lambda(e)\zeta_{\D_j} + \log(3B\epsilon^{-1})) \leq \E_{\pi}[e^{m(\Pa_{\D_j})}] e^{-M_{\epsilon}(\D_j)} \leq \frac{\epsilon}{3B}
\end{align*}
and taking a union bound over all $3B$ events.
\end{proof}
\noindent We will now establish the spectral gap bound for the restricted kernel given in \eqref{eqn:Mixing Time Overview One}. To do so, We will use the following bound on the density ratio with respect to the invariant distribution of the product chain $\hat{\tK}$ defined above:
\begin{lemma}
\label{lemma: hat weight uniform bound}
Define the importance weight
\begin{align}
\label{eqn:Hat Weight}
\hat{w}(\Pa) := \tP_{(T,\tQ)}(\y, \Pa \mid \x)/ \tP_{(T,\hat{\Q})}(\y, \Pa \mid \x).
\end{align}
Let $\xi$ be a probability measure defined on the set of paths $\scrP$. Then the following bounds hold with probability one under $\xi$ under Assumptions~\ref{assumption:neighborhood context} and \ref{assumption:island partition}
\begin{align}
\label{eqn:Hat Weight Bound}
\Prob_{\xi}\left(\phi^{m(\Pa_{\scrD})}_{\min} e^{-T (m(\Pa_{\scrD}) + m(\Pa_{\partial \scrD})) (\tilde{\delta} + \delta) + c } \leq \hat{w}(\Pa) \leq \phi_{\max}^{m(\Pa_{\scrD})} e^{T (m(\Pa_{\scrD}) + m(\Pa_{\partial \scrD})) (\tilde{\delta} + \delta) + c }  \right) = 1,
\end{align}
for constant $c = -T(\tg(\cdot; \y) - \hat{\gamma}(\cdot; \y))$. In particular, letting
\begin{align}
\theta_{\scrD} := \exp\left(\log(\phi_{\star})M_{\epsilon}(\scrD)  + 2T(M_{\epsilon}(\scrD) + M_{\epsilon}(\partial \scrD))(\tilde{\delta } + \delta)  \right),
\label{Eqn:ThetaDef}
\end{align}
where again $M_{\epsilon}(\calA) = \zeta_{\calA} \lambda(e) + \log(3B/\epsilon)$ and recall $\zeta_{\calA} := r_{\calA} + r_{\calA}T + (n_{\calA} - r_{\calA}) T^2$ for $\calA \subset \{1,\ldots,n\}$ \eqref{eqn: zeta definition},
gives the following uniform bound for $\scrP_0$ given in Lemma~\ref{lemma:DSM Conc Bounds}
\begin{align}
\label{eqn:density ratio bounds}
\Prob_{\xi}\left(\theta^{-1}_{\scrD} \leq \frac{\pi_0(\Pa)}{\hat{\pi}_0(\Pa)} \leq \theta_{\scrD} \right) = 1.
\end{align}
\end{lemma}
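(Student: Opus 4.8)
The plan is to directly compute the importance weight $\hat w(\Pa)$ from the path density formula \eqref{eqn:Path Density}, isolate the factors where $\pi$ (rates $\tg_i$) and $\hat\pi$ (rates $\hat\gamma_i$) differ, and then bound each such factor using Assumptions~\ref{assumption:neighborhood context} and \ref{assumption:island partition}. Since $\hat\gamma_i = \tg_i$ for every $i \notin \scrD$, the ratio of the jump-rate products $\prod_j \tg_{s^j}(b^j;\tx^{j-1}_{s^j}) / \prod_j \hat\gamma_{s^j}(b^j;\tx^{j-1}_{s^j})$ telescopes down to a product over only those jumps occurring at edge sites $\scrD$. Each such factor is $\phi(b^j;\tx^{j-1}_{s^j})$, which lies in $[\phi_{\min},\phi_{\max}]$, yielding the $\phi_{\min}^{m(\Pa_\scrD)}$ and $\phi_{\max}^{m(\Pa_\scrD)}$ bounds. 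The number of such factors is exactly $m(\Pa_\scrD)$, the count of jumps at division sites.

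Next I would handle the exponential (waiting-time) factors $\me^{-\Delta^t(j-1)\tg(\cdot;\x^{j-1})}$ versus $\me^{-\Delta^t(j-1)\hat\gamma(\cdot;\x^{j-1})}$. The total exit rates $\tg(\cdot;\x^{j-1})$ and $\hat\gamma(\cdot;\x^{j-1})$ differ only through the context multipliers $\phi^{\beta}$ at sites whose \emph{context} touches $\scrD$; crucially, by Assumption~\ref{assumption:neighborhood context} a site's rate is affected only if that site or one of its $k$-neighbors lies in $\scrD$, i.e.\ only sites in $\scrD \cup \partial\scrD$ contribute a discrepancy. Writing $\tilde\delta$ and $\delta$ for the (model-dependent) maximal per-site log-rate differences at $\scrD$ and $\partial\scrD$ respectively, each inter-arrival interval contributes a multiplicative correction bounded by $\me^{\pm \Delta^t(j-1)(\tilde\delta+\delta)\cdot(\text{number of active discrepancy sites})}$. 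Summing the $\Delta^t(j-1)$ over the path telescopes to at most $T$ per affected site, and the affected intervals are precisely those bookended by jumps at $\scrD$ or $\partial\scrD$, giving the factor $\me^{\pm T(m(\Pa_\scrD)+m(\Pa_{\partial\scrD}))(\tilde\delta+\delta)}$. The final terminal factor $\me^{-\Delta^t(m)\tg(\cdot;\y)}$ versus $\me^{-\Delta^t(m)\hat\gamma(\cdot;\y)}$ is deterministic given the endpoints (the sequence is $\y$), contributing the fixed constant $c = -T(\tg(\cdot;\y)-\hat\gamma(\cdot;\y))$, which establishes \eqref{eqn:Hat Weight Bound}.

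To pass from the pointwise bound \eqref{eqn:Hat Weight Bound} on $\hat w$ to the density-ratio bound \eqref{eqn:density ratio bounds}, I would observe that $\pi_0(\Pa)/\hat\pi_0(\Pa) = (\hat z/z)\,\hat w(\Pa)$ where $\hat z, z$ are the normalizing constants of $\hat\pi,\pi$ restricted to $\scrP_0$; since both $\pi_0$ and $\hat\pi_0$ are probability measures on $\scrP_0$, the ratio of densities must attain values both $\le 1$ and $\ge 1$, so the normalizing-constant ratio $\hat z/z$ is itself sandwiched between $\inf \hat w$ and $\sup \hat w$. Therefore the density ratio is bounded by $(\sup_{\scrP_0}\hat w)/(\inf_{\scrP_0}\hat w)$, and on $\scrP_0$ Lemma~\ref{lemma:DSM Conc Bounds} enforces $m(\Pa_\scrD)\le M_\epsilon(\scrD)$ and $m(\Pa_{\partial\scrD})\le M_\epsilon(\partial\scrD)$, so the constant $c$ cancels between numerator and denominator and each surviving factor is bounded by its value at these worst-case jump counts, yielding exactly $\theta_\scrD$ as defined in \eqref{Eqn:ThetaDef} (with $\phi_\star$ absorbing $\phi_{\max}/\phi_{\min}$).

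The main obstacle I anticipate is the bookkeeping in the waiting-time factors: correctly identifying which inter-arrival intervals are affected by a rate discrepancy and verifying that the total affected time is controlled by $T(m(\Pa_\scrD)+m(\Pa_{\partial\scrD}))$ rather than by the uncontrolled total number of jumps $m(\Pa)$. This requires carefully using the neighborhood structure (Assumption~\ref{assumption:neighborhood context}) to confirm that a jump at a non-edge site never changes the exit-rate \emph{discrepancy} $\tg(\cdot;\x^{j-1})-\hat\gamma(\cdot;\x^{j-1})$, so that only jumps at or adjacent to $\scrD$ can alter the accumulated correction — this is what prevents the bound from depending on the unbounded extra-jump count and is the crux that makes the whole argument go through.
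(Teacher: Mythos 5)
Your skeleton matches the paper's proof in its two main moves: the jump-rate ratio telescopes to a product of $\phi$-factors over the $m(\Pa_{\scrD})$ jumps at division sites, and your normalizing-constant argument for passing from \eqref{eqn:Hat Weight Bound} to \eqref{eqn:density ratio bounds} (both restricted measures integrate to one, so the constant ratio is trapped between $1/\sup_{\scrP_0}\hat{w}$ and $1/\inf_{\scrP_0}\hat{w}$, whence the ratio bound $\theta_{\scrD}$ with $c$ cancelling) is correct and in fact more explicit than what the paper writes. However, your accounting of the waiting-time exponent has a genuine gap. Write $D(\x^l) := \tg(\cdot;\x^l) - \hat{\gamma}(\cdot;\x^l) = \sum_{i\in\scrD}\bigl(\tg_i(\cdot;\tx^l_i) - \gamma_i(\cdot;x^l_i)\bigr)$ for the exit-rate discrepancy in force during the $l$-th holding interval. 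Two of your claims are false as stated: (i) the terminal factor contributes $-\Delta^t(m)D(\y) = -(T-t^m)D(\y)$, not $c = -T D(\y)$, since $\Delta^t(m) = T - t^m$; and (ii) the ``affected intervals'' are \emph{not} precisely those bookended by jumps at $\scrD\cup\partial\scrD$: the discrepancy $D(\x^l)$ is nonzero during essentially every interval (every site of $\scrD$ whose context multiplier differs from one contributes), and the deviation $D(\x^l)-D(\y)$ persists unchanged through arbitrarily many intervals bookended by non-edge jumps. If you literally count only intervals adjacent to edge jumps you undercount the time-integrated discrepancy; if instead you bound $|D(\x^l)|$ per interval you get a bound proportional to $|\scrD|$ rather than to the jump counts. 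Either way the displayed inequality \eqref{eqn:Hat Weight Bound} does not follow from the steps you describe, even though the formula you assert at the end is the right one.

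The missing step --- and the paper's actual mechanism --- is a summation by parts in time. With $t^0 = 0$ and $\sum_{l}\Delta^t(l) = T$ one has the identity
\begin{align*}
-\sum_{l=0}^{m}\Delta^t(l)\,D(\x^l) \;=\; -T\,D(\y) \;+\; \sum_{l=1}^{m} t^l\,\Gamma^l_{\scrD}(\Pa),
\qquad \Gamma^l_{\scrD}(\Pa) := D(\x^l) - D(\x^{l-1}),
\end{align*}
which is exactly the paper's representation $\hat{w}(\Pa) = \bigl[\prod_{\{l:\,s^l\in\scrD\}}\phi(b_{s^l};\tx^{l-1}_{s^l})\bigr]\me^{\tilde{\psi}_{\scrD}+c}$ with $c = -T\bigl(\tg(\cdot;\y)-\hat{\gamma}(\cdot;\y)\bigr)$ and $\tilde{\psi}_{\scrD} = \sum_l t^l\Gamma^l_{\scrD}$. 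Only at this point does your (correctly identified) crux enter: $\Gamma^l_{\scrD} = 0$ unless $s^l\in\scrD\cup\partial\scrD$, each nonzero increment satisfies $|\Gamma^l_{\scrD}|\le \tilde{\delta}+\delta$, and $t^l\le T$, so $|\tilde{\psi}_{\scrD}|\le T\bigl(m(\Pa_{\scrD})+m(\Pa_{\partial\scrD})\bigr)(\tilde{\delta}+\delta)$. (An equivalent fix avoiding Abel summation: bound the deviation from the terminal baseline, $|D(\x^l)-D(\y)|\le(\tilde{\delta}+\delta)\bigl(m(\Pa_{\scrD})+m(\Pa_{\partial\scrD})\bigr)$ uniformly in $l$, then integrate over $[0,T]$; this also produces the constant $c$ and the jump-count dependence.) Without one of these two steps, neither the appearance of $c$ nor the dependence on $m(\Pa_{\scrD})+m(\Pa_{\partial\scrD})$ rather than on $|\scrD|$ or $m(\Pa)$ is justified.
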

\begin{proof}
Recall that the substitution rates under $\hat{\pi}$ given in \eqref{eqn: hat rates} are identical to those under $\pi$ except at sites in $\scrD$, which follow the rates of the ISM $\mu$. From the path density \eqref{eqn:Path Density}, we see that
\begin{align}
\hat{w}(\Pa) = \left[\prod_{ \{l: s^l \in \scrD \}} \phi(b_{s^l } ; \tx^{l-1}_{s^l} )  \right] \me^{\tilde{\psi}_{\scrD}(\s(\Pa),\bb(\Pa)) + c},
\label{Eqn:what}
\end{align}
where the term in the exponent corresponding to the difference between the exit rates under $\pi$ and $\hat{\pi}$ for sites $i \in \scrD$ is denoted by $\tilde{\psi}_{\scrD}(\s(\Pa),\bb(\Pa)) := \sum^{m(\Pa)}_{l=1}t^l \Gamma^l_{\scrD}(\Pa)$ where
\begin{align*}
\Gamma^l_{\scrD}(\Pa) &:= \sum_{i \in \scrD} \left(\tg_i(\cdot; \tx^l_i) - \gamma_i(\cdot; x^l_i) - \tg_i(\cdot; \tx^{l-1}_i) + \gamma_i(\cdot; x^{l-1}_i)\right).
\end{align*}
For the bracketed term in \eqref{Eqn:what}, note that for any path $\Pa$ we have $|\{l: s^l \in \scrD \}| = \sum_{i \in \scrD} m(\Pa_i) = m(\Pa_{\scrD})$ and so
\begin{align*}
\Prob_{\xi}\left(\phi_{\min}^{m(\Pa_{\scrD})} \leq \prod_{ \{l: s^l \in \scrD \}} \phi(b_{s^l } ; \tx^{l-1}_{s^l} )  \leq \phi_{\max}^{m(\Pa_{\scrD})}\right)  = 1.
\end{align*}
Turning to the exponential term in \eqref{Eqn:what}, observe that $\Gamma^l_{\scrD}(\Pa)$ is non-zero only when $s^l \in \scrD \cup \partial \scrD$. When $s^l \in \scrD \cup \partial \scrD$, the context of at most $k+1$ sites change (the $k$ sites lying in the context of site $s^l$, and $s^l$ itself). An argument identical to the one given in the proof of Lemma~\ref{lemma: Uniform Bounds on Delta} yields:
\begin{align*}
\Prob_{\xi}\left(|\tilde{\psi}_{\scrD}(\s(\Pa),\bb(\Pa))| \leq T (m(\Pa_{\scrD}) + m(\Pa_{\partial \scrD}) ) (\tilde{\delta} + \delta)\right) =1.
\end{align*}
The first stated bound \eqref{eqn:Hat Weight Bound} follows. The second statement \eqref{eqn:density ratio bounds} follows from the first and the uniform bounds on $m(\Pa_{\scrD})$ and $m(\Pa_{\partial \scrD})$ used in proving Lemma~\ref{lemma:DSM Conc Bounds}, which hold for $\Pa \in \scrP_0$.
\end{proof}
With the bounds on $\pi_0/\hat{\pi}_0$ from Lemma~\ref{lemma: hat weight uniform bound} in place, we are now in a position to lower bound the spectral gap of $\tK_0$ by the spectral gap of $\hat{\tK}_0$. This is done in Lemma~\ref{lemma:Spec Gaps 1} below. Later we will obtain an explicit lower bound on the spectral gap of $\hat{\tK}_0$ itself (Lemma~\ref{lemma:Spec Gaps 2}).
First, we state the following result which will be used in Lemma~\ref{lemma:Spec Gaps 1}.
%
%
%
\begin{lemma}
\label{lemma:spec gap lemma}
Recall that the spectral gap of a $\pi$-invariant Markov kernel $\tK$ is defined by
\begin{align}
\label{eqn:spec gap def recall}
\Gap(\tK) := 
\inf_{\substack{f \in L^2(\pi)\\ \V_{\pi}(f) \neq 0}} \frac{\calE_{\tK}(f,f)}{\V_{\pi}(f)}
= 
\inf_{\substack{f \in L^2(\pi)\\ \V_{\pi}[f] \neq 0}} \frac{\int \int \pi(dx)\tK(x,dy)(f(x)-f(y))^2}{\int \int \pi(dx)\pi(dy)(f(x)-f(y))^2}.
\end{align}
Let $\hat{K}$ be a $\hat{\pi}$-invariant Markov kernel, with $\hat{\pi}(x)$ and $\pi(x)$  densities 
defined on a common state space $\X$ and with respect to a common dominating measure $\rho$, i.e., $\hat{\pi}(A) = \int_A\hat{\pi}(x)\rho(dx)$ and $\pi(A) = \int_A\pi(x) \rho(dx)$ for $A \subset \X$. Suppose further that
\begin{enumerate}
\item \label{eqn:spec gap lemma 1} The ratio of each density with respect to $\rho$ is bounded: $a_0 \leq \pi(x)/\hat{\pi}(x) \leq a_1$ for all $x \in \X$ for some $a_0,a_1 \in \mathbb{R}^+$ with $0 < a_0 \leq a_1$.
\item \label{eqn:spec gap lemma 2} There exists $a_2 > 0$ such that for all $f \in L^2(\pi)$
\begin{align*}
\int_A\tK(x,dy)(f(x)-f(y))^2 \; \geq \; a_2\int_A \hat{\tK}(x,dy)(f(x)-f(y))^2 \quad \text{ for all } x \in \X \text{ and } A \subset \X.
\end{align*}
%
\end{enumerate}
Then $\Gap(\tK) \geq \frac{a_0a_2}{a_1} \Gap(\hat{\tK})$.
\end{lemma}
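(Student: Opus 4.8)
The plan is to work directly with the Rayleigh-quotient characterization of the spectral gap recalled in \eqref{eqn:spec gap def recall}, bounding the Dirichlet-form numerator from below and the variance denominator from above, each by its $\hat{\pi}$-analogue, uniformly over all admissible test functions. Writing $\mathcal{N}_{\tK}(f) := \int\int \pi(dx)\tK(x,dy)(f(x)-f(y))^2$ and $\mathcal{D}_{\pi}(f) := \int\int \pi(dx)\pi(dy)(f(x)-f(y))^2$, so that $\Gap(\tK) = \inf_f \mathcal{N}_{\tK}(f)/\mathcal{D}_{\pi}(f)$, I will show $\mathcal{N}_{\tK}(f) \geq a_0 a_2\,\mathcal{N}_{\hat{\tK}}(f)$ and $\mathcal{D}_{\pi}(f) \leq a_1\,\mathcal{D}_{\hat{\pi}}(f)$ for every $f$, from which the claim follows by dividing and taking the infimum.

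For the numerator I would first apply hypothesis~\ref{eqn:spec gap lemma 2} with $A=\X$ pointwise in the outer variable $x$: the inner integral $\int \tK(x,dy)(f(x)-f(y))^2$ is bounded below by $a_2\int \hat{\tK}(x,dy)(f(x)-f(y))^2$, so integrating against $\pi(dx)$ gives $\mathcal{N}_{\tK}(f) \geq a_2 \int \pi(dx)\int \hat{\tK}(x,dy)(f(x)-f(y))^2$. Since the remaining integrand is nonnegative, hypothesis~\ref{eqn:spec gap lemma 1} in the form $\pi(dx) \geq a_0\,\hat{\pi}(dx)$ (valid as both have densities with respect to $\rho$) then yields $\mathcal{N}_{\tK}(f) \geq a_0 a_2\,\mathcal{N}_{\hat{\tK}}(f)$.

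The main obstacle is the denominator: naively bounding \emph{both} copies of $\pi$ by $a_1\hat{\pi}$ would produce a factor $a_1^2$, whereas the claimed constant is only $a_1$. To recover the sharper factor I would use the identity $\mathcal{D}_{\pi}(f) = 2\V_{\pi}[f]$ together with the infimum characterization $\V_{\pi}[f] = \inf_{c\in\mathbb{R}} \int (f(x)-c)^2\,\pi(dx)$. For each fixed $c$, hypothesis~\ref{eqn:spec gap lemma 1} gives $\int (f-c)^2\,d\pi \leq a_1 \int (f-c)^2\,d\hat{\pi}$; since this holds for every $c$, taking the infimum over $c$ on both sides preserves the inequality and yields $\V_{\pi}[f] \leq a_1\,\V_{\hat{\pi}}[f]$, i.e. $\mathcal{D}_{\pi}(f) \leq a_1\,\mathcal{D}_{\hat{\pi}}(f)$.

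Combining the two bounds gives, for every $f$ with $\V_{\pi}[f]\neq 0$,
\[
\frac{\mathcal{N}_{\tK}(f)}{\mathcal{D}_{\pi}(f)} \;\geq\; \frac{a_0 a_2}{a_1}\cdot\frac{\mathcal{N}_{\hat{\tK}}(f)}{\mathcal{D}_{\hat{\pi}}(f)} \;\geq\; \frac{a_0 a_2}{a_1}\,\Gap(\hat{\tK}),
\]
and taking the infimum over $f$ establishes $\Gap(\tK) \geq \frac{a_0 a_2}{a_1}\Gap(\hat{\tK})$. One small point to verify is that hypothesis~\ref{eqn:spec gap lemma 1} makes $\pi$ and $\hat{\pi}$ mutually absolutely continuous with bounded density ratio, so that $L^2(\pi)=L^2(\hat{\pi})$ and $\V_{\pi}[f]\neq 0 \iff \V_{\hat{\pi}}[f]\neq 0$; hence the two variational problems range over the same admissible class and no test functions are lost when passing to $\Gap(\hat{\tK})$. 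Apart from the variance step, everything here is a routine comparison of Dirichlet forms.
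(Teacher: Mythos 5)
Your proof is correct and follows essentially the same route as the paper's: bound the Dirichlet form below by $a_0 a_2\,\calE_{\hat{\tK}}(f,f)$, bound the variance above by $a_1 \V_{\hat{\pi}}[f]$, and take the infimum over the common class of admissible test functions. In fact your treatment of the variance step—using $\V_{\pi}[f] = \inf_{c}\int (f-c)^2\,d\pi$ to avoid the naive $a_1^2$ factor—supplies exactly the detail that the paper compresses into the phrase ``by assumption $\V_{\pi}[f] \leq a_1 \V_{\hat{\pi}}[f]$,'' so your write-up is a faithful (and slightly more complete) version of the same argument.
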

\begin{proof}
Note that $\V_{\pi}[f] \neq 0 \iff \V_{\hat{\pi}}[f] \neq 0$ and $f \in L^2(\pi) \iff f \in L^2(\hat{\pi})$ since $a_0 < \pi(x)/\hat{\pi}(x) < a_1$ uniformly for all $x \in \X$.  By assumption $\V_{\pi}[f] \leq a_1 \V_{\hat{\pi}}[f]$ and
\begin{align*}
\calE_{\tK}(f,f) \geq a_0a_2 \calE_{\hat{\tK}}(f,f).
\end{align*}
Taking the infimum over all non-constant $f \in L^2(\pi)$ yields the stated bound.
\end{proof}
We now lower bound $\Gap(\tK_0)$ by  $\min_j  \Gap(\hat{\tK}_{0,j})$; later we will obtain a lower bound on $\min_j  \Gap(\hat{\tK}_{0,j})$ as well. The constants in the following result involve the ratio $\phi_{\star} = \phi_{\max}/\phi_{\min}$ of the maximum and minimum context-dependent rates as well as $\lambda(e)$ with $\lambda(\cdot)$ defined in Lemma~\ref{lemma:DSM MGF Subset}. Recall from Lemma~\ref{lemma:DSM MGF Subset} that  $\lambda(e)$ is a constant which satisfies the following bound on the MGF of the number of jumps $m(\Pa)$ under $\pi$:
\begin{align*}
\E_{\pi}[e^{m(\Pa_{\calA})}] \leq e^{\lambda(e) \zeta_{\calA}  } \quad \text{ with } \zeta_{\calA} = r_{\calA} + r_{\calA}T + T^2(n_{\calA}-r_{\calA}) \text{ and } \calA \subset \{1,\ldots,n\}.
\end{align*}
Recall also that by definition, the number of total jumps $m(\Pa)$ for any $\Pa \in \scrP_0$ is no more than $\lambda(e) (\zeta_{\I_j} + \zeta_{\scrD} + \zeta_{\partial \scrD}) + 3\log(3B\epsilon^{-1})$, where $\I_j$, $\scrD$, and $\partial \scrD$ denote the $j$th block of sites, the set of division sites, and the boundary of the division sites, respectively, as defined in Section~\ref{sec:island partitions}.
This property of $\scrP_0$ will enable us to obtain uniform bounds on the density ratio $\pi(\Pa)/\hat{\pi}(\Pa)$, thus satisfying the conditions of Lemma~\ref{lemma:spec gap lemma}. We can then appeal to Theorem~\ref{thm: Product Chain SpecGap} to establish the following result.
\begin{restatable}{lemma}{SpecGapsOne}
\label{lemma:Spec Gaps 1}
Let $\theta_{\scrD}$ be defined as in \eqref{Eqn:ThetaDef} and
\begin{align}
\label{Eqn:ThetaDef D}
\theta_{\D_j} &:= \exp\left(\log(\phi_{\star})M_{\epsilon}(\D_j)  + 2T(M_{\epsilon}(\D_j) + M_{\epsilon}(\partial \D_j))(\tilde{\delta } + \delta)  \right).
\end{align}
The following lower bound holds: 
\begin{align*}
\Gap(\tK_0) \geq \frac{1}{B \theta^2_{\scrD} \theta^2_{\D_j}}\min_j \Gap(\hat{\tK}_{0,j}) .
\end{align*}
\end{restatable}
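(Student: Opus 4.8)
The plan is to combine the abstract comparison inequality Lemma~\ref{lemma:spec gap lemma} with the product-chain identity Theorem~\ref{thm: Product Chain SpecGap}. Applying Lemma~\ref{lemma:spec gap lemma} to the pair $(\tK_0,\hat{\tK}_0)$ on the common state space $\scrP_0$ will produce a bound $\Gap(\tK_0)\ge(a_0a_2/a_1)\Gap(\hat{\tK}_0)$; the two hypotheses of that lemma are designed to furnish the factors $\theta_\scrD^{-2}=a_0/a_1$ and $\theta_{\D_j}^{-2}=a_2$, while the product structure of $\hat{\pi}$ supplies the remaining $1/B$.

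First I would check hypothesis~(\ref{eqn:spec gap lemma 1}) of Lemma~\ref{lemma:spec gap lemma}: the required two-sided density-ratio bound $\theta_\scrD^{-1}\le\pi_0(\Pa)/\hat{\pi}_0(\Pa)\le\theta_\scrD$, valid for every $\Pa\in\scrP_0$, is precisely conclusion~\eqref{eqn:density ratio bounds} of Lemma~\ref{lemma: hat weight uniform bound}. Hence I may take $a_0=\theta_\scrD^{-1}$ and $a_1=\theta_\scrD$, so that $a_0/a_1=\theta_\scrD^{-2}$.

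The substance is hypothesis~(\ref{eqn:spec gap lemma 2}), the pointwise domination of Dirichlet-form integrands. By Lemma~\ref{lemma:product restriction} both $\tK_0$ and $\hat{\tK}_0$ alter a single block per step, and since the diagonal term $\Pa'=\Pa$ contributes nothing to $(f(\Pa)-f(\Pa'))^2$, it suffices to dominate the off-diagonal block kernels $\tK_{0,j}$ by $\hat{\tK}_{0,j}$ for each $j$; the resulting measure inequality on the off-diagonal then yields hypothesis~(\ref{eqn:spec gap lemma 2}) for every $A$ and $f$. Both block kernels are Metropolized-independence kernels built from the \emph{same} ISM proposal $\mu_{\I_j}$ and differ only in their acceptance ratios \eqref{Eqn:MetropAcceptProb}, namely $w(\Pa'_{\I_j})/w(\Pa_{\I_j})$ with $w=\pi/\mu$ for $\tK_{0,j}$ versus the analogous ratio with $\hat{\pi}/\mu$ for $\hat{\tK}_{0,j}$. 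Taking the quotient of these two ratios, the proposal $\mu_{\I_j}$ and all factors depending on the frozen blocks $\Pa_{\I_{[-j]}}$ cancel, leaving $\hat{w}(\Pa'_{\I_j})/\hat{w}(\Pa_{\I_j})$ with $\hat{w}=\pi/\hat{\pi}$ the weight of \eqref{eqn:Hat Weight}. Because a block-$j$ update changes $\hat{w}$ only through its contributions at $\D_j$ and $\partial\D_j$, the block-level version of the argument proving Lemma~\ref{lemma: hat weight uniform bound} confines this quotient to $[\theta_{\D_j}^{-2},\theta_{\D_j}^2]$, with $\theta_{\D_j}$ as in \eqref{Eqn:ThetaDef D}. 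The elementary fact that $a\ge c\,b$ with $c\in(0,1]$ forces $\min\{1,a\}\ge c\min\{1,b\}$ (applied with $c=\theta_{\D_j}^{-2}\le 1$) then gives the acceptance comparison $\alpha_j\ge\theta_{\D_j}^{-2}\hat{\alpha}_j$, and hence $\tK_{0,j}(\Pa_{\I_j},\cdot)\ge\theta_{\D_j}^{-2}\,\hat{\tK}_{0,j}(\Pa_{\I_j},\cdot)$ off the diagonal. Taking the worst block supplies hypothesis~(\ref{eqn:spec gap lemma 2}) with uniform constant $a_2=\min_j\theta_{\D_j}^{-2}$.

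With both hypotheses in hand, Lemma~\ref{lemma:spec gap lemma} gives $\Gap(\tK_0)\ge(a_0a_2/a_1)\Gap(\hat{\tK}_0)=\theta_\scrD^{-2}\theta_{\D_j}^{-2}\Gap(\hat{\tK}_0)$. Finally, since $\hat{\pi}$ is the product measure \eqref{eqn: hat pi definition} and $\scrP_0=\cap_j\overline{\scrP}_{0,j}$ is a product set, Lemma~\ref{lemma:product restriction} shows $\hat{\tK}_0$ is a product chain, so Theorem~\ref{thm: Product Chain SpecGap} gives $\Gap(\hat{\tK}_0)=\frac{1}{B}\min_j\Gap(\hat{\tK}_{0,j})$; substituting yields the claimed bound. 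I expect the acceptance-ratio comparison to be the crux: it requires establishing the block-localized weight bound $\theta_{\D_j}$ (the per-block analog of Lemma~\ref{lemma: hat weight uniform bound}) and verifying that the frozen neighboring-block contexts, which enter the $\phi$-factors at the edge sites $\D_j$, cancel cleanly in the quotient so that only the $\D_j$ and $\partial\D_j$ jump counts — controlled on $\scrP_0$ by Lemma~\ref{lemma:DSM Conc Bounds} — survive.
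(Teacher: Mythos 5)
Your proposal is correct and follows essentially the same route as the paper: hypothesis~(1) of Lemma~\ref{lemma:spec gap lemma} is supplied by \eqref{eqn:density ratio bounds} of Lemma~\ref{lemma: hat weight uniform bound}, hypothesis~(2) by the shared-proposal acceptance-ratio comparison $\alpha_j \geq \theta_{\D_j}^{-2}\hat{\alpha}_j$ on $\scrP_0$ (which is exactly the content of the paper's Lemma~\ref{lemma: Spec Gaps 1 intermediate}), and the factor $1/B$ comes from Theorem~\ref{thm: Product Chain SpecGap} applied to the product chain $\hat{\tK}_0$. Your spelled-out justification of the acceptance comparison (cancellation of the proposal leaving $\hat{w}(\Pa'_{\I_j})/\hat{w}(\Pa_{\I_j})$, plus the fact that $a \geq cb$ with $c \in (0,1]$ implies $\min\{1,a\} \geq c\min\{1,b\}$) and your use of the uniform constant $a_2 = \min_j \theta_{\D_j}^{-2}$ are, if anything, slightly more explicit than the paper's presentation.
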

We will use the following lemma to establish Lemma~\ref{lemma:Spec Gaps 1}. For brevity we define the following shorthand notation for all $f \in L^2(\pi)$
\begin{align*}
(\nabla f(\Pa,\Pa^{\prime}))^2 := (f(\Pa) - f(\Pa^{\prime}))^2.
\end{align*}
\begin{lemma}
\label{lemma: Spec Gaps 1 intermediate}
The following bound holds for any $\Pa \in \scrP_0$ and $A \subset \scrP_0$:
\begin{align}
\label{eqn: gap K0 and hat K0 wts}
\int_A\tK_0(\Pa,d\Pa^{\prime})  (\nabla f(\Pa,\Pa^{\prime}))^2 \ind_{\scrP_0}(\Pa)  \geq  \frac{1}{\theta^2_{\D_j}} \int_A \hat{\tK}_0(\Pa,d\Pa^{\prime})  (\nabla f(\Pa,\Pa^{\prime}))^2 \ind_{\scrP_0}(\Pa).
\end{align}
\end{lemma}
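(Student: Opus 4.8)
The lemma claims a pointwise‐type comparison between the Dirichlet integrands of the restricted true chain $\tK_0$ and the restricted product chain $\hat{\tK}_0$, with comparison constant $1/\theta_{\D_j}^2$. The plan is to reduce the claim to a per‐block comparison of the acceptance probabilities and then invoke the uniform density‐ratio bounds already established in Lemma~\ref{lemma: hat weight uniform bound}. Since both $\tK_0$ and $\hat{\tK}_0$ use the \emph{same} blockwise ISM proposal $\mu_{\I_j}$ and differ only through their Metropolis acceptance ratios (one using $\pi$, the other $\hat{\pi}$), the off‐diagonal parts of the two kernels share a common proposal measure, and the entire comparison comes down to bounding the ratio of acceptance probabilities $\alpha_j^{\pi}/\hat{\alpha}_j^{\hat{\pi}}$ from below.

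\textbf{Main steps.} First I would use Lemma~\ref{lemma:product restriction} to write both $\tK_0$ and $\hat{\tK}_0$ as averages over blocks $j$ of the restricted single‐block kernels $\tK_{0,j}$ and $\hat{\tK}_{0,j}$, each tensored with a Dirac mass on the complementary coordinates $\Pa_{\I_{[-j]}}$. Because $(\nabla f)^2$ vanishes when $\Pa=\Pa^{\prime}$, only the off‐diagonal (genuine move) part of each kernel contributes to the integral, so the diagonal holding terms may be discarded; this isolates the proposal‐times‐acceptance piece. For a fixed block $j$, both kernels propose $\Pa_{\I_j}^{\prime}$ from the common measure $\mu_{\I_j}(d\Pa_{\I_j}^{\prime})$, so it suffices to show the pointwise inequality
\begin{align*}
\alpha_j(\Pa_{\I_j},\Pa_{\I_j}^{\prime}) \;\geq\; \frac{1}{\theta_{\D_j}^2}\,\hat{\alpha}_j(\Pa_{\I_j},\Pa_{\I_j}^{\prime})
\end{align*}
for all $\Pa,\Pa^{\prime}\in\scrP_0$, where the acceptance ratios are $\min\{1,w(\Pa^{\prime})/w(\Pa)\}$ with $w=\pi/\mu$ and $\hat{w}=\hat{\pi}/\mu$ respectively. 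Writing $w(\Pa)/\hat{w}(\Pa)=\pi(\Pa)/\hat{\pi}(\Pa)$, this ratio is controlled on $\scrP_0$ by the uniform bound of Lemma~\ref{lemma: hat weight uniform bound}. The key elementary fact is that for $\min$‐type acceptance functions, if two density ratios differ by at most a factor $\theta$ at each endpoint, then the acceptance probabilities differ by at most $\theta^2$; multiplying back through the shared proposal and averaging over $j$ then yields \eqref{eqn: gap K0 and hat K0 wts}.

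\textbf{The main obstacle.} The delicate point is making the reduction to the single‐block ratio rigorous while correctly accounting for the \emph{restriction} to $\scrP_0$. The restricted kernels $\tK_{0,j}$ and $\hat{\tK}_{0,j}$ carry extra terms of the form $\delta_A(\Pa)\tK_{(j)}(\Pa_{\I_j},\scrP_{0,j}^c)$ from the definition \eqref{eqn:K restriction}, and one must verify that these restriction corrections are compatible with the comparison, i.e.\ that the added diagonal mass does not spoil the off‐diagonal inequality (it does not, since it contributes only to $\Pa=\Pa^{\prime}$ where $(\nabla f)^2=0$, but this needs to be stated). The other subtlety is bookkeeping on the exponent: the factor $\theta_{\D_j}$ in \eqref{Eqn:ThetaDef D} is governed by $m(\Pa_{\D_j})$ and $m(\Pa_{\partial\D_j})$ rather than by the global $m(\Pa_{\scrD})$, so I must restrict the density‐ratio argument of Lemma~\ref{lemma: hat weight uniform bound} to the single block $\I_j$ and confirm that $\hat{w}$ restricted to that block depends only on jumps in $\D_j\cup\partial\D_j$. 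Once the acceptance‐ratio comparison is localized to each block with constant $\theta_{\D_j}^2$, taking the worst block and averaging recovers the stated bound, completing the proof.
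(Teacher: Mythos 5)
Your proposal is correct and follows essentially the same route as the paper's proof: both arguments discard the diagonal restriction terms because $(\nabla f(\Pa,\Pa^{\prime}))^2$ vanishes at $\Pa=\Pa^{\prime}$, reduce the comparison to the per-block Metropolis acceptance probabilities under the shared proposal $\mu_{\I_j}$, and obtain $\alpha_j \geq \theta_{\D_j}^{-2}\hat{\alpha}_j$ on $\scrP_0$ from the localized density-ratio bound of Lemma~\ref{lemma: hat weight uniform bound}. Your attention to the two subtleties (the restriction corrections being purely diagonal, and $\theta_{\D_j}$ being governed by jumps in $\D_j\cup\partial\D_j$ only) matches exactly what the paper relies on.
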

\begin{proof}
Let $\Pa \in \scrP_0$ and $A \subset \scrP_0$. We have
\begin{align*}
\MoveEqLeft[7] \int_A \tK_0(\Pa,d\Pa^{\prime}) (\nabla f(\Pa,\Pa^{\prime}))^2 \ind_{\scrP_0}(\Pa)
= \int_A\tK(\Pa,d\Pa^{\prime})  (\nabla f(\Pa,\Pa^{\prime}))^2 \ind_{\scrP_0}(\Pa) \\
& = \int_A \Big(\frac{1}{B} \sum^B_{j=1} 
\mu_{\I_j}(d\Pa_{\I_j}^{\prime}) 
\alpha_j(\Pa_{\I_j}, \Pa_{\I_j}^{\prime}) \delta_{\Pa_{\I_{[-j]}}}(d\Pa^{\prime}_{\I_{[-j]}})\Big)(\nabla f(\Pa,\Pa^{\prime}))^2\ind_{\scrP_0}(\Pa) \\
& \geq \int_A \frac{1}{\theta^2_{\D_j}}\hat{\tK}(\Pa,d\Pa^{\prime})(\nabla f(\Pa,\Pa^{\prime}))^2\ind_{\scrP_0}(\Pa) \\
& = \int_A \frac{1}{\theta^2_{\D_j}}\hat{\tK}_0(\Pa,d\Pa^{\prime}) (\nabla f(\Pa,\Pa^{\prime}))^2 \ind_{\scrP_0}(\Pa) . 
\end{align*}
The first and final equalities hold because $\tK$ and $\hat{\tK}$ share the same proposal distribution. To see why the inequality holds, recall that the acceptance ratio for the $\hat{\tK}_{0,j}$ chain is given by
\begin{align}
\label{eqn:accept ratio alpha hat}
\hat{\alpha}_j(\Pa_{\I_j}, \Pa_{\I_j}^{\prime}) := \min\left\{1, \frac{\tP_{(T,\hat{\Q})}(\y_{\I_j}, \Pa^{\prime}_{\I_j} \mid \x_{\I_j}, \Pa^{\prime}_{\I_{[-j]}})}{\tP_{(T,\hat{\Q})}(\y_{\I_j}, \Pa_{\I_j} \mid \x_{\I_j}, \Pa_{\I_{[-j]}}) }  \frac{\tP_{(T,\Q)}(\y_{\I_j}, \Pa_{\I_j} \mid \x_{\I_j})}{\tP_{(T,\Q)}(\y_{\I_j}, \Pa_{\I_j}^{\prime} \mid \x_{\I_j})} \right\},
\end{align}
and by Lemma~\ref{lemma: hat weight uniform bound} and \eqref{eqn: hat pi definition} we have
\begin{align*}
\alpha_j(\Pa_{\I_j}, \Pa_{\I_j}^{\prime}) \ind_{\scrP_0}(\Pa) \ind_{\scrP_0}(\Pa^{\prime}) &\geq  \frac{1}{\theta^2_{\D_j}}\hat{\alpha}_j(\Pa_{\I_j}, \Pa_{\I_j}^{\prime}) \ind_{\scrP_0}(\Pa) \ind_{\scrP_0}(\Pa^{\prime})  \qquad \forall j \in \{1,\ldots,B\}.
\end{align*}
\end{proof}
We are now ready to prove Lemma~\ref{lemma:Spec Gaps 1}.
\begin{proof}(Lemma~\ref{lemma:Spec Gaps 1})
Recall $\tK_0 = \tK_{\mid \scrP_0}$ for $\tK$ defined in Section~\ref{sec: SMC Background}. By Lemma~\ref{lemma:spec gap lemma} we have:
\begin{align}
\label{eqn:spec gap first proof}
\Gap(\tK_0)  \geq \frac{1}{\theta^2_{\scrD} \theta^2_{\D_j} } \, \Gap(\hat{\tK}_0).
\end{align}
for $\theta_{\scrD}$ defined in \eqref{Eqn:ThetaDef}
since the first condition of Lemma~\ref{lemma:spec gap lemma} is satisfied  by the bound \eqref{eqn:density ratio bounds} from Lemma~\ref{lemma: hat weight uniform bound} with $a_0 = \theta^{-1}_{\scrD}$ and $a_1 = \theta_{\scrD}$, and the second condition of Lemma~\ref{lemma:spec gap lemma} with $a_2 = \theta^{-2}_{\D_j}$ holds by Lemma~\ref{lemma: Spec Gaps 1 intermediate}. Hence
Lemma~\ref{lemma:spec gap lemma} with $a_0 = \theta^{-1}_{\scrD}$, $a_1 = \theta_{\scrD}$, and $a_2 = \theta^{-2}_{\D_j}$ implies \eqref{eqn:spec gap first proof}. Finally, 
recalling $\hat{\tK}_0$ is a product chain and applying Theorem~\ref{thm: Product Chain SpecGap} gives the result.
\end{proof}
It remains to bound $ \min_j  \Gap(\hat{\tK}_{0,j})$. We will again do so by obtaining uniform bounds on the density ratio appearing in the acceptance probability, which apply on the subspace $\scrP_{0,j}$
having a bounded number of extra mutations. That is, we will obtain constants $a_0,a_1 > 0$ satisfying
\begin{align}
\label{eqn:proof overview second spec gap}
a_0 \cdot \ind_{\scrP_0(\epsilon, \I_j)}(\Pa_{\I_j}) \leq \frac{\hat{\pi}_{0,j}(\Pa_{\I_j})}{\mu_{\I_j}(\Pa_{\I_j})}\cdot \ind_{\scrP_0(\epsilon, \I_j)}(\Pa_{\I_j}) \leq a_1 \cdot \ind_{\scrP_0(\epsilon, \I_j)}(\Pa_{\I_j}).
\end{align}
for $\scrP_{0}(\epsilon, \I_j) = \scrP_{0,j} = \{\Pa: m(\Pa_{\I_j}) \leq M_{\epsilon}(\I_j) \}$ the event \eqref{Eqn:ScriptQDef} that the number of extra mutations in $\I_j$ is bounded by $M_{\epsilon}(\I_j) = \zeta_{\I_j} \lambda(e) + \log(3B/\epsilon)$. Once \eqref{eqn:proof overview second spec gap} is established, recalling that $\mu_{\I_j}$ is the proposal distribution used by $\hat{\tK}_{0,j}$, a lower bound on $\Gap(\hat{\tK}_{0,j})$ will follow immediately.
\begin{restatable}{lemma}{SpecGapsTwo}
\label{lemma:Spec Gaps 2}
Define the quantity
\begin{align}
\label{eqn: theta I_j}
\theta_{\I_j} := \exp\left(\log(\phi_{\star}) M_{\epsilon}(\I_j) + 2TM_{\epsilon}(\I_j)(\tilde{\delta}+\delta) \right).
\end{align}
For any $j \in \{1,\ldots,B\}$:
\begin{align*}
\Gap(\hat{\tK}_{0,j}) \geq \frac{2}{\theta^4_{\I_j}}.
\end{align*}
\end{restatable}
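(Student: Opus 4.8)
The plan is to treat $\hat{\tK}_{0,j}$ as a restricted \emph{Metropolized independence sampler} with proposal $\mu_{\I_j}$ and target $\hat{\pi}_{0,j}$, for which uniform control of the importance weight on $\scrP_{0,j}$ converts directly into a Doeblin minorization, and hence a spectral gap lower bound. Following the outline preceding \eqref{eqn:proof overview second spec gap}, I would first bound the unnormalized weight
\[
\hat{w}_j(\Pa_{\I_j}) := \frac{\tP_{(T,\hat{\Q})}(\y_{\I_j},\Pa_{\I_j}\mid\x_{\I_j})}{\tP_{(T,\Q)}(\y_{\I_j},\Pa_{\I_j}\mid\x_{\I_j})}
\]
on $\scrP_{0,j} = \scrP_0(\epsilon,\I_j)$. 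Since $\hat{\Q}$ differs from the block ISM matrix $\Q$ only through the context multiplier $\phi$ at the non-division sites $\I_j\setminus\D_j$, the argument proving Lemma~\ref{lemma: hat weight uniform bound} applies essentially verbatim: $\hat{w}_j$ factors into a product of $\phi$-terms over the jumps landing in $\I_j\setminus\D_j$, lying in $[\phi_{\min}^{m(\Pa_{\I_j})},\phi_{\max}^{m(\Pa_{\I_j})}]$, times an exponential exit-rate correction of size at most $e^{2T m(\Pa_{\I_j})(\tilde{\delta}+\delta)}$ and a path-independent endpoint factor. On $\scrP_{0,j}$ we have $m(\Pa_{\I_j})\le M_{\epsilon}(\I_j)$, so these combine to give $\theta_{\I_j}^{-1}\le\hat{w}_j(\Pa_{\I_j})\le\theta_{\I_j}$ with $\theta_{\I_j}$ exactly as in \eqref{eqn: theta I_j}, and, after renormalization, the two-sided density-ratio bound \eqref{eqn:proof overview second spec gap}.

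Next I would convert this weight control into a spectral gap bound. Because the Metropolis acceptance ratio $\hat{\alpha}_j(\Pa_{\I_j},\Pa_{\I_j}')=\min\{1,\hat{w}_j(\Pa_{\I_j}')/\hat{w}_j(\Pa_{\I_j})\}$ depends only on the \emph{ratio} of weights, the path-independent endpoint factor cancels and the two-sided bound on $\hat{w}_j$ yields the uniform lower bound $\hat{\alpha}_j\ge\theta_{\I_j}^{-2}$ for all $\Pa_{\I_j},\Pa_{\I_j}'\in\scrP_{0,j}$. The off-diagonal part of the restricted kernel therefore satisfies the minorization $\hat{\tK}_{0,j}(\Pa_{\I_j},d\Pa_{\I_j}')\ge\theta_{\I_j}^{-2}\,\mu_{\I_j}(d\Pa_{\I_j}')$ for $\Pa_{\I_j}'\in\scrP_{0,j}$. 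Substituting the upper density-ratio bound $\hat{\pi}_{0,j}\le a_1\,\mu_{\I_j}$ from \eqref{eqn:proof overview second spec gap}, with $a_1$ of order $\theta_{\I_j}^{2}$, replaces $\mu_{\I_j}$ by $\hat{\pi}_{0,j}$ and produces a minorization against the stationary distribution, $\hat{\tK}_{0,j}(\Pa_{\I_j},\cdot)\ge\beta\,\hat{\pi}_{0,j}(\cdot)$, with $\beta$ of order $\theta_{\I_j}^{-4}$.

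Finally, a minorization $\hat{\tK}_{0,j}(\cdot,\cdot)\ge\beta\,\hat{\pi}_{0,j}(\cdot)$ implies $\Gap(\hat{\tK}_{0,j})\ge\beta$: writing $\hat{\tK}_{0,j}=\beta\,\Pi+(1-\beta)R$ with $\Pi(\Pa_{\I_j},\cdot)=\hat{\pi}_{0,j}(\cdot)$ and $R$ a Markov kernel, the Dirichlet form splits as $\calE_{\hat{\tK}_{0,j}}(f,f)=\beta\,\V_{\hat{\pi}_{0,j}}[f]+(1-\beta)\calE_R(f,f)\ge\beta\,\V_{\hat{\pi}_{0,j}}[f]$, since $\calE_{\Pi}(f,f)=\V_{\hat{\pi}_{0,j}}[f]$ and $\calE_R\ge0$. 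Tracking the explicit constants, in particular the mass bounds $\mu_{\I_j}(\scrP_{0,j}),\hat{\pi}_j(\scrP_{0,j})\ge 1-\epsilon$ furnished by Lemma~\ref{lemma:DSM Conc Bounds}, then delivers a bound of the claimed form $\Gap(\hat{\tK}_{0,j})\ge 2\,\theta_{\I_j}^{-4}$, uniformly in $j$.

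I expect the main obstacle to be Step 1, namely upgrading the unnormalized weight bound to the \emph{normalized} ratio bound \eqref{eqn:proof overview second spec gap}. The acceptance inequality $\hat{\alpha}_j\ge\theta_{\I_j}^{-2}$ is immediate from the weight ratio, since normalizing constants cancel; but converting the minorization against $\mu_{\I_j}$ into one against $\hat{\pi}_{0,j}$ requires controlling the normalizing constant $\hat{\pi}_j(\scrP_{0,j})$ and the proposal mass $\mu_{\I_j}(\scrP_{0,j})$, which is precisely where the concentration estimate of Lemma~\ref{lemma:DSM Conc Bounds} (equivalently the MGF bound of Lemma~\ref{lemma:DSM MGF Subset Main Body}) is needed to ensure both sets carry all but $\epsilon$ of the mass. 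Some care is also required to verify that the $\phi$-product and the exit-rate terms recombine to give exactly the constant $\theta_{\I_j}$ of \eqref{eqn: theta I_j}; this is the block-$\I_j$ analogue of the $\theta_{\scrD}$ computation already performed in Lemma~\ref{lemma: hat weight uniform bound}, so it should need only bookkeeping rather than a new idea.
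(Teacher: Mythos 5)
Your strategy genuinely differs from the paper's in its final step: you convert the weight control into a Doeblin minorization $\hat{\tK}_{0,j}(\Pa,\cdot)\ge\beta\,\hat{\pi}_{0,j}(\cdot)$ and then use the decomposition $\hat{\tK}_{0,j}=\beta\,\Pi+(1-\beta)R$ to conclude $\Gap(\hat{\tK}_{0,j})\ge\beta$, whereas the paper never forms a minorization against the target: it lower-bounds the Dirichlet form and upper-bounds the variance \emph{both} relative to the same truncated proposal $\mu_{\I_j\mid\overline{\scrP}_{0,j}}$ and takes the ratio. Your first two steps (the unnormalized weight bound with ratio $\theta_{\I_j}$ on $\scrP_{0,j}$, and the acceptance bound $\hat{\alpha}_j\ge\theta_{\I_j}^{-2}$) coincide with the paper's \eqref{eqn:unnormalized ratio hat and ism} and \eqref{eqn:KhatAcceptLowerBound}, and your ``minorization implies gap'' decomposition argument is correct in itself.

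The genuine gap is in converting the minorization against $\mu_{\I_j}$ into one against $\hat{\pi}_{0,j}$. The correct normalized bound (the paper's \eqref{eqn:density ratio hat and ism}) is $\hat{\pi}_{0,j}\le\bigl(\theta_{\I_j}/\mu_{\I_j}(\scrP_{0,j})\bigr)\mu_{\I_j}$ on $\scrP_{0,j}$, so your minorization constant is $\beta=\mu_{\I_j}(\scrP_{0,j})\,\theta_{\I_j}^{-3}$, not an unconditional $\theta_{\I_j}^{-4}$; your claim that ``$a_1$ is of order $\theta_{\I_j}^{2}$'' hides precisely the unknown factor $\mu_{\I_j}(\scrP_{0,j})^{-1}$. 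To conclude $\Gap(\hat{\tK}_{0,j})\ge 2\theta_{\I_j}^{-4}$ you therefore need a lower bound on the \emph{proposal} mass $\mu_{\I_j}(\scrP_{0,j})$, and you assert that this (together with $\hat{\pi}_j(\scrP_{0,j})\ge 1-\epsilon$) is ``furnished by Lemma~\ref{lemma:DSM Conc Bounds}.'' It is not: that lemma controls mass only under the target DSM $\pi$, saying nothing about $\mu_{\I_j}$ or $\hat{\pi}_j$. Since the truncation level $M_{\epsilon}(\I_j)=\lambda(e)\zeta_{\I_j}+\log(3B/\epsilon)$ is calibrated to the DSM constant $\lambda(e)$ of Lemma~\ref{lemma:DSM MGF Subset}, concentration of $m(\Pa_{\I_j})$ under the ISM proposal does not follow automatically; it would require, e.g., the ISM MGF/$p_r$ bounds of Lemma~\ref{lemma:expectation and pr bound} plus a verification that the ISM exponent is dominated by $\lambda(e)\zeta_{\I_j}$, which is additional (and not obviously uniform) constant-chasing. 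The paper's proof is structured exactly to avoid this: its Dirichlet-form lower bound $\calE_{\hat{\tK}_{0,j}}(f,f)\ge 2\,\mu_{\I_j}(\overline{\scrP}_{0,j})\,\theta_{\I_j}^{-3}\,\V_{\mu_{\I_j\mid\overline{\scrP}_{0,j}}}[f]$ and its variance upper bound $\V_{\hat{\pi}_{0,j}}[f]\le\theta_{\I_j}\,\mu_{\I_j}(\overline{\scrP}_{0,j})\,\V_{\mu_{\I_j\mid\overline{\scrP}_{0,j}}}[f]$ carry the same mass factor, which cancels in the Rayleigh quotient, so no mass lower bound is ever needed. To repair your argument, either establish the $\mu_{\I_j}$-mass bound separately (in which case your route actually yields the slightly stronger $(1-\epsilon)\theta_{\I_j}^{-3}$), or restructure the comparison as the paper does so that the mass cancels.
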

\begin{proof}
Let $c = - T (\tg(\cdot; \y_{\I_j}) - \gamma(\cdot; \y_{\I_j}))$.
An identical argument to that used to show Lemma 2 in \citet{Mathews3:2025} 
gives the following uniform bound on the density ratio:
\begin{align}
\label{eqn:unnormalized ratio hat and ism}
\Prob_{\xi_{\I_j}}\left(\phi^{m(\Pa_{\I_j})}_{\min} e^{-Tm(\Pa_{\I_j})(\tilde{\delta} + \delta) + c } \leq \frac{\tP_{(T,\hat{\Q})}(\y_{\I_j}, \Pa_{\I_j} \mid \x_{\I_j})}{\tP_{(T,\Q)}(\y_{\I_j}, \Pa_{\I_j} \mid \x_{\I_j})} \leq \phi^{m(\Pa_{\I_j})}_{\max} e^{Tm(\Pa_{\I_j})(\tilde{\delta} + \delta) + c}\right) =1 ,
\end{align}
where $\xi_{\I_j}$ is any probability measure supported on $\scrP_{\I_j}$, and recall that
\begin{equation*}
m(\Pa_{\I_j}) \leq M_{\epsilon}(\I_j) \qquad \forall \Pa_{\I_j} \in \scrP_{0,j}.
\end{equation*}
Consequently
\begin{align}
\label{eqn:density ratio hat and ism}
\Prob_{\xi_{\I_j}}\left( \frac{\ind_{\scrP_{0,j}}(\Pa_{\I_j})}{\theta_{\I_j} \cdot \mu_{\I_j}(\scrP_{0,j})}   \leq  \frac{\hat{\pi}_{0,j}(\Pa_{\I_j})}{\mu_{\I_j}(\Pa_{\I_j})}  \cdot \ind_{\scrP_{0,j}}(\Pa_{\I_j}) \leq \frac{\theta_{\I_j} \cdot \ind_{\scrP_{0,j}}(\Pa_{\I_j}) }{\mu_{\I_j}(\scrP_{0,j})} \right) = 1.
\end{align}
Recall that by definition
\begin{align}
\label{eqn:spec gap restricted j}
\Gap(\hat{\tK}_{0,j}) &=  \inf_{\substack{f \in L^2(\hat{\pi}_{0,j})\\ \V_{\hat{\pi}_{0,j}}[f] \neq 0}} \frac{\calE_{\hat{\tK}_{0,j}}(f,f)}{\V_{\hat{\pi}_{0,j}}[f]}.
\end{align}
Recalling from \eqref{Eqn:P0jDef} that $\overline{\scrP}_{0,j} \subset \scrP_{0,j}$, we have by \eqref{eqn:density ratio hat and ism} that
\begin{equation*}
\V_{\hat{\pi}_{0,j}}[f] \leq \theta_{\I_j} \mu_{\I_j}(\overline{\scrP}_{0,j})\V_{\mu_{\I_j \mid \overline{\scrP}_{0,j}}}[f].
\end{equation*}
Next, by the definition \eqref{eqn:accept ratio alpha hat} of $\hat{\alpha}_j(\Pa_{\I_j},\Pa^{\prime}_{\I_j})$ we see again using \eqref{eqn:density ratio hat and ism} that
\begin{align}
\label{eqn:KhatAcceptLowerBound}
\hat{\alpha}_j(\Pa_{\I_j},\Pa^{\prime}_{\I_j}) \geq \frac{1}{\theta^2_{\I_j}} \qquad \forall \Pa_{\I_j},\Pa^{\prime}_{\I_j} \in \scrP_{0,j}
\end{align}
and since $\overline{\scrP}_{0,j} \subset \scrP_{0,j}$
\begin{align*}
\calE_{\hat{\tK}_{0,j}}(f,f) &=  \int_{\overline{\scrP}_{0,j}} \int_{\overline{\scrP}_{0,j}} \hat{\pi}_{0,j}(d\Pa_{\I_j})\hat{\tK}_{0,j}(\Pa_{\I_j},d\Pa_{\I_j}^{\prime}) (\nabla f(\Pa_{\I_j}, \Pa^{\prime}_{\I_j}))^2  \\
&= \int_{\overline{\scrP}_{0,j}} \int_{\overline{\scrP}_{0,j}} \hat{\pi}_{0,j}(d\Pa_{\I_j}) \mu_{\I_j}(d\Pa^{\prime}_{\I_j}) \hat{\alpha}_j(\Pa_{\I_j},\Pa^{\prime}_{\I_j})   (\nabla f(\Pa_{\I_j}, \Pa^{\prime}_{\I_j}))^2 \\
&\geq \frac{\mu_{\I_j}(\scrP_{0,j})}{\theta^3_{\I_j}} \int_{\overline{\scrP}_{0,j}} \int_{\overline{\scrP}_{0,j}}   \mu_{\I_j \mid \scrP_{0,j}}(d\Pa_{\I_j}) \mu_{\I_j \mid \scrP_{0,j}}(d\Pa^{\prime}_{\I_j})   (\nabla f(\Pa_{\I_j}, \Pa^{\prime}_{\I_j}))^2  \\
&=  \frac{\mu_{\I_j}(\overline{\scrP}_{0,j})}{\theta^3_{\I_j}}2 \V_{\mu_{\I_j \mid \overline{\scrP}_{0,j}}}[f]
\end{align*}
where the inequality uses \eqref{eqn:density ratio hat and ism}  and \eqref{eqn:KhatAcceptLowerBound}.
It follows that for any non-constant $f \in L^2(\hat{\pi}_{0,j})$ we have
\begin{align}
\label{eqn:spec gap two final}
\frac{\calE_{\hat{\tK}_{0,j}}(f,f)}{\V_{\hat{\pi}_{0,j}}[f]} \geq \frac{2}{\theta^{4}_{\I_j}}.
\end{align}
Taking the infimum on both sides gives the stated bound by \eqref{eqn:spec gap restricted j}
\end{proof}

Combining Lemmas~\ref{lemma:Spec Gaps 1} and \ref{lemma:Spec Gaps 2} provides us with a lower bound on $\Gap(\tK_0)$. Consequently, we immediately obtain a bound on the spectral gap of the lazy chain $\tK' := \frac{1}{2}(\tK +I)$ restricted to $\scrP_0$.
\begin{lemma}
\label{lemma: lazy chain gap}
Let $\theta_{\scrD}$, $\theta_{\D_j}$, and $\theta_{\I_j}$ be defined as in \eqref{Eqn:ThetaDef}, \eqref{Eqn:ThetaDef D},  and \eqref{eqn: theta I_j}, respectively. Then
\begin{align*}
\Gap(\tK'_0) \geq \frac{1}{B\theta^2_{\scrD} \max_j\theta^2_{\D_j} \max_j\theta^4_{\I_j}} .
\end{align*}
\end{lemma}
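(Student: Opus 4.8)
The plan is to assemble the bound by chaining the two preceding spectral-gap estimates and then accounting for laziness; no new estimates are required. First I would apply Lemma~\ref{lemma:Spec Gaps 1}, which gives
\begin{align*}
\Gap(\tK_0) \geq \frac{1}{B\,\theta^2_{\scrD}\,\theta^2_{\D_j}}\min_j \Gap(\hat{\tK}_{0,j}).
\end{align*}
Because this must hold simultaneously for every block, the block-indexed factor $\theta^2_{\D_j}$ should be read in its worst case, so I would replace it by $\max_j\theta^2_{\D_j}$; since enlarging a factor in the denominator can only weaken the right-hand side, the resulting inequality remains valid.

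Next I would insert the uniform lower bound of Lemma~\ref{lemma:Spec Gaps 2}, namely $\Gap(\hat{\tK}_{0,j}) \geq 2/\theta^4_{\I_j}$ for every $j$, which yields $\min_j\Gap(\hat{\tK}_{0,j}) \geq 2/\max_j\theta^4_{\I_j}$. Substituting gives
\begin{align*}
\Gap(\tK_0) \geq \frac{2}{B\,\theta^2_{\scrD}\,\max_j\theta^2_{\D_j}\,\max_j\theta^4_{\I_j}}.
\end{align*}

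It remains to pass from $\tK_0$ to its lazy restriction $\tK'_0$. Using the restriction formula \eqref{eqn:K restriction} one checks that restriction commutes with laziness, i.e. $\tK'_0 = \frac{1}{2}(I + \tK_0)$, so that $I - \tK'_0 = \frac{1}{2}(I - \tK_0)$ and hence $\calE_{\tK'_0}(f,f) = \frac{1}{2}\calE_{\tK_0}(f,f)$ for every $f$. Taking the infimum over non-constant $f \in L^2(\pi_0)$ in the definition \eqref{eqn:SpecGap definition} then gives $\Gap(\tK'_0) = \frac{1}{2}\Gap(\tK_0)$. This factor of $\frac{1}{2}$ exactly cancels the factor of $2$ inherited from Lemma~\ref{lemma:Spec Gaps 2}, producing the claimed bound.

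The argument is essentially bookkeeping, and I do not expect a substantive obstacle. The two points demanding care are the interpretation of the free index $j$ in Lemma~\ref{lemma:Spec Gaps 1} as a maximum over blocks---so that the estimate holds for the whole block chain rather than a single block---and the verification that restriction and laziness commute, which is what makes the clean identity $\Gap(\tK'_0) = \frac{1}{2}\Gap(\tK_0)$ available and lets the stray factor of $2$ disappear.
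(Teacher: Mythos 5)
Your proof is correct and follows essentially the same route as the paper's: chain Lemma~\ref{lemma:Spec Gaps 1} (with the block index read as a worst case, $\max_j\theta^2_{\D_j}$) into Lemma~\ref{lemma:Spec Gaps 2} to get $\Gap(\tK_0) \geq 2/(B\,\theta^2_{\scrD}\max_j\theta^2_{\D_j}\max_j\theta^4_{\I_j})$, then use $\Gap(\tK'_0) = \tfrac{1}{2}\Gap(\tK_0)$ so the factor of $2$ cancels. Your explicit check that restriction commutes with laziness is a detail the paper leaves implicit, but it is not a different argument.
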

%
%
\begin{proof}
It suffices to lower bound $\Gap(\tK_0)$ since $\Gap(\tK'_0) = \frac{1}{2}\Gap(\tK_0)$. Applying 
Lemmas~\ref{lemma:Spec Gaps 1} and \ref{lemma:Spec Gaps 2}
\begin{align*}
\Gap(\tK_0) 
&\geq \frac{1}{B \theta^2_{\scrD} \max_j \theta^2_{\D_j} } \min_j \Gap(\hat{\tK}_{0,j}) \geq  \frac{2}{B \theta^2_{\scrD} \max_j \theta^2_{\D_j} \max_j \theta^4_{\I_j}}.
\end{align*}
The stated bound follows.
\end{proof}

\begin{lemma}
\label{lemma:Mixing Time}
Define the constants 
\begin{align*}
    c_1 = 2\lambda(e) \log(\phi_{\star}) \quad c_2 = 4\lambda(e)(\tilde{\delta} + \delta) \quad c_3 = \frac{3}{\lambda(e)}(c_1 + 2Tc_2)
\end{align*}
Then for any $\epsilon \in (0,1)$
\begin{align*}
\tau(\epsilon,\omega) &\leq B\log\left(\frac{80 \omega^4}{\epsilon^2} \right)\left(\frac{60 B \omega^2}{\epsilon^2} \right)^{c_3}\exp\left(c_1[2\max_j \zeta_{\I_j} + \max_j \zeta_{\D_j} + \zeta_{\scrD}]\right) \\
&\times \exp\left(c_2 T [\zeta_{\scrD} + \zeta_{\partial \scrD} + \max_j \zeta_{\D_j} + \max_j \zeta_{\partial \D_j} + 2\max_j\zeta_{\I_j}]\right).
\end{align*}
%
\end{lemma}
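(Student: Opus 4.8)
The plan is to feed the explicit spectral-gap lower bound of Lemma~\ref{lemma: lazy chain gap} into the approximate-gap mixing-time bound of Atchadé (Theorem~\ref{thm:Atchade Bound}), and then to rewrite the resulting product of $\theta$-factors in terms of the quantities $\zeta_{\calA}$ and $\log(60B\omega^2/\epsilon^2)$. No new analytic input is needed: the concentration set (Lemma~\ref{lemma:DSM Conc Bounds}), the product-chain comparison (Lemma~\ref{lemma:Spec Gaps 1}), and the Metropolized-independence-sampler gap (Lemma~\ref{lemma:Spec Gaps 2}) are already in place, so the remaining work is to assemble them at the correct tolerance and to carry out the exponent bookkeeping.

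First I would select the high-probability set at the tolerance demanded by Atchadé. Theorem~\ref{thm:Atchade Bound} requires $\pi(\scrP_0^c) \le \epsilon^2/(20\omega^2)$, so rather than invoking Lemma~\ref{lemma:DSM Conc Bounds} at tolerance $\epsilon$ I would invoke it at tolerance $\epsilon_0 := \epsilon^2/(20\omega^2)$. This gives $\pi(\scrP_0^c)\le\epsilon_0$, with the thresholds becoming $M_{\epsilon_0}(\calA) = \lambda(e)\zeta_{\calA} + \log(3B/\epsilon_0) = \lambda(e)\zeta_{\calA} + \log(60B\omega^2/\epsilon^2)$; the additive term $L := \log(60B\omega^2/\epsilon^2)$ is precisely the logarithm of the polynomial base appearing in the claim. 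Working with the lazy chain $\tK' = \tfrac12(I + \tK)$ so the reversibility and laziness hypotheses are met, Theorem~\ref{thm:Atchade Bound} then yields
\[ \tau(\epsilon,\omega) \le \frac{\log(2\epsilon^{-2}) + \log(\omega^2)}{\Gap(\tK'_0)}. \]

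Next I would substitute $\Gap(\tK'_0) \ge (B\,\theta_{\scrD}^2\,\max_j\theta_{\D_j}^2\,\max_j\theta_{\I_j}^4)^{-1}$ from Lemma~\ref{lemma: lazy chain gap} and bound the numerator by $\log(80\omega^4/\epsilon^2)$, which is valid since $\omega\ge 1$ forces $2\omega^2\le 80\omega^4$; this produces the leading factor $B\log(80\omega^4/\epsilon^2)$. The substantive step is then to expand each $\theta$ via its definitions \eqref{Eqn:ThetaDef}, \eqref{Eqn:ThetaDef D}, \eqref{eqn: theta I_j}, inserting $M_{\epsilon_0}(\calA)=\lambda(e)\zeta_{\calA}+L$, and to split every exponent into a part proportional to $\lambda(e)\zeta_{\calA}$ and a part proportional to $L$. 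After exponentiation the $\zeta$-parts aggregate into $\exp\!\big(c_1[2\max_j\zeta_{\I_j}+\max_j\zeta_{\D_j}+\zeta_{\scrD}] + c_2 T[\zeta_{\scrD}+\zeta_{\partial\scrD}+\max_j\zeta_{\D_j}+\max_j\zeta_{\partial\D_j}+2\max_j\zeta_{\I_j}]\big)$, where $c_1=2\lambda(e)\log\phi_{\star}$ collects the $\log\phi_{\star}\cdot M$ contributions and $c_2=4\lambda(e)(\tilde{\delta}+\delta)$ the $2T(\tilde{\delta}+\delta)\cdot M$ contributions, each carried through with the multiplicities $2,2,4$ coming from the exponents on $\theta_{\scrD},\theta_{\D_j},\theta_{\I_j}$. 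The additive $L$-parts collect into the single polynomial factor $(60B\omega^2/\epsilon^2)^{c_3}$.

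The main obstacle is precisely this last accounting. For each factor I must track which subset index ($\scrD$, $\partial\scrD$, $\D_j$, $\partial\D_j$, or $\I_j$) its threshold $M_{\epsilon_0}$ carries and with what weight, noting that $\theta_{\scrD}$ and $\theta_{\D_j}$ enter squared while $\theta_{\I_j}$ enters to the fourth power, and that the $\D$-factors pair $M_{\epsilon_0}(\D_j)$ with $M_{\epsilon_0}(\partial\D_j)$ whereas the $\I$-factors carry only $M_{\epsilon_0}(\I_j)$. Checking that the $\zeta$-coefficients land exactly on the stated bracketed sums multiplying $c_1$ and $c_2T$ is routine, but care is required for the additive $L$-terms: because $\log\phi_{\star}$ and $T(\tilde{\delta}+\delta)$ enter the $\I_j$ factors and the $\scrD,\D_j$ factors in different ratios, the aggregated $L$-coefficient must be controlled to obtain the stated exponent $c_3=\tfrac{3}{\lambda(e)}(c_1+2Tc_2)$. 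Once this collection is carried out and the tolerance choice $\epsilon_0=\epsilon^2/(20\omega^2)$ is made (so that the union bound of Lemma~\ref{lemma:DSM Conc Bounds} over its $3B$ events supplies the Atchadé hypothesis), the claimed bound follows directly.
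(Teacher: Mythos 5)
Your proposal is correct and follows essentially the same route as the paper's own proof: invoke Lemma~\ref{lemma:DSM Conc Bounds} at tolerance $\epsilon^2/(20\omega^2)$, apply Theorem~\ref{thm:Atchade Bound} to the lazy restricted chain, insert the spectral-gap bound of Lemma~\ref{lemma: lazy chain gap}, and expand the $\theta$-factors with $M_{\epsilon_0}(\calA)=\lambda(e)\zeta_{\calA}+\log(60B\omega^2/\epsilon^2)$; the paper simply compresses your final bookkeeping step into the phrase ``for the chosen $c_1,c_2,c_3$''. One caveat you share with the paper: a strict tally of the $\log(60B\omega^2/\epsilon^2)$ contributions gives coefficient $(4c_1+6Tc_2)/\lambda(e)$ (since $\theta_{\I_j}^4$ contributes $4\log\phi_{\star}$ while $\theta_{\scrD}^2$ and $\theta_{\D_j}^2$ contribute $2\log\phi_{\star}$ each), which exceeds the stated $c_3=(3/\lambda(e))(c_1+2Tc_2)$ by $2\log\phi_{\star}$ --- an immaterial constant slip in the lemma statement itself that affects neither your argument's structure nor the downstream $\bigO(\exp(c\cdot r_{\star}))$ results.
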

\begin{proof}
We will apply Theorem~\ref{thm:Atchade Bound} to the lazy (reversible) chain $\tK'_0$. From Lemma~\ref{lemma:DSM Conc Bounds} we have $\pi(\scrP_0(\epsilon^2/20 \omega^2) ) \geq 1 - \epsilon^2/20 \omega^2$ satisfying the conditions of Theorem~\ref{thm:Atchade Bound}. Lemma~\ref{lemma: lazy chain gap} gives
\begin{align*}
\Gap(\tK'_0) &\geq \frac{1}{B}\left(\frac{\epsilon^2}{60 \omega^2 B}\right)^{c_3} \exp\left(-c_1[2\max_j \zeta_{\I_j} + \max_j \zeta_{\D_j} + \zeta_{\scrD}]\right) \\
&\times \exp\left(-c_2 T [2\max_j\zeta_{\I_j} + \max_j \zeta_{\D_j} + \max_j \zeta_{\partial \D_j} + \zeta_{\scrD} + \zeta_{\partial \scrD}]\right).
\end{align*}

for the chosen $c_1,c_2,c_3 > 0$.
Applying Theorem~\ref{thm:Atchade Bound} yields the result. 
\end{proof}

We can now complete the proof of Theorem~\ref{thm: mixing time bound} in Section~\ref{sec:MainResults}.

\begin{proof}(Theorem~\ref{thm: mixing time bound})
The proof follows by noting that for island partitions (Assumption~\ref{assumption:island partition}) $\max_j r_{\D_j} = 0$ and $r_{\scrD} = 0$. Consequently, $\max_j \zeta_{\D_j} = \bigO(1)$, $\zeta_{\scrD} = \bigO(1)$, and $\max_j \zeta_{\I_j} = \bigO(r_{\star})$
under Assumption~\ref{assump: T assumption}. Applying Lemma~\ref{lemma:Mixing Time} then yields the stated bound.
\end{proof}

\section{Sequential Monte Carlo for Endpoint-Conditioned CTMCs}
\label{sec: SMC}
The mixing time bound in Theorem~\ref{thm: mixing time bound} holds for the component Metropolis chain initialized according to a \textit{warm} starting distribution. However, obtaining a warm starting distribution is generally non-trivial and thus in most practical settings Theorem~\ref{thm: mixing time bound} does not apply. In this section we show how recent SMC complexity bounds given in \citet{Marion:2023} may be combined with our warm mixing time bound to provide finite sample error bounds for the SMC estimator \eqref{eqn: SMC estimator} of $p_{(T,\tQ)}(\y \mid \x)$. In particular, we will show that the complexity of the SMC sampler also grows at the same rate (at most exponentially in $r_{\star}$ rather than $r$). This demonstrates that the SMC algorithm introduced in Section~\ref{sec: SMC Background} (Algorithm~\ref{alg:smc}) provides a dramatic improvement in computational complexity over results available previously \cite{Mathews3:2025} for this problem, under conditions satisfied in most practical problems.

\subsection{Bounds for SMC}\label{sec:SMC results}
We state our main result for SMC as a consequence of the results given in Appendix~\ref{Appdx:SMC Results}. Before stating the main result (Theorem~\ref{thm: SMC Complexity Bound}) in Section~\ref{sec:SMC main result}, we briefly state a previous result for SMC obtained by \citet{Marion:2023} and then establish an upper bound on $\max_v L^2(\pi_v , \pi_{v-1})$ needed to apply the result of \citet{Marion:2023}. These two results will be used in conjunction with the warm mixing time bound obtained in Section~\ref{sec: mixing time bound} to establish our main result. 
\subsubsection{Notation and Previous Results}

\citet{Marion:2023,Marion:2018b} established finite sample complexity bounds for SMC in terms of the largest $2$-warm mixing time $\max_v\tau_{v}(\epsilon,2)$ and largest $L^2$ distance $\max_vL^2(\pi_v,\pi_{v-1})$. We will need this result below, along with the mixing time bound of the previous section, to show that SMC provides a randomized approximation scheme for $p_{(T,\tQ)}(\y \mid \x)$.

The following result bounds the relative error of the product estimator \eqref{eqn: SMC estimator} with high probability, and follows directly from  the bounds given in \citet{Marion:2018b}. (This statement is with respect to the probability measure of the full set of particles produced by the SMC algorithm; see \cite{Marion:2023} for details).
\begin{theorem}(\citet{Marion:2018b})
\label{thm:Marion SMC}
Let $\epsilon \in (0,1)$ and $\delta \in (0,1)$ be fixed and assume $\hat{\Pa}^{(N)}_0,\ldots,\hat{\Pa}^{(1)}_0 \iidsim \pi_0$. Let
\begin{enumerate}
\item $N\geq \max_v L^2(\pi_v,\pi_{v-1}) \max\big\{18 \log\big(5\delta^{-1} V\big), 20 \delta^{-1} \epsilon^{-2}  V^3\big\}$
\item $s \geq \max_v\; \tau_v \big( \frac{\delta}{5NV},\; 2\big).$
\end{enumerate}
Then with probability $1-\delta$
\begin{align*}
\left|\hat{z}_V(\hat{\Pa}^{1:N}_{1:V}) - z_V\right| \leq \epsilon z_V.
\end{align*}
\end{theorem}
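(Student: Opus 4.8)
The plan is to obtain the statement as a direct specialization of the abstract finite-sample SMC bounds of \citet{Marion:2018b} to the present sampler (Algorithm~\ref{alg:smc}), so the bulk of the work lies in matching our construction to their framework and verifying its hypotheses rather than re-deriving the underlying concentration estimates. First I would check that Algorithm~\ref{alg:smc} instantiates exactly the sampler analyzed in \citet{Marion:2018b}: the particles are initialized i.i.d.\ from $\pi_0$; the incremental weights are $w_v = q_v/q_{v-1}$, for which $\E_{\pi_{v-1}}[w_v] = z_v/z_{v-1}$; resampling is multinomial with these weights; and each mutation kernel $\tK_v$ is $\pi_v$-invariant and ergodic. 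The product estimator \eqref{eqn: SMC estimator} then telescopes against $z_V = z_0\prod_{v=1}^V(z_v/z_{v-1})$, so that $\hat{z}_V/z_V = \prod_{v=1}^V \big(\tfrac1N\sum_i w_v(\hat{\Pa}^{(i)}_{v-1})\big)/(z_v/z_{v-1})$, and I would analyze the relative error stage-by-stage through this factorization.

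The per-stage error is controlled by two ingredients, corresponding to the two hypotheses. The fluctuation of the average $\tfrac1N\sum_i w_v(\hat{\Pa}^{(i)}_{v-1})$ around $z_v/z_{v-1}$ is governed by the relative second moment of the weights, namely $\E_{\pi_{v-1}}[w_v^2]/(\E_{\pi_{v-1}}[w_v])^2 = L^2(\pi_v,\pi_{v-1})$; requiring $N \gtrsim \max_v L^2(\pi_v,\pi_{v-1})\cdot V^3\delta^{-1}\epsilon^{-2}$ (condition (1)) makes each such average concentrate tightly enough that the $V$ accumulated relative errors remain below $\epsilon$ with the desired probability. The second ingredient is the mutation bias: the resampled particles are not exactly $\pi_v$-distributed, and applying $\tK_v^s$ pushes their law toward $\pi_v$ in total variation, with residual error controlled by the warm mixing time. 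Choosing $s \geq \max_v \tau_v(\delta/(5NV),2)$ (condition (2)) forces this residual below $\delta/(5NV)$ at each of the $NV$ mutation steps, so that by a union bound the idealized (exactly $\pi_v$-distributed) and actual particle systems can be coupled with the total failure probability absorbed into $\delta$.

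I expect the warm-start mechanism to be the main obstacle, since it is what links the two conditions and produces the specific constants. The key claim is that, with high probability, the weighted empirical resampling distribution at each stage is $2$-warm with respect to $\pi_v$, which is precisely the $\omega=2$ appearing in condition (2). In the idealized limit a single resampled particle, drawn with probability $\propto w_v$ from $\pi_{v-1}$-distributed particles, has law exactly $\pi_v$; the finite-$N$ fluctuations of the empirical weights must then be controlled so that the warmness stays below $2$, and it is exactly this control that forces $N$ to scale with $\max_v L^2(\pi_v,\pi_{v-1})$ and with $V$ (to survive the union bound over stages). Once the $2$-warm property holds at each stage, the mixing-time input of condition (2) certifies near-stationarity of the mutated particles, and assembling the stage-wise weight concentration and mutation coupling via a telescoping/martingale argument together with a final union bound over the $V$ stages yields $|\hat{z}_V - z_V| \le \epsilon z_V$ with probability $1-\delta$.
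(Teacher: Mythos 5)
The paper does not actually prove this theorem: it is imported from \citet{Marion:2018b}, with the remark that it ``follows directly from the bounds given'' there and a pointer to \cite{Marion:2023} for the underlying probability space, so there is no internal proof to compare against. Your approach --- verifying that Algorithm~\ref{alg:smc} (i.i.d.\ initialization from $\pi_0$, multinomial resampling with weights $w_v = q_v/q_{v-1}$, ergodic $\pi_v$-invariant mutation kernels) instantiates that framework, checking the identities $\E_{\pi_{v-1}}[w_v] = z_v/z_{v-1}$ and $\E_{\pi_{v-1}}[w_v^2]/\big(\E_{\pi_{v-1}}[w_v]\big)^2 = L^2(\pi_v,\pi_{v-1})$, and then invoking the external theorem --- is exactly the move the paper makes, and your sketch of the internal mechanism (high-probability $2$-warmness of the resampled empirical law justifying the $\omega=2$ in condition (2), weight concentration via the relative second moment driving condition (1), and union bounds over stages) is faithful to the cited work's actual argument. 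The only content you cannot recover at this level of detail is the specific constants ($18$, $20$, $5$), but those are inherited from the external result rather than derived anywhere in this paper, so no gap is created.
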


\subsubsection{Bounding $L^2(\pi_v, \pi_{v-1})$}
A key step in obtaining an efficient SMC algorithm is specifying the distribution sequence in such a way that all neighboring distributions are sufficiently ``close".  For DSMs, we can do so by tempering the interaction terms (see Section~\ref{sec: SMC Background}) with the difference in successive (inverse) temperatures $\beta_{v} - \beta_{v-1}$ chosen to be sufficiently small.
\begin{restatable}{theorem}{ChisqTempering}
\label{thm:chi square bound tempering}
Let $0 = \beta_0 < \beta_1 < \ldots < \beta_V = 1$ be a sequence of inverse temperatures. For any $k$-neighbor DSM (Assumption~\ref{assumption:neighborhood context}), there exists a constant $c_1 \in (0,1)$, independent of $\x$ and $\y$, such that if
\begin{align*}
\Db :=  \beta_v - \beta_{v-1} \leq \frac{c_1}{\zeta(n)} \qquad \text{ for } \quad v = 1,\ldots,V
\end{align*}
then $\max_vL^2(\pi_{v},\pi_{v-1}) = \bigO(1)$. Consequently, $V = \bigO(n)$
temperatures and $N = \bigO(V^3) = \bigO(n^3)$ particles suffice to satisfy the first condition of Theorem~\ref{thm:Marion SMC}.
\end{restatable}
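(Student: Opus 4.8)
The plan is to express $L^2(\pi_v,\pi_{v-1})$ as a ratio of moments of the incremental weight $w_v(\Pa)=q_v(\Pa)/q_{v-1}(\Pa)$ appearing in Algorithm~\ref{alg:smc}, bound $w_v$ pathwise in terms of the jump count $m(\Pa)$, and then control those moments with the moment generating function (MGF) bound of Lemma~\ref{lemma:DSM MGF Subset Main Body}. Since $\pi_v(\Pa)/\pi_{v-1}(\Pa)=(z_{v-1}/z_v)\,w_v(\Pa)=w_v(\Pa)/\E_{\pi_{v-1}}[w_v]$, using the identity $z_v/z_{v-1}=\E_{\pi_{v-1}}[w_v]$, we have
\begin{align*}
L^2(\pi_v,\pi_{v-1})=\frac{\E_{\pi_{v-1}}[w_v(\Pa)^2]}{\big(\E_{\pi_{v-1}}[w_v(\Pa)]\big)^2},
\end{align*}
so it suffices to bound the second moment from above and the first moment from below.

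Next I would obtain deterministic pathwise bounds on $w_v$. From the tempered path density \eqref{eqn:Path Density Tempered} and the rate definition \eqref{eqn:CD Rates SMC}, the mutation-rate factors contribute $\prod_{j=1}^{m(\Pa)}\phi^{\Db}(b^j;\tx^{j-1}_{s^j})$, which lies in $[\phi_{\min}^{\Db\, m(\Pa)},\phi_{\max}^{\Db\, m(\Pa)}]$ because $x\mapsto x^{\Db}$ is increasing and $\phi_{\min}\le\phi\le\phi_{\max}$; the remaining factors contribute $\exp(-\sum_j\Delta^t(j)[\tg_v(\cdot;\x^j)-\tg_{v-1}(\cdot;\x^j)])$. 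Since $\tg_v(\cdot;\x)-\tg_{v-1}(\cdot;\x)$ is a sum over all $n$ sites of terms of size $\bigO(\Db)$ (as $\phi^{\beta_v}-\phi^{\beta_{v-1}}=\phi^{\beta_{v-1}}(\phi^{\Db}-1)=\bigO(\Db)$ uniformly over contexts and over $v\in\{1,\ldots,V\}$), and $\sum_j\Delta^t(j)=T$, this exponent is bounded in absolute value by $C\,Tn\,\Db$ for a model-dependent constant $C$, uniformly over all paths. Crucially, Assumption~\ref{assump: T assumption} gives $Tn=\bigO(r)=\bigO(\zeta)$, so this deterministic factor is $e^{\pm\bigO(\zeta\Db)}$, and
\begin{align*}
\phi_{\min}^{\Db\, m(\Pa)}\,e^{-CTn\Db}\le w_v(\Pa)\le \phi_{\max}^{\Db\, m(\Pa)}\,e^{CTn\Db}.
\end{align*}

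I would then bound the two moments. The upper bound gives $\E_{\pi_{v-1}}[w_v^2]\le e^{2CTn\Db}\,\E_{\pi_{v-1}}[\theta^{m(\Pa)}]$ with $\theta=\phi_{\max}^{2\Db}$; applying Lemma~\ref{lemma:DSM MGF Subset Main Body} with $\calA=\{1,\ldots,n\}$ (uniformly over $v$, since each $\pi_{v-1}$ is itself a $k$-neighborhood DSM with multiplier $\phi^{\beta_{v-1}}$, so the constants $\lambda(\cdot)$, $\phi_{\min},\phi_{\max}$ may be taken uniform) yields $\E_{\pi_{v-1}}[\theta^{m(\Pa)}]\le e^{\lambda(\theta)\zeta}$. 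The lower bound together with Jensen's inequality (convexity of $m\mapsto\phi_{\min}^{\Db m}$) gives $\E_{\pi_{v-1}}[w_v]\ge e^{-CTn\Db}\,\phi_{\min}^{\Db\,\E_{\pi_{v-1}}[m(\Pa)]}$, and the MGF bound also supplies $\E_{\pi_{v-1}}[m(\Pa)]=\bigO(\zeta)$. Combining,
\begin{align*}
L^2(\pi_v,\pi_{v-1})\le e^{4CTn\Db}\,e^{\lambda(\phi_{\max}^{2\Db})\zeta}\,\phi_{\min}^{-2\Db\,\E_{\pi_{v-1}}[m(\Pa)]}.
\end{align*}
Imposing $\Db\le c_1/\zeta$ makes $Tn\Db=\bigO(\zeta\Db)=\bigO(c_1)$ and $\Db\,\E_{\pi_{v-1}}[m(\Pa)]=\bigO(c_1)$, so the first and third factors are $\bigO(1)$; and since $\lambda(1)=0$ with $\lambda(\theta)=\bigO(\theta-1)$ as $\theta\downarrow 1$ (from the explicit constant in Lemma~\ref{lemma:DSM MGF Subset}) and $\phi_{\max}^{2\Db}-1=\bigO(\Db)=\bigO(c_1/\zeta)$, the middle factor obeys $\lambda(\phi_{\max}^{2\Db})\zeta=\bigO(c_1)$. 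Fixing $c_1\in(0,1)$ independent of $n,\x,\y$ then makes all three exponents uniformly bounded, giving $\max_vL^2(\pi_v,\pi_{v-1})=\bigO(1)$. For the final claim, uniform spacing $\Db=c_1/\zeta$ requires $V=\lceil\zeta/c_1\rceil=\bigO(\zeta)=\bigO(n)$ stages, and since $\max_vL^2=\bigO(1)$ the first requirement of Theorem~\ref{thm:Marion SMC} reduces to $N=\bigO(V^3)=\bigO(n^3)$ particles.

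The main obstacle I anticipate is the control of the exit-rate exponential term: one must verify that the weighted sum over all $n$ sites of the per-site rate differences grows only like $Tn\Db$ and that $Tn=\bigO(\zeta)$ under Assumption~\ref{assump: T assumption}, which is precisely the step ruling out a spurious factor exponential in $n$. A secondary technical point is ensuring $\lambda(\theta)$ vanishes linearly as $\theta\to1$ and that $\lambda$, the bound on $\E_{\pi_{v-1}}[m(\Pa)]$, and $\phi_{\min},\phi_{\max}$ can all be taken uniform over the tempering index $v$.
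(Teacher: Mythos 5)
Your overall architecture matches the paper's proof: the same identity $L^2(\pi_v,\pi_{v-1})=\E_{\pi_{v-1}}[w_v^2(\Pa)]/\big(\E_{\pi_{v-1}}[w_v(\Pa)]\big)^2$, pathwise bounds on $w_v$ in terms of $m(\Pa)$ and $\Db$, moment control via the MGF bound of Lemma~\ref{lemma:DSM MGF Subset}, and the choice $\Db\le c_1/\zeta$. Your treatment of the exit-rate exponential differs from the paper's but is valid: you bound it crudely by $C\,Tn\,\Db$ over all $n$ sites and invoke Assumption~\ref{assump: T assumption} to get $Tn=\bigO(r)=\bigO(\zeta)$, whereas the paper exploits the summation-by-parts form $\psi_v-\psi_{v-1}$ and the fact that each jump changes the rates of at most $\bigO(q(k+1))$ sites, giving the path-dependent bound $Tm(\Pa)\tilde{\delta}_v$ with $\tilde{\delta}_v=\bigO(\Db)$ (Lemmas~\ref{lemma:weight bound tempering} and~\ref{lemma:tilde delta bound DSM}). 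Either route controls that factor.

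However, there is a genuine gap at the crux of your numerator bound. You apply Lemma~\ref{lemma:DSM MGF Subset} with $\theta=\phi_{\max}^{2\Db}$ to get $\E_{\pi_{v-1}}[\theta^{m(\Pa)}]\le e^{\lambda(\theta)\zeta}$, and then claim $\lambda(\phi_{\max}^{2\Db})\zeta=\bigO(c_1)$ on the grounds that ``$\lambda(1)=0$ with $\lambda(\theta)=\bigO(\theta-1)$ as $\theta\downarrow1$, from the explicit constant in Lemma~\ref{lemma:DSM MGF Subset}.'' That property is false for the $\lambda$ actually defined there: $\lambda(\theta)=\max\{\lambda_1(\theta),\lambda_2(\theta),\lambda_3(\theta)\}$ with $\lambda_2(1)=\log\bigl(e^{2T\tilde{\delta}}\,\tgmax/\tgmin\bigr)>0$ and $\lambda_1(1)>0$, so $\lambda$ is bounded away from zero near $\theta=1$. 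Consequently your second-moment bound is $e^{\lambda(\phi_{\max}^{2\Db})\zeta}=e^{\Omega(\zeta)}$, which is exponentially large in $\zeta\asymp r$ (and $r\asymp n^{1/2}$ is permitted under Assumption~\ref{assump: T assumption}), not $\bigO(1)$. What is missing is an interpolation step that converts the MGF bound at $\theta=e$ into a bound near $\theta=1$ whose exponent genuinely scales with $\Db$: for $s=\log\theta\in[0,1]$, Jensen's inequality with the concave map $x\mapsto x^{s}$ gives $\E_{\pi_{v-1}}[\theta^{m(\Pa)}]=\E_{\pi_{v-1}}[(e^{m(\Pa)})^{s}]\le\bigl(\E_{\pi_{v-1}}[e^{m(\Pa)}]\bigr)^{s}\le e^{s\,\lambda_{v-1}(e)\zeta}$, and $s=2\Db\log\phi_{\max}=\bigO(c_1/\zeta)$ then closes the argument. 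This device (or an equivalent) is essential; the paper achieves the same effect by truncating at $l=2\lambda_{v-1}(e)\zeta$ jumps and combining Cauchy--Schwarz with Markov's inequality, so that the only quantity multiplying $\zeta$ in any exponent is $\theta_v=\bigO(\Db)$. With that single repair your proof goes through: the denominator bound via Jensen, the uniformity of the constants over the tempering index $v$, and the final counts $V=\bigO(\zeta)=\bigO(n)$ and $N=\bigO(V^3)$ are all sound.
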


We will do so by obtaining a bound on $L^2(\pi_{v},\pi_{v-1})$ as a function of $\Db$. Let
\begin{align}\label{eqn: rate diff}
\psi_v(\Pa) = \psi_v(\s(\Pa),\bb(\Pa)) 
:=  \sum^{m(\Pa)}_{l=1}t^l \Delta\tg_v(l)
\end{align}
%
and note
\begin{align*}
w_v(\Pa) := \frac{\tP_{(T,\tQ_v)}(\y,\Pa \mid \x)}{\tP_{(T,\tQ_{v-1})}(\y,\Pa \mid \x)} = e^{-T (\tg_v(\cdot; \y) - \tg_{v-1}(\cdot; \y))}
\prod^{m(\Pa)}_{l=1} \phi^{\Db_v}(b^l; \tx^{l-1}_{s^l}) \me^{\psi_v(\Pa)  - \psi_{v-1}(\Pa)}.
\end{align*}
%
Then we can write
\begin{align}
\label{eqn:l2 smc}
L^2(\pi_v, \pi_{v-1}) = \E_{\pi_{v-1}}[w_v^2(\Pa)]/ \left(\E_{\pi_{v-1}}[w_v(\Pa)] \right)^2 = \E_{\pi_{v-1}}[\tilde{w}_v^2(\Pa)]/ \left(\E_{\pi_{v-1}}[\tilde{w}_v(\Pa)] \right)^2,
\end{align}
where
\begin{align*}
\tilde{w}_v(\Pa) \defeq \prod^{m(\Pa)}_{l=1} \phi^{\Db_v}(b^l; \tx^{l-1}_{s^l}) \me^{\psi_v(\Pa)  - \psi_{v-1}(\Pa)}.
\end{align*}
%

Our approach will be to bound the numerator in \eqref{eqn:l2 smc} by finding a constant $l$ which bounds the total number of mutations $m(\Pa)$ with high probability, and a uniform bound $\tilde{w}_v(\Pa^l) \leq e^{\theta l}$ for paths of length $l$, in order to decompose
\begin{align*}
\E_{\pi_{v-1}}[\tilde{w}_v^2(\Pa)] & = \E_{\pi_{v-1}}[\tilde{w}_v^2(\Pa) \ind_{m(\Pa) \leq l} ] + \E_{\pi_{v-1}}[\tilde{w}_v^2(\Pa) \ind_{m(\Pa) > l}]
\\
& \leq e^{\theta l} + \E^{\frac{1}{2}}_{\pi_{v-1}}[e^{\theta m(\Pa)}] \Prob_{\pi_{v-1}}^{\frac{1}{2}}(m(\Pa) > l) ,
\end{align*}
where the inequality uses the Cauchy-Schwarz inequality.  We can then apply the MGF bound from Lemma~\ref{lemma:DSM MGF Subset} along with Markov's inequality to bound the right-hand term.  Lower bounding the denominator $\left(\E_{\pi_{v-1}}[\tilde{w}_v(\Pa)] \right)^2$ follows by considering only length $r$ paths:
\begin{align*}
\left(\E_{\pi_{v-1}}[\tilde{w}_v(\Pa)] \right)^2 \geq \left(\E_{\pi_{v-1}}[\tilde{w}_v(\Pa) \mathbbm{1}_{m(\Pa) = r}(\Pa)] \right)^2.
\end{align*}
We first obtain the bound on $\tilde{w}_v(\Pa)$ as a function of the path length $m(\Pa)$.

\begin{lemma}
\label{lemma:weight bound tempering}
Let 
\begin{align*}
\tilde{\delta}_v \defeq 2q(k+1)\gamma_{\max} \max(1,\phi_{\max}) [\max(1,\phi^{\Db_v}_{\max}) - \min(1,\phi^{\Db_v}_{\min})  ].
\end{align*}
Then
\begin{align*}
\Prob_{\pi_{v-1}}\left(\phi^{m(\Pa)\Db_v }_{\min} e^{-Tm(\Pa) \tilde{\delta}_v}  \leq  \tilde{w}_v(\Pa) \leq \phi^{m(\Pa)\Db_v}_{\max} e^{Tm(\Pa) \tilde{\delta}_v } \right) = 1.
\end{align*}
\end{lemma}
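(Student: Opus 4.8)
The plan is to prove the two-sided bound by factoring $\tilde{w}_v$ into its multiplicative $\phi^{\Db_v}$ factor and its exponential factor, and controlling each on a \emph{per-jump} basis, mirroring the argument used for $\hat w$ in Lemma~\ref{lemma: hat weight uniform bound}. For the multiplicative factor I would simply observe that $\prod_{l=1}^{m(\Pa)}\phi^{\Db_v}(b^l;\tx_{s^l}^{l-1})$ is a product of exactly $m(\Pa)$ terms, each lying in $[\phi_{\min}^{\Db_v},\phi_{\max}^{\Db_v}]$ since $\Db_v>0$; this yields the factors $\phi_{\min}^{m(\Pa)\Db_v}$ and $\phi_{\max}^{m(\Pa)\Db_v}$ in the lower and upper bounds, deterministically for every path.

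The substance of the proof is the bound $|\psi_v(\Pa)-\psi_{v-1}(\Pa)|\le T\,m(\Pa)\,\tilde\delta_v$ on the exponent. Writing $\psi_v(\Pa)-\psi_{v-1}(\Pa)=\sum_{l=1}^{m(\Pa)}t^l\,D_l$ with $D_l:=\Delta\tg_v(l)-\Delta\tg_{v-1}(l)$, I would read $D_l$ as the change, across the $l$-th jump, of the difference $h(\x):=\tg_v(\cdot;\x)-\tg_{v-1}(\cdot;\x)$ between the total exit rates of the two tempered models, i.e. $D_l=h(\x^l)-h(\x^{l-1})$. The key locality observation is that $\x^l$ and $\x^{l-1}$ differ only at site $s^l$, so the site-$i$ summand of $h$ changes only when $i=s^l$ or $s^l\in\C_i$; under the $k$-neighborhood assumption (Assumption~\ref{assumption:neighborhood context}) there are at most $k+1$ such sites.

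For each affected site the summand $\sum_{b\neq x_i}\gamma_i(b;x_i)\bigl[\phi^{\beta_v}(b;\tx_i)-\phi^{\beta_{v-1}}(b;\tx_i)\bigr]$ is controlled via $\phi^{\beta_v}-\phi^{\beta_{v-1}}=\phi^{\beta_{v-1}}(\phi^{\Db_v}-1)$, together with $\phi^{\beta_{v-1}}\le\max(1,\phi_{\max})$ (as $\beta_{v-1}\in[0,1]$) and $|\phi^{\Db_v}-1|\le\max(1,\phi^{\Db_v}_{\max})-\min(1,\phi^{\Db_v}_{\min})$. Summing the at most $q$ terms $b\neq x_i$, over the at most $k+1$ affected sites, and accounting for the two states $\x^l,\x^{l-1}$ whose difference forms $D_l$, produces the factors $2,(k+1),q,\gamma_{\max}$ and yields $|D_l|\le\tilde\delta_v$ exactly. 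This constant-matching step is the one I expect to demand the most care, and it is essentially the analogue of the bound invoked through Lemma~\ref{lemma: Uniform Bounds on Delta} in the proof of Lemma~\ref{lemma: hat weight uniform bound}. Finally, using $t^l\le T$ gives $|\psi_v-\psi_{v-1}|\le\sum_{l=1}^{m(\Pa)}T\tilde\delta_v=T\,m(\Pa)\,\tilde\delta_v$, hence $\me^{\psi_v-\psi_{v-1}}\in[\me^{-T m(\Pa)\tilde\delta_v},\me^{T m(\Pa)\tilde\delta_v}]$. Multiplying this with the multiplicative-factor bound gives the stated inequality, and since every step holds deterministically for an arbitrary path, the event has probability one under $\pi_{v-1}$ (indeed under any measure on $\scrP$).
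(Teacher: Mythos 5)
Your proposal is correct and follows essentially the same route as the paper's proof: bound each $\phi^{\Db_v}$ factor by $[\phi_{\min}^{\Db_v},\phi_{\max}^{\Db_v}]$, then control $|\Delta\tg_v(l)-\Delta\tg_{v-1}(l)|$ by noting that a single-site jump perturbs at most $2q(k+1)$ summands (locality under Assumption~\ref{assumption:neighborhood context}), each bounded via the factorization $\phi^{\beta_v}-\phi^{\beta_{v-1}}=\phi^{\beta_{v-1}}(\phi^{\Db_v}-1)$ together with $\phi^{\beta_{v-1}}\le\max(1,\phi_{\max})$, and finish with $t^l\le T$. Your constant-matching step and the deterministic (probability-one) conclusion coincide exactly with the paper's argument, merely organized through the auxiliary function $h(\x)=\tg_v(\cdot;\x)-\tg_{v-1}(\cdot;\x)$ rather than the paper's explicit double sum.
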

\begin{proof}
First notice
\begin{align*}
\phi^{m(\Pa)\Db_v }_{\min} \leq \prod^{m(\Pa)}_{l=1} \phi^{\Db_v}(b^l; \tx^{l-1}_{s^l}) \leq \phi^{m(\Pa)\Db_v}_{\max}.
\end{align*}
Next, write
\begin{align*}
\Delta\tg_v(l) - \Delta\tg_{v-1}(l)  
&= 
\sum^n_{i=1} 
\sum^n_{i=1} \sum_{a \neq x^l_i} (\tg_v(a; x^l_i) - \tg_{v-1}(a; x^l_i)) \\
&- \sum^n_{i=1}\sum_{a \neq x^{l-1}_i} (\tg_v(a; x^{l-1}_i) - \tg_{v-1}(a; x^{l-1}_i)).
\end{align*}
Note that there are at most $2q(k+1)$ non-zero  summands since $\text{d}_{\text{H}}(\x^{l-1},\x^l)=1$. In addition, for any $l \in \{1,\ldots,m(\Pa)\}$:
\begin{align*}
\gamma_{\max}\phi^{\beta_{v-1}}_{\max}(\min(1,\phi_{\min}^{\Db_v})-1) \leq \tg_v(a; x^l_i) - \tg_{v-1}(a; x^l_i)  \leq \gamma_{\max}\phi^{\beta_{v-1}}_{\max}( \max(1,\phi^{\Db_v}_{\max}) -1).
\end{align*}
%
By the triangle inequality
\begin{align*}
\abs{\Delta\tg_v(l) - \Delta\tg_{v-1}(l)} & \leq 2q(k+1)\gamma_{\max} \max(1,\phi_{\max}) [\max(1,\phi^{\Db_v }_{\max}) - \min(1,\phi^{\Db_v}_{\min})  ] = \tilde{\delta}_v,
\end{align*}
using $\phi^{\beta_{v-1}}_{\max} \leq \max(1,\phi_{\max})$ since $\beta_{v-1} \in (0,1)$. The stated bound follows by the definition of $\tilde{w}_v(\Pa)$.
\end{proof}
The next lemma upper bounds the quantity  $\tilde{\delta}_v$ introduced in Lemma~\ref{lemma:weight bound tempering} by $c\Db $ for constant $c \in (0,\infty)$ that depends on the DSM rates. This upper bound will be used to prove Theorem~\ref{thm:chi square bound tempering}.
\begin{lemma}
\label{lemma:tilde delta bound DSM}
Suppose $\Db_v \leq 1 / \log(1+\phi_{\max})$. Then
\begin{align*}
[\max(1,\phi^{\Db_v}_{\max}) - \min(1,\phi^{\Db_v}_{\min})  ] \leq 
\log\left( \frac{\max(1,\phi^2_{\max})}{\min(1,\phi_{\min})} \right)\Db_v  \defeq  \bar{\phi}\Db_v,
\end{align*} 
and therefore $\tilde{\delta}_v \leq 2q(k+1) \gamma_{\max}\max(1,\phi_{\max}) \bar{\phi}\Db_v$.
\end{lemma}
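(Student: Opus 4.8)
The plan is to reduce the bracketed quantity to two scalar estimates by splitting at the value $1$. Writing
\begin{align*}
\max(1,\phi^{\Db_v}_{\max}) - \min(1,\phi^{\Db_v}_{\min}) = \big[\max(1,\phi^{\Db_v}_{\max}) - 1\big] + \big[1 - \min(1,\phi^{\Db_v}_{\min})\big],
\end{align*}
both bracketed terms are nonnegative, and each vanishes unless $\phi_{\max} \ge 1$ (first term) or $\phi_{\min} \le 1$ (second term), respectively. I would then handle the two terms separately, noting that the target constant factors as $\bar\phi = \log\max(1,\phi^2_{\max}) - \log\min(1,\phi_{\min}) = 2\max(0,\log\phi_{\max}) + \max(0,-\log\phi_{\min})$, so it suffices to bound the first term by $2\max(0,\log\phi_{\max})\Db_v$ and the second by $\max(0,-\log\phi_{\min})\Db_v$.

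For the first term, I would assume $\phi_{\max}\ge 1$ (otherwise it is zero) and write $\phi^{\Db_v}_{\max} - 1 = e^{u} - 1$ with $u := \Db_v\log\phi_{\max}\ge 0$. The hypothesis $\Db_v \le 1/\log(1+\phi_{\max})$ gives $u \le \log\phi_{\max}/\log(1+\phi_{\max}) < 1$, since $\phi_{\max} < 1 + \phi_{\max}$. Convexity of $e^{u}$ then yields the chord bound $e^{u} - 1 \le (e-1)u \le 2u$ on $[0,1]$, so that $\phi^{\Db_v}_{\max} - 1 \le 2\Db_v\log\phi_{\max}$, as required.

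For the second term, I would assume $\phi_{\min}\le 1$ (otherwise it is zero) and set $v := -\Db_v\log\phi_{\min} \ge 0$, so that $1 - \phi^{\Db_v}_{\min} = 1 - e^{-v} \le v = -\Db_v\log\phi_{\min}$, using the elementary inequality $1 - e^{-v}\le v$. Summing the two estimates gives the claimed bound $\max(1,\phi^{\Db_v}_{\max}) - \min(1,\phi^{\Db_v}_{\min}) \le \bar\phi\,\Db_v$, and substituting this into the definition of $\tilde\delta_v$ from Lemma~\ref{lemma:weight bound tempering} yields $\tilde\delta_v \le 2q(k+1)\gamma_{\max}\max(1,\phi_{\max})\bar\phi\,\Db_v$. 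The only real care needed is the case bookkeeping around the value $1$ and verifying that the temperature-step hypothesis keeps the exponent $u$ inside $[0,1]$ so that the linear chord bound (with the harmless slack from $e-1$ to $2$) applies; the remaining steps are elementary.
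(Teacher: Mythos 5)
Your proof is correct and follows essentially the same route as the paper: the same decomposition of the bracketed quantity at the value $1$, the same bound $1-e^{-v}\le v$ for the $\phi_{\min}$ term, and the same use of the hypothesis $\Db_v \le 1/\log(1+\phi_{\max})$ to keep the exponent below $1$ for the $\phi_{\max}$ term. The only (immaterial) difference is that you justify $e^{u}-1\le 2u$ on $[0,1]$ via the convexity chord bound $e^{u}-1\le(e-1)u$, whereas the paper uses $e^{x}\le 1+x+x^2\le 1+2x$ for $x<1$.
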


\begin{proof}
Write
\begin{align*}
\max(1,\phi^{\Db_v}_{\max}) - \min(1,\phi^{\Db_v}_{\min})  = (\max(1,\phi^{\Db_v}_{\max}) - 1) + (1 - \min(1,\phi^{\Db_v}_{\min}) ).
\end{align*}
Focusing on the right-hand term, when $\phi_{\min}^{\Db_v} <1$, using $1 + x \leq e^x$ we obtain
\begin{align*}
1 - \phi^{\Db_v}_{\min} \leq \log(1/\min(1,\phi_{\min}))\Db_v.
\end{align*}
Next since $e^x \leq 1 + x + x^2 \leq 1+2x$ for $x < 1$ and  $\Db_v \leq \frac{1}{\log(1+\phi_{\max})}$ we have
\begin{align*}
\phi^{\Db_v}_{\max} - 1 \leq 2\log(\max(1,\phi_{\max}))\Db_v .
\end{align*}
It is straightforward to check that this implies the result.
\end{proof}

With these lemmas in hand, we are now ready to complete the proof of Theorem~\ref{thm:chi square bound tempering}.
\begin{proof}(Theorem~\ref{thm:chi square bound tempering})
Let
\begin{align*}
\theta_v &= 4\Db_v 2T\gamma_{\max}\max(1,\phi_{\max})q(k+1) \log\Big( \frac{\max(1,\phi^2_{\max})}{\min(1,\phi_{\min})} \Big) \\
&+   4\Db_v\max\left\{\log\left(\phi^{-1}_{\min}\right),\log(\phi_{\max})\right\}.
\end{align*}
We have $\theta_v > 0$ (unless $\phi \equiv 1$). For any positive integer $l$, we have by Lemmas~\ref{lemma:weight bound tempering} and \ref{lemma:tilde delta bound DSM}

\begin{align}
\E_{\pi_{v-1}}[\tilde{w}_v^2(\Pa)] & \leq \E_{\pi_{v-1}}[e^{\frac{\theta_v m(\Pa) }{2}} \ind_{m(\Pa) \leq l} ] + \E_{\pi_{v-1}}[e^{\frac{\theta_v m(\Pa) }{2}}\ind_{m(\Pa) > l}] \nonumber \\ &\leq e^{\frac{\theta_v l }{2}} + \E^{\frac{1}{2}}_{\pi_{v-1}}[e^{\theta_v m(\Pa)}] \Prob_{\pi_{v-1}}^{\frac{1}{2}}(m(\Pa) > l) \label{eqn:l2 bound first ineq}
\end{align}
where \eqref{eqn:l2 bound first ineq} follows by the Cauchy-Schwarz inequality. We now obtain an upper bound on the right hand side of \eqref{eqn:l2 bound first ineq} using Markov's inequality,
%
%
\begin{equation*}
\E^{\frac{1}{2}}_{\pi_{v-1}}[e^{\theta_v m(\Pa)}] \Prob_{\pi_{v-1}}^{\frac{1}{2}}(m(\Pa) > l) 
\leq \E_{\pi_{v-1}}[e^{\theta_v m(\Pa)}]e^{-\frac{\theta_v l }{2}} \\ 
=  e^{\frac{\theta_v l }{2}} (\E_{\pi_{v-1}}[e^{\theta_v m(\Pa)}]e^{-\theta_v l}  ).
\end{equation*}
Now let $l = 2\lambda_{v-1}(e) \zeta$, where $\lambda_{v-1}(\cdot)$ is the function $\lambda(\cdot)$  defined in Lemma~\ref{lemma:DSM MGF Subset} for the DSM with rates $\tg_{v-1}$, and recall $\zeta = r + rT + (n-r)T^2$.
Applying Lemma~\ref{lemma:DSM MGF Subset} we obtain
\begin{align*}
\E_{\pi_{v-1}}[e^{\theta_v m(\Pa)}]e^{-\theta_v2\lambda_{v-1}(e) \zeta }    \leq  e^{\theta_v \lambda_{v-1}(e)\zeta  }    e^{-\theta_v 2 \lambda_{v-1}(e) \zeta } \leq e^{-\theta_v \lambda_{v-1}(e) \zeta}  \leq 1.
\end{align*}
Hence, $\E_{\pi_{v-1}}[\tilde{w}_v^2(\Pa)] \leq 2e^{\theta_v \lambda_{v-1}(e) \zeta} $. By Lemmas~\ref{lemma:weight bound tempering} and \ref{lemma:tilde delta bound DSM} we have
%
\begin{align*}
\left(\E_{\pi_{v-1}}[\tilde{w}_v(\Pa)]\right)^2 \geq \left(\E_{\pi_{v-1}}[\tilde{w}_v(\Pa) \mathbbm{1}_{m(\Pa) \leq l}(\Pa)]\right)^{2}  \geq e^{-\frac{\theta_v l}{2}}(1 - \Prob_{\pi_{v-1}}(m(\Pa) > l))^2.
\end{align*}
By Markov's inequality and Lemma~\ref{lemma:DSM MGF Subset}, we have $\Prob_{\pi_{v-1}}(m(\Pa) > l) \leq \E_{\pi_{v-1}}[e^{m(\Pa)}] e^{-l} \leq e^{\lambda_{v-1}(e) \zeta} e^{-l} = e^{-\lambda_{v-1}(e) \zeta}$ since $l = 2\lambda_{v-1}(e) \zeta$. Thus we obtain 
\begin{align*}
1 - \Prob_{\pi_{v-1}}(m(\Pa) > l) \geq 1 - e^{-\lambda_{v-1}(e) \zeta} \geq  1 - e^{-1}.
\end{align*}
Consequently,
\begin{align*}
L^2(\pi_{v},\pi_{v-1}) = \frac{\E_{\pi_{v-1}}[\tilde{w}_v^2(\Pa)]}{\left(\E_{\pi_{v-1}}[\tilde{w}_v(\Pa)]\right)^2} \leq  2\left(\frac{e}{e-1}\right)^2 e^{2\theta_v \lambda_{v-1}(e) \zeta} \leq 2 e^3,
\end{align*}
where the final inequality follows by the definition of $\theta_v$ and choosing
\begin{align*}
\Db_v &\leq \frac{1}{\zeta} \left( 8\lambda_{v-1}(e) \log\left( \frac{\max(1,\phi^2_{\max})}{\min(1,\phi_{\min})} \right)  \left( 1+T\gamma_{\max}\max(1,\phi_{\max})q(k+1)\right) \right)^{-1},
\end{align*}
and we use $\max(\log\left(\phi^{-1}_{\min}\right),\log(\phi_{\max})) \leq \log\left( \max(1,\phi^2_{\max}) /\min(1,\phi_{\min}) \right)$.
\end{proof}

\subsubsection{Proof of Theorem~\ref{thm:SMC bound}}
\label{sec:SMC main result}
%
%
%
The first key requirement to apply Theorem~\ref{thm:Marion SMC} is a bound on the largest 2-warm mixing time $\max_v\tau_v(\epsilon,2)$ of the mutation MCMC kernels when initialized according to a warm start. The mixing time bound stated in Lemma~\ref{lemma:Mixing Time} (see Section~\ref{sec: mixing time bound}) provides a bound on $\max_v\tau_v$ for arbitrary DSMs. The second key requirement needed to apply Theorem~\ref{thm:Marion SMC} is a bound on $\max_v L^{2}(\pi_v,\pi_{v-1})$, which is provided by Theorem~\ref{thm:chi square bound tempering}. Combining these two results gives 
a bound on the runtime $NVs$ of the SMC algorithm necessary to approximate $p_{(T,\tQ)}(\y \mid \x)$ with $\epsilon$-relative error, provided by the following theorem.
%
\begin{theorem}
\label{thm: SMC Complexity Bound}
For a $k$-neighbor DSM (Assumption~\ref{assumption:neighborhood context}), the SMC algorithm (Algorithm~\ref{alg:smc}) 
provides a randomized approximation scheme for $p_{(T,\tQ)}(\y \mid \x)$ in time
\begin{align*}
\bigO\left( \mathrm{poly}\left(\epsilon^{-1},\zeta,B\right)  \exp(c \cdot  ( (2\max_j \zeta_{\I_j} + \max_j\zeta_{\D_j} + \zeta_{\mathscr{D}})(1+T) + (\zeta_{\partial \D_j} + \zeta_{\partial \mathscr{D}})T  ) )\right),
\end{align*}
where $\zeta_{\calA}$ was defined in \eqref{eqn: zeta definition} for any  subset $\calA \subset \{1,\ldots,n\}$ and $c = c(k,\phi_{\star},\tg_{\star}) \in (0,\infty)$ is a model-dependent constant such that 
\begin{align*}
c(k,\phi_{\star},\tg_{\star}) = \bigO(k\log(\phi_{\star})\log(\tg_{\star}))
\end{align*}
assuming $\max\{\tg_{\max},e\} \ll \phi_{\star}$, where $\phi_{\star} = \phi_{\max}/\phi_{\min}$, $\gamma_{\star} = \gmax/\gmin$ and $\tg_{\star} = \phi_{\star} \gamma_{\star}$.
%
%
\end{theorem}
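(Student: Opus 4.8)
The plan is to assemble three ingredients already established: the finite-sample SMC guarantee of \citet{Marion:2018b} (Theorem~\ref{thm:Marion SMC}), the $L^2$-closeness bound for the tempered sequence (Theorem~\ref{thm:chi square bound tempering}), and the warm mixing-time bound for the block Metropolis mutation kernel (Lemma~\ref{lemma:Mixing Time}). Since the runtime of Algorithm~\ref{alg:smc} is $NVs$, I would bound each factor in turn, keeping careful track of which contributions are polynomial and which are genuinely exponential in the $\zeta$-quantities, and then multiply. Throughout I take the failure probability $\delta \in (0,1)$ to be a fixed constant, as is standard for a randomized approximation scheme.

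First I would control the number of temperatures $V$ and particles $N$. Theorem~\ref{thm:chi square bound tempering} shows that choosing the increments $\Db_v \leq c_1/\zeta(n)$ yields $\max_v L^2(\pi_v,\pi_{v-1}) = \bigO(1)$ with $V = \bigO(n)$; feeding this into the first condition of Theorem~\ref{thm:Marion SMC} gives $N = \bigO(\max\{\log(\delta^{-1}V),\,\delta^{-1}\epsilon^{-2}V^3\}) = \mathrm{poly}(\epsilon^{-1},\zeta)$. Both $N$ and $V$ are therefore polynomial, so these two factors contribute only to the $\mathrm{poly}(\epsilon^{-1},\zeta,B)$ prefactor of the claimed bound.

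Next I would bound the per-particle number of mutation steps $s$. The second condition of Theorem~\ref{thm:Marion SMC} requires $s \geq \max_v \tau_v(\epsilon',2)$ with $\epsilon' = \delta/(5NV)$, so I would apply Lemma~\ref{lemma:Mixing Time} at each $v$ with $\omega = 2$. Since $1/\epsilon' = 5NV/\delta$ is polynomial in $\epsilon^{-1},\zeta,B$ and $c_3$ is a model-dependent constant, the prefactor $B\log(80\omega^4/(\epsilon')^2)(60B\omega^2/(\epsilon')^2)^{c_3}$ is polynomial and is absorbed into $\mathrm{poly}(\epsilon^{-1},\zeta,B)$. The only super-polynomial factor is the exponential term of Lemma~\ref{lemma:Mixing Time}, namely $\exp(c_1 Z + c_2 T(Z + W))$ with $Z = 2\max_j\zeta_{\I_j} + \max_j\zeta_{\D_j} + \zeta_{\scrD}$ and $W = \max_j\zeta_{\partial\D_j} + \zeta_{\partial\scrD}$. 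Using $c_1 Z + c_2 T(Z+W) \leq \max(c_1,c_2)\,(Z(1+T) + WT)$ reproduces exactly the exponent in the statement upon setting $c := \bigO(\max(c_1,c_2))$.

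It then remains to verify the scaling $c = \bigO(k\log\phi_\star\log\tg_\star)$. Recalling $c_1 = 2\lambda(e)\log\phi_\star$ and $c_2 = 4\lambda(e)(\tilde{\delta}+\delta)$ from Lemma~\ref{lemma:Mixing Time}, I would substitute the explicit forms of $\lambda(e)$ (Lemma~\ref{lemma:DSM MGF Subset Main Body}) and of the uniform rate-difference bounds $\tilde{\delta},\delta$ (Lemma~\ref{lemma: Uniform Bounds on Delta}). Each of these carries a factor of $k$ from the $k+1$ sites of a neighborhood context and depends on the rate magnitudes and on $\log\phi_\star$; under the regime $\max\{\tg_{\max},e\} \ll \phi_\star$ the dominant contribution collapses to $\bigO(k\log\phi_\star\log\tg_\star)$. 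I expect this final step to be the main obstacle: the bound itself follows almost mechanically from Theorem~\ref{thm:Marion SMC}, but obtaining the clean closed form for $c$ requires threading the model-dependent constants $\lambda(e),\tilde{\delta},\delta$ through $c_1$ and $c_2$ and using the stated regime to isolate the dominant term, while confirming that the subleading and $T$-dependent boundary contributions $W$ are genuinely absorbed.
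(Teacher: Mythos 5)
Your proposal is correct and follows essentially the same route as the paper's proof: verify the two conditions of Theorem~\ref{thm:Marion SMC} via Theorem~\ref{thm:chi square bound tempering} (giving polynomial $V$ and $N$) and Lemma~\ref{lemma:Mixing Time} applied to each tempered target $\pi_v$ with $\omega = 2$ and accuracy $\delta/(5NV)$, then extract the constant scaling $c = \bigO(k\log(\phi_{\star})\log(\tg_{\star}))$ from $\lambda(e) = \bigO(\log(\tg_{\star}))$ (Remark~\ref{remark: lambda complexity}) and $\tilde{\delta} = \bigO(k)$. Your explicit merging of the two exponential factors via $c_1 Z + c_2 T(Z+W) \leq \max(c_1,c_2)\left(Z(1+T) + WT\right)$ is a detail the paper leaves implicit, but the argument is the same.
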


The proof of Theorem~\ref{thm: SMC Complexity Bound} is deferred to Appendix~\ref{Appdx:SMC Results}. Theorem~\ref{thm:SMC bound} is a special case of Theorem~\ref{thm: SMC Complexity Bound}:
\begin{proof}(Theorem~\ref{thm:SMC bound})
    Recall that $\zeta_{\calA} := r_{\calA} + r_{\calA}T + (n_{\calA} - r_{\calA}) T^2$ for $\calA \subset \{1,\ldots,n\}$. Choosing $\mathscr{I}$ to be an island partition (Assumption~\ref{assumption:island partition}), we have that $\max_j r_{\D_j} = 0$ and $\max_j r_{\I_j} = r_\star$. Therefore, under Assumption~\ref{assump: T assumption}
    \begin{align*}
        (2\max_j \zeta_{\I_j} + \max_j\zeta_{\D_j} + \zeta_{\mathscr{D}})(1+T) + (\zeta_{\partial \D_j} + \zeta_{\partial \mathscr{D}})T   = \bigO(r_\star).
    \end{align*}
\end{proof}

Critically, the bound \eqref{eqn:SMC main thm body} in Theorem~\ref{thm:SMC bound} does \textit{not} grow exponentially in the observed mutation count $r$ but rather in the max island size $r_\star$.
As a result, the SMC algorithm provides a substantial improvement in computational complexity over the base importance sampler studied in \cite{Mathews3:2025}, which scales
exponentially in the \textit{sum} of the mutation counts. Indeed, the following Proposition is a direct Corollary of Theorem 3 in \cite{Mathews3:2025}.
\begin{proposition}(\citet{Mathews3:2025})
\label{prop: IS island complexity}
The running time of the importance sampler studied in \citet{Mathews3:2025} grows as
\begin{align*}
\bigO\left(\exp(r_{\star}^{\text{IS}}) \epsilon^{-2} \right),
\end{align*}
where $r_{\star}^{\text{IS}} \defeq \sum_{\{j: r_{\I_{j}} > 1\}} r_{\I_{j}}$ is the total number of mutations observed in all islands of size greater than one. 
\end{proposition}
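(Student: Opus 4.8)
The plan is to obtain the bound as a direct consequence of Theorem~3 of \cite{Mathews3:2025}, which controls the running time of the importance sampler by $\bigO(L^2(\pi,\mu)\,\epsilon^{-2})$ up to polynomial factors: the number of proposals required for $\epsilon$-relative error with constant failure probability scales as $N=\bigO(L^2(\pi,\mu)\epsilon^{-2})$, and each endpoint-conditioned ISM path is drawn and weighted in polynomial time. It therefore suffices to show $L^2(\pi,\mu)=\exp(\bigO(r_\star^{\text{IS}}))$.

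The key step is to factorize $L^2(\pi,\mu)$ across the blocks of an island partition (Assumption~\ref{assumption:island partition}). Since the ISM proposal $\mu=\mu_1\times\cdots\times\mu_n$ is a product over sites, it factors over blocks as $\mu=\prod_j\mu_{\I_j}$; because edge sites carry no observed mutations, the endpoint-conditioned DSM target interacts across blocks only through a uniformly bounded cross-block weight, controlled exactly as in Lemma~\ref{lemma: hat weight uniform bound} (where $\theta_{\scrD}$ is polynomial in $B$ and $\epsilon^{-1}$ for island partitions). Hence $L^2(\pi,\mu)$ is bounded, up to polynomial factors, by a product $\prod_j L^2(\pi_{\I_j},\mu_{\I_j})=\prod_j\bigl(1+\chi^2(\pi_{\I_j}\,\|\,\mu_{\I_j})\bigr)$ of per-island divergences.

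I would then bound each island factor by its observed mutation count $r_{\I_j}$ in two regimes. For a block containing at least two mutations with overlapping contexts, the context multipliers attached to distinct required jumps do not cancel against the normalizing constant in the endpoint-conditioned law, and $L^2(\pi_{\I_j},\mu_{\I_j})$ grows as $\exp(\bigO(r_{\I_j}))$, matching the bound of \cite{Mathews3:2025}. For a block with $r_{\I_j}\le 1$, by contrast, the lone (or absent) required jump carries at most a constant context factor that is absorbed by the normalizing constant, so $\pi_{\I_j}$ and $\mu_{\I_j}$ agree to leading order and differ only through rare extra jumps; Lemma~\ref{lemma:DSM MGF Subset Main Body} bounds the aggregate chi-square contribution of all such blocks in terms of the expected number of extra jumps $\bigO(rT+(n-r)T^2)$, which is $\bigO(1)$ under Assumption~\ref{assump: T assumption}. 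Multiplying across blocks, the $r_{\I_j}\le 1$ blocks contribute only an $\bigO(1)$ factor while the remainder contribute $\exp\bigl(\bigO(\sum_{\{j:r_{\I_j}>1\}}r_{\I_j})\bigr)=\exp(\bigO(r_\star^{\text{IS}}))$, giving $L^2(\pi,\mu)=\exp(\bigO(r_\star^{\text{IS}}))$ and hence the stated running time.

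The main obstacle is precisely the second regime: showing that blocks with one or no observed mutations contribute only polynomially rather than exponentially. This is exactly what distinguishes the exponent $r_\star^{\text{IS}}=\sum_{\{j:r_{\I_j}>1\}}r_{\I_j}$ from the cruder total mutation count $r=\sum_j r_{\I_j}$, and it hinges on two facts: that a single required mutation induces no genuine context interaction, so its $\phi$-multiplier cancels against the normalizing constant in the conditional law; and that under Assumption~\ref{assump: T assumption} the aggregate extra-jump count $rT+(n-r)T^2$ stays bounded as $n\to\infty$. Both are controlled by the moment generating function bound of Lemma~\ref{lemma:DSM MGF Subset Main Body}, so that the entire argument reduces to careful bookkeeping of the per-island factors supplied by Theorem~3 of \cite{Mathews3:2025}.
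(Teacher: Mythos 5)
The paper does not actually prove this Proposition: it is attributed to \citet{Mathews3:2025} and introduced with the sentence that it ``is a direct Corollary of Theorem 3 in \cite{Mathews3:2025}'', i.e.\ the island-structured exponent $r_{\star}^{\text{IS}}$ is imported wholesale from that theorem. Your proposal instead reads Theorem~3 only as the generic statement ``runtime $=\bigO(L^2(\pi,\mu)\,\epsilon^{-2})$'' and then tries to re-derive the island-structured bound on $L^2(\pi,\mu)$ from the present paper's machinery. That re-derivation has a genuine gap at its central step. The claim that blocks with $r_{\I_j}\le 1$ contribute only an aggregate $\bigO(1)$ factor cannot be obtained from Lemma~\ref{lemma:DSM MGF Subset Main Body} together with the paper's uniform weight-ratio bounds: those tools bound the MGF of the \emph{total} jump count (required jumps included) and bound the weight by $\phi_{\min}^{m}$ below and $\phi_{\max}^{m}$ above, so for a single-mutation block they yield at best $L^2(\pi_{\I_j},\mu_{\I_j})\le \phi_{\star}^{2}\,e^{\bigO(\zeta_{\I_j})}$ --- a constant factor strictly larger than $1$. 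Multiplying over the up-to-$r$ such blocks gives $\exp(\bigO(r))$, not $\bigO(1)$, which erases exactly the distinction between $r$ and $r_{\star}^{\text{IS}}$ that the Proposition asserts (consider $r$ pairwise non-interacting mutations: then $r_{\star}^{\text{IS}}=0$ and the claim is a polynomial bound, while your tools deliver only $\exp(\bigO(r))$). What is needed is the cancellation you gesture at --- that for a lone observed mutation the $\phi$-multiplier of the required jump is the \emph{same deterministic constant} in $\E_{\mu}[w^2]$ and in $(\E_{\mu}[w])^2$ and therefore cancels, leaving $1+\bigO(T+n_{\I_j}T^2)$ --- but this cancellation is destroyed the moment one replaces that factor by $\phi_{\max}$ in the numerator and $\phi_{\min}$ in the denominator, which is all the cited lemmas permit. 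Establishing it requires tracking exact leading-order weights per block (and explaining why the cancellation fails once two required jumps have overlapping contexts, since then the second jump's multiplier depends on the order of events); that sharper per-block analysis is precisely the content of Theorem~3 of \cite{Mathews3:2025}, so your argument is circular-by-omission: the one step you prove yourself is the step your tools cannot prove.

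A secondary, more repairable flaw: the cross-block factorization invokes Lemma~\ref{lemma: hat weight uniform bound} as giving a ``uniformly bounded cross-block weight'' with constant $\theta_{\scrD}$. That uniform bound holds only on the restricted set $\scrP_0$ (equation \eqref{eqn:density ratio bounds} is a statement about $\pi_0/\hat{\pi}_0$, not $\pi/\hat{\pi}$); the $L^2$ quantity integrates over all of $\scrP$, where the ratio grows like $\phi_{\max}^{m(\Pa_{\scrD})}$ and is unbounded --- indeed the paper stresses that unboundedness of such ratios is the main technical obstacle. To make this step rigorous you would need the split-and-tail argument used in the proof of Theorem~\ref{thm:chi square bound tempering} (Cauchy--Schwarz plus the MGF bound), not a uniform bound. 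If your goal is the Proposition as stated, the honest proof is the paper's: cite Theorem~3 of \cite{Mathews3:2025} directly for the island-structured complexity bound.
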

While Proposition~\ref{prop: IS island complexity} provides an upper bound on the complexity of the importance sampler, 
\cite{Mathews3:2025} also showed that the complexity \textit{necessarily} grows exponentially in $r$ by considering the following problem:
\begin{definition}[Island problem \cite{Mathews3:2025}]
\label{def: island problem}
Let $r_I(n) = r(n) / 2$ and consider the sequence $\x^{\star} = \text{T-(TCAT)}^{r_I}\text{-T}$ evolving to
$\y^{\star} = \text{T-(TTGT)}^{r_I}\text{-T}$
under the CpG model \eqref{eqn:CpG Island Rates} with context-dependent rates given by
\begin{align}
\label{eqn:CpG LB}
\tilde{\gamma}(b; \tx_i) = \gamma(b; x_i) \lambda^{\ind_{\text{CG}}(x_{i-1},x_i) + \ind_{\text{CG}}(x_i,x_{i+1})}
\end{align}
with $\gamma(b;b^{\prime}) \equiv 1$ for $b,b^{\prime} \in \{\tA,\tG,\tC,\tT \}$ and $\lambda \in (1,\infty)$. Approximate $p_{(T,\tQ)}(\y^{\star} \mid \x^{\star})$. 
\end{definition}
%
As noted, this problem provides a \textit{lower} bound on the sample complexity of the importance sampling algorithm \cite{Mathews3:2025}.  However, it follows follows from Theorem~\ref{thm: SMC Complexity Bound} that the SMC algorithm provides a fully polynomial time randomized approximation scheme (FPRAS) for the island problem:
%
\begin{corollary}
\label{cor:SMC on LB problem}
  Under the setting of Theorem 2 in \cite{Mathews3:2025}, the SMC algorithm using $r_I = B$ blocks corresponding to each of the $r_I$ subsequences provides a FPRAS for the island problem.
\end{corollary}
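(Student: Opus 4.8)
The plan is to instantiate Theorem~\ref{thm: SMC Complexity Bound} (equivalently its special case Theorem~\ref{thm:SMC bound}) on the island problem by exhibiting a partition for which the largest island size $r_\star$ is an $n$-independent \emph{constant}. First I would take $\mathscr{I}$ to be the partition with $B = r_I$ blocks $\I_1,\ldots,\I_{r_I}$, where $\I_j$ collects the four sites of the $j$th \text{TCAT} subsequence (the two flanking \text{T}'s being absorbed into the first and last blocks). Reading off the pattern $\text{TCAT}\to\text{TTGT}$, the only mutated sites inside each block are the second and third ($\text{C}\to\text{T}$ and $\text{A}\to\text{G}$), so $r_{\I_j}=2$ for all $j$, while the first and last sites of each block are unmutated \text{T}'s. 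Since the division (edge) sites carry no observed mutations and the blocks are contiguous, $\mathscr{I}$ is an \emph{island partition} (Assumption~\ref{assumption:island partition}), giving $\max_j r_{\D_j} = r_{\scrD} = 0$.

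Next I would pin down $r_\star$. Under the CpG model ($k=2$) the context of the second site is $\{1,2,3\}$ and that of the third is $\{2,3,4\}$; these overlap, so the two mutated sites lie in one connected component of $\Smut$, whereas the contexts of mutated sites in distinct blocks never meet (the neighboring cross-block sites are unmutated \text{T}'s). Hence every island has size exactly $2$ and $r_\star = \max_j r_{\I_j} = 2$. Feeding this into Theorem~\ref{thm: SMC Complexity Bound} makes the per-block terms $\max_j\zeta_{\I_j}$, $\max_j\zeta_{\D_j}$, $\max_j\zeta_{\partial\D_j}$ all $\bigO(1)$, so the exponential factor is $\exp(\bigO(r_\star))=\bigO(1)$ and the SMC run time collapses to the polynomial prefactor $\mathrm{poly}(\epsilon^{-1},\zeta,B)=\mathrm{poly}(n,\epsilon^{-1})$, i.e.\ an FPRAS. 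For the separation I would then invoke Proposition~\ref{prop: IS island complexity}: here $r_{\star}^{\text{IS}} = \sum_{\{j:\,r_{\I_j}>1\}} r_{\I_j} = 2r_I = r = \Theta(n)$, so the importance sampler of \cite{Mathews3:2025} requires time $\exp(\Theta(n))$, exhibiting the claimed exponential speed-up.

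The hard part will be controlling the two \emph{global} edge quantities $\zeta_{\scrD}$ and $\zeta_{\partial\scrD}$ in the exponent of Theorem~\ref{thm: SMC Complexity Bound}, which aggregate over all $B=r_I=\Theta(n)$ blocks: with $r_{\scrD}=0$ one has $\zeta_{\scrD}=n_{\scrD}T^2=\Theta(r_IT^2)$, and since every mutated site is itself a division boundary, $\zeta_{\partial\scrD}=\Theta(r_I(1+T))$. Because the island problem has $r=\Theta(n)$ it violates the $r=\bigO(n^{1/2})$ half of Assumption~\ref{assump: T assumption}, so I cannot simply reuse the ``$\bigO(r_\star)$ exponent'' reduction from the proof of Theorem~\ref{thm:SMC bound}. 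Instead the step requiring care is to verify, from the parameter regime fixed in Theorem~2 of \cite{Mathews3:2025}, that the governing time scale obeys $r_I\,T=\bigO(\mathrm{polylog}(n,\epsilon^{-1}))$ (equivalently $nT^2=\bigO(1)$), so that $\zeta_{\scrD}(1+T)$ and $\zeta_{\partial\scrD}\,T$ contribute only at $\bigO(\log(\mathrm{poly}))$ order. Granting this, the full exponent is $\bigO(\log(\mathrm{poly}(n,\epsilon^{-1})))$, the run time $\mathrm{poly}(n,\epsilon^{-1})\cdot\exp(\bigO(\log\mathrm{poly}))$ stays polynomial, and the FPRAS claim follows.
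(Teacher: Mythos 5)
Your core argument is exactly the paper's: the published proof of Corollary~\ref{cor:SMC on LB problem} is a single line observing that $\max_j r_{\I_j} = 2$ for the block-per-$\text{TCAT}$ partition and then invoking Theorem~\ref{thm: SMC Complexity Bound}. Your construction of the island partition, the verification that the division sites are unmutated $\text{T}$'s, the computation $r_\star = 2$ (overlapping contexts $\{i-1,i,i+1\}$ of the two mutated sites within a block, disjoint contexts across blocks), and the conclusion that the per-block quantities $\max_j\zeta_{\I_j}$, $\max_j\zeta_{\D_j}$, $\max_j\zeta_{\partial\D_j}$ are $\bigO(1)$ all match that argument and are correct.

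Where you diverge is in flagging the aggregate terms $\zeta_{\scrD}(1+T)$ and $\zeta_{\partial\scrD}T$, and this concern is genuine rather than an artifact of your write-up: reading Definition~\ref{def: island problem} literally gives $n = 4r_I + 2$, so $r = \Theta(n)$ and the $r = \bigO(n^{1/2})$ half of Assumption~\ref{assump: T assumption} fails, which means the reduction ``exponent $= \bigO(r_\star)$'' from the proof of Theorem~\ref{thm:SMC bound} cannot be applied verbatim; the paper's one-line proof is silent on this point and implicitly assumes the global terms are controlled by the setting of Theorem 2 in \cite{Mathews3:2025}. Your stated sufficient condition $r_I T = \bigO(\log(\mathrm{poly}(n,\epsilon^{-1})))$ is the right one (note it must really be logarithmic, not polylogarithmic, or the runtime is only quasi-polynomial). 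However, your parenthetical ``equivalently $nT^2 = \bigO(1)$'' is wrong: since $\partial\scrD$ consists precisely of the $r$ mutated sites, $\zeta_{\partial\scrD}T = \Theta\bigl(rT(1+T)\bigr)$ is the binding term, and $T = \bigO(n^{-1/2})$ (which is all that $nT^2 = \bigO(1)$ gives) still leaves an exponent of order $rT = \Theta(\sqrt{n})$, i.e.\ super-polynomial runtime; what is actually needed is $T = \bigO(\log(n/\epsilon)/n)$. Finally, a small attribution slip: your exponential-separation claim cites Proposition~\ref{prop: IS island complexity}, which is an \emph{upper} bound on the importance-sampling complexity; the statement that IS ``requires'' time $\exp(\Theta(n))$ rests on the lower bound of Theorem 2 in \cite{Mathews3:2025}, not on that proposition.
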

\begin{proof}
Since $\max_j r_{\I_j} = 2$, SMC provides a FPRAS for approximating the marginal likelihood by Theorem~\ref{thm: SMC Complexity Bound}.
\end{proof}

\subsubsection{Proof of Theorem~\ref{thm:Nonlocal}}
The proof of Theorem~\ref{thm:Nonlocal} follows that of Theorems~\ref{thm: mixing time bound} and \ref{thm:SMC bound}, with the division sites $\D_j$ replaced by edge sites $\I_{j,e} = \{i \in \I_j: \C_i \cap \I^c_j \neq \emptyset \}$ (the boundary sets $\partial \I_{j,e}$ are defined identically to $\partial \D_j$ -- see Section~\ref{sec:island partitions}). Indeed, Theorem~\ref{thm:Nonlocal} follows immediately by the following more general form of Lemma~\ref{lemma:DSM MGF Subset} in Appendix~\ref{Appdx:SMC Results}:
\begin{lemma}
\label{lemma:DSM MGF Subset Nonlocal}
Let $\calA \subset \{1,\ldots,n\}$ be a set of site indices and $\theta \in (0,\infty)$. Let $\calA_{\text{e}} \defeq \{i \in \calA :  \C_i \cap \calA^c \neq \emptyset\}$ be the set of \textit{edge sites} in $\calA$. Then there exists a model-dependent constant $\lambda(\theta)$ such that
\begin{align*}
    \E_{\pi}[\theta^{m(\Pa_{\calA})}] \leq e^{Tq\abs{\calA_{\text{e}}}(\tgmax - \tgmin)} e^{\lambda(\theta)\zeta_{\calA}}. 
\end{align*}
\end{lemma}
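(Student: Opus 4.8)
The plan is to follow the structure of the proof of the contiguous-case bound (Lemma~\ref{lemma:DSM MGF Subset}), isolating the one place where contiguity of $\calA$ is used and paying for it explicitly in the general case. First I would write $\calA$ as a disjoint union of its maximal runs of consecutive indices; since $\zeta_{\calA}$ is additive over disjoint site sets, it suffices to decouple the path on $\calA$ from the path on $\calA^c$ and then apply the contiguous bound run-by-run, so that the per-run factors $e^{\lambda(\theta)\zeta_{\text{run}}}$ combine to $e^{\lambda(\theta)\zeta_{\calA}}$.

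Concretely, I would introduce a reference density $\tP^{\circ}$ obtained from $\tP_{(T,\tQ)}$ by freezing the cross-boundary context, i.e. replacing $\phi(\cdot;\tx_i)$ at each edge site $i \in \calA_{\text{e}}$ (and symmetrically at the neighbouring $\calA^c$ sites) by its value on a fixed reference context, so that the paths on $\calA$ and on $\calA^c$ decouple and $\tP^{\circ}$ factors as a product over the runs of $\calA$ times a $\calA^c$ factor. Following the same calculation used in Lemma~\ref{lemma: hat weight uniform bound} and Lemma~\ref{lemma:weight bound tempering}, the ratio $\tP_{(T,\tQ)}/\tP^{\circ}$ splits into a product of $\phi$-ratios over the boundary jumps and an exponential of the exit-rate discrepancy. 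The $\phi$-ratios lie in $[\phi_{\min}/\phi_{\max},\phi_{\max}/\phi_{\min}]$ and are absorbed into the per-jump bounds defining $\lambda(\theta)$ exactly as in the contiguous case; the exit-rate discrepancy is supported on $\calA_{\text{e}}$, where the total exit rate $\sum_{b}\tg_i(b;\tx_i)$ — a sum of at most $q$ transition rates, each in $[\tgmin,\tgmax]$ — differs from its frozen value by at most $q(\tgmax-\tgmin)$ at any instant. Integrating over $[0,T]$ gives a discrepancy of at most $Tq(\tgmax-\tgmin)$ per edge site, hence the prefactor $e^{Tq\abs{\calA_{\text{e}}}(\tgmax-\tgmin)}$, while the $\calA^c$ factor of $\tP^{\circ}$ cancels between the numerator and denominator of the moment generating function.

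With the boundary decoupled, the contiguous bound of Lemma~\ref{lemma:DSM MGF Subset} applies to each run of $\calA$ under $\tP^{\circ}$, and multiplying the two contributions yields the claim. Once Lemma~\ref{lemma:DSM MGF Subset Nonlocal} is in hand, Theorem~\ref{thm:Nonlocal} follows by repeating the arguments of Theorems~\ref{thm: mixing time bound} and~\ref{thm:SMC bound} with $\D_j$ replaced by the edge sets $\I_{j,e}$: under the hypotheses $\max_j|\I_{j,e}| = \bigO(r)$ and $x_i=y_i$ on edge sites, the prefactor $e^{Tq\abs{\calA_{\text{e}}}(\tgmax-\tgmin)}$ is $e^{\bigO(rT)} = e^{\bigO(1)}$ under Assumption~\ref{assump: T assumption}, and $\zeta_{\calA}$ is governed by $\min_{\mathscr{I}}r_\star(\mathscr{I})$.

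The main obstacle is ensuring that the cost charged to the edge sites scales only with $\abs{\calA_{\text{e}}}$ and $T$, and not with the random, a priori unbounded number of jumps there — this is precisely the localized form of the non-uniform density-ratio difficulty flagged in the introduction. The resolution is to route the boundary $\phi$-factors through $\lambda(\theta)$, where a factor in $[\phi_{\min},\phi_{\max}]$ per jump is already accounted for, rather than tracking them as a separate $\phi_\star^{m(\Pa_{\calA_{\text{e}}})}$ term; this leaves only the exit-rate survival discrepancy — linear in $T$ and in $\abs{\calA_{\text{e}}}$ and independent of the jump count — as the explicit boundary correction.
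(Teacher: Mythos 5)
Your proposal inverts the logical structure of the paper's argument, and the inversion is where it breaks. In the paper, the general statement (arbitrary $\calA$, arbitrary contexts $\C_i$) \emph{is} what gets proved directly: Lemma~\ref{lemma:DSM MGF Subset} in Appendix~\ref{Appdx:SMC Results} is stated with the prefactor $e^{Tq\abs{\calA_{\text{e}}}(\tgmax-\tgmin)}$ for an arbitrary set of site indices, and the contiguous statement is merely the specialization $\abs{\calA_{\text{e}}}\leq k$ available under Assumption~\ref{assumption:neighborhood context}. You instead try to deduce the general case from the contiguous one by decomposing $\calA$ into maximal runs of consecutive indices. This cannot work in the generality required: Lemma~\ref{lemma:DSM MGF Subset Nonlocal} is invoked (in Theorem~\ref{thm:Nonlocal}) precisely for models with long-range contexts, where the contiguous-case bound you want to apply run-by-run is unavailable --- it requires Assumption~\ref{assumption:neighborhood context}, and without it a single contiguous run can have up to $\abs{\mathrm{run}}$ edge sites, which is exactly the quantity $\abs{\calA_{\text{e}}}$ the lemma is supposed to make explicit. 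Moreover, your reference density $\tP^{\circ}$ does not factor over the runs: an \emph{interior} site of one run (one with $\C_i \subset \calA$) may have context intersecting a different run under long-range dependence, and your freezing removes only the $\calA$--$\calA^c$ coupling, not the run--run coupling.

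Second, and independently of contiguity, the two-sided freezing is the step that fails. Freezing ``symmetrically at the neighbouring $\calA^c$ sites'' makes the ratio $\tP_{(T,\tQ)}/\tP^{\circ}$ contain a factor in $[\phi_{\min}/\phi_{\max},\phi_{\max}/\phi_{\min}]$ for \emph{every jump at those $\calA^c$ sites}. Those jumps are not counted by $m(\Pa_{\calA})$, so they cannot be routed into $\lambda(\theta)\zeta_{\calA}$; their number is a priori unbounded, so they cannot go into the prefactor either (which, under your construction, would in any case have to count all frozen sites, including the $\calA^c$ sites whose contexts reach into $\calA$ --- a set not controlled by $\abs{\calA_{\text{e}}}$ for nonlocal models); and they do not ``cancel between numerator and denominator,'' because the true rate at a frozen $\calA^c$ site depends on the $\calA$-side states along the path, so these ratio terms are path-dependent, not constant. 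This is exactly the unbounded-density-ratio problem you flag at the end, and your proposed resolution does not touch it. The paper avoids it by modifying \emph{only} the rates on $\calA$: $\tg^{\prime}_i$ is set to the unit symmetric rate for $i \in \calA$ and left equal to $\tg_i$ for $i \notin \calA$. Then, for a fixed path, every $\calA^c$ jump factor and the $\calA^c$ portion of the exit-rate exponential are \emph{identical} under $\tQ$ and $\tQ^{\prime}$ and cancel exactly; the discrepancy is supported on $\calA$ alone and splits into a crude bound at the edge sites (yielding $Tq\abs{\calA_{\text{e}}}(\tgmax-\tgmin)$) plus a telescoping (Abel summation) bound at interior sites, whose rates change only at jumps inside $\calA$ because $\C_i \subset \calA$ there, yielding $T\tilde{\delta}\,m(\Pa_{\calA})$; the resulting factors of the form $(\theta\tgmax e^{T\tilde{\delta}})^{m(\Pa_{\calA})}$ are then controlled by the ISM bounds of Lemma~\ref{lemma:expectation and pr bound} applied to the standard symmetric law of $\Pa_{\calA}$. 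Replacing your run decomposition and two-sided freezing with this one-sided substitution yields the general statement in a single pass.
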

\begin{proof}(Theorem~\ref{thm:Nonlocal})
Recall that we require $\max_j|\I_{j,e}| = \bigO(r)$ and $x_i = y_i$ for all $i \in \I_{j,e}$. Letting $\calA \in \{\I_j, \I_{j,e}, \partial \I_{j,e} \}$, we have $\calA_{\text{e}} = \bigO(r)$ since 
the context of each site is at most size $k$. Therefore, under Assumption~\ref{assump: T assumption}
\begin{align*}
    Tq\abs{\calA_{\text{e}}}(\tgmax - \tgmin) = \bigO(1), \quad \text{ for } \calA \in \{\I_j, \I_{j,e}, \partial \I_{j,e} \}
\end{align*}
Hence, Lemma~\ref{lemma:DSM MGF Subset Nonlocal} can be used in place of the MGF bound for neighbor-dependent models (Lemma~\ref{lemma:DSM MGF Subset Main Body}) to generalize the mixing time bound (Lemma~\ref{lemma:Mixing Time}) and $L^2$ bound (Theorem~\ref{thm:chi square bound tempering}) to non-local context dependence. Theorem~\ref{thm:Nonlocal} then follows since $\max_j r_{\I_{j,e}} = 0$ by assumption. 
\end{proof}

\section{Conclusion}
\label{sec:Conclusion} Calculation of marginal likelihoods under context-dependent evolutionary models is an important problem in phylogenetics and molecular evolution.  Given the extensive machinery available for independent site models, the idea of performing inference under DSMs by importance sampling from ISMs is an attractive one.  However, as shown by \citet{Mathews3:2025} the sample complexity of importance sampling on this problem grows exponentially in the number of observed mutations, which can be prohibitively expensive in some real-world applications. Here, we
introduce an SMC algorithm for this problem, and show that this algorithm provides a significant improvement in the sample complexity required to accurately approximate the marginal likelihood. Section~\ref{sec: SMC} shows that when sites in $\calS$ can be grouped into islands consisting of neighboring mutated sites, the complexity of the SMC algorithm is at most exponential in the size of the largest island, yielding an exponential improvement over the importance sampler. Along the way, we have also obtained a mixing time bound on the component-wise Metropolis algorithm used previously in applications, the first such result for an MCMC algorithm for this problem. It remains an open question whether this mixing time bound can be improved, and whether matching \textit{lower} bounds for this algorithm can be obtained, as well as to explore the effects of alternative MCMC algorithms for the mutation kernel of the SMC algorithm.
It is also worth noting again that our results hold under the assumption that $T = \bigO(r/n)$ and $r^2 \leq n$
(Assumption~\ref{assump: T assumption}); this   scaling assumption on $T$ is justified by the probability concentration tail bound on $T$ established in \citet{Mathews2:2025}, but that result differs from Assumption~\ref{assump: T assumption} by containing an additional factor of $\log(n)$.

Finally, the proof strategy used here combines several recent results from the literature on Monte Carlo theory.  In particular, we (1) established  concentration of the target distribution on a restricted set, (2) used that restriction to simplify the obtaining of spectral warm-start mixing bounds on the Markov kernel \cite{Atchade:2021}, and (3) applied recent results establishing warm-start conditions for SMC \cite{Marion:2023,Marion:2018b} to obtain finite sample error bounds for approximating the marginal sequence likelihood.  This strategy may be of broader interest for analyzing other problems of marginal likelihood and Bayes factor approximation, where  posterior concentration bounds may be available, or in some cases already exist, to satisfy the first step.

\bibliography{Bibliography.bib}

\bibliographystyle{imsart-nameyear.bst}

\appendix

\section{Supporting Results for Sequential Monte Carlo}
\label{Appdx:SMC Results}
\subsection{Bound on MGF of $m(\Pa)$ 
Under $\pi$}
\label{Sec:MGFbound}
%
Recall that for a subset $\calA \subset \{1,\ldots,n\}$ we let $\x_{\calA}$ and $\y_{\calA}$ denote the corresponding subsequences and
\begin{align}\label{eqn: zeta definition appdx}
n_{\calA} := \abs{\calA} \quad\quad r_{\calA} := \text{d}_{\text{H}}(\x_{\calA}, \y_{\calA}) \quad\quad \zeta_{\calA} := r_{\calA} + r_{\calA}T + (n_{\calA} - r_{\calA}) T^2.
\end{align}
%
We state the bound for $m(\Pa_{\calA})$ for any subset of sites $\calA$; choosing $n_{\calA}=n$ yields the bound for $m(\Pa)$.
\begin{lemma}
\label{lemma:DSM MGF Subset}
Let $\calA \subset \{1,\ldots,n\}$ be a set of site indices and $\theta \in (0,\infty)$. Define
\begin{align*}
 \lambda_1(\theta) &:= q^2e^{Tq}  + \theta \tgmax e^{T\tilde{\delta}} q^2 e^{Tq\theta \tgmax  e^{T\tilde{\delta}} } \\
\lambda_2(\theta) &:= \log(\theta e^{2T\tilde{\delta}}\tgmax /\tgmin)  \\
\lambda_3(\theta) &:= q^2e^{Tq} +  \theta^2 \tg^2_{\max}  e^{2T\tilde{\delta}} q^2 e^{Tq \theta \tgmax  e^{T\tilde{\delta}}} \\
\lambda(\theta) & :=  \max\left\{ \lambda_1(\theta),\lambda_2(\theta),\lambda_3(\theta) \right\}
\end{align*}
%
Let $\calA_{\text{e}} \defeq \{i \in \calA :  \C_i \cap \calA^c \neq \emptyset\}$ be the set of \textit{edge sites} in $\calA$. Then
\begin{align*}
    \E_{\pi}[\theta^{m(\Pa_{\calA})}] \leq e^{Tq\abs{\calA_{\text{e}}}(\tgmax - \tgmin)} e^{\lambda(\theta)\zeta_{\calA}}. 
\end{align*}
In particular, if $\pi$ is a \emph{$k$-neighborhood} DSM  (Assumption~\ref{assumption:neighborhood context}) and $\calA$ is a set of contiguous sites, 
\begin{equation}
\label{Eqn:DSMMGFBound}
\E_{\pi}[\theta^{m(\Pa_{\calA})}] \leq  e^{Tqk(\tgmax - \tgmin)} e^{\lambda(\theta)\zeta_{\calA}}.
\end{equation}
\end{lemma}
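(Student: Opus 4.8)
The plan is to write the moment generating function as a ratio of path integrals, $\E_{\pi}[\theta^{m(\Pa_{\calA})}] = z_\pi^{-1}\int_{\scrP}\theta^{m(\Pa_{\calA})}\tP_{(T,\tQ)}(\y,\Pa\mid\x)\,\nu(d\Pa)$, and to exploit the fact that both $m(\Pa_{\calA}) = \sum_{i\in\calA}m(\Pa_i)$ and the path density \eqref{eqn:Path Density} factor across sites once the context dependence is stripped away. Writing the holding-time factors as a single time integral $\exp(-\int_0^T\tg(\cdot;\x_t)\,dt)$ and splitting the total exit rate as $\sum_{i=1}^n\int_0^T\tg_i(\cdot;\tx_i(t))\,dt$, the only obstruction to a clean product over sites is that each rate $\tg_i(b;\tx_i)=\gamma_i(b;x_i)\phi(b;\tx_i)$ depends on neighbouring sites through $\phi$. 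First I would bound these context-dependent rates by context-free surrogates with rates in $[\tgmin,\tgmax]$, controlling both the jump factors (via $\phi_{\min}\le\phi\le\phi_{\max}$) and the variation of the exit rates along the path (via the per-affected-jump bound $\tilde{\delta}$ of Lemma~\ref{lemma: Uniform Bounds on Delta}), bounding the numerator above and the denominator below by products of single-site endpoint-conditioned integrals; in the independent-site limit the $\calA^c$ factors cancel exactly, leaving a product of per-site generating functions over $i\in\calA$. The residual coupling at sites whose context reaches $\calA^c$ cannot be removed, since those neighbours are not tracked inside the $\calA$-restricted product and the exit rate at such an edge site may range over all of $[q\tgmin,q\tgmax]$ throughout $[0,T]$; this is exactly absorbed into the boundary factor $e^{Tq|\calA_{\text{e}}|(\tgmax-\tgmin)}$.

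The heart of the argument is the single-site moment generating function of an endpoint-conditioned CTMC with rates in $[\tgmin,\tgmax]$, estimated separately in the two regimes that produce the two distinct scalings in $\zeta_{\calA}$. For a \emph{mutated} site ($x_i\neq y_i$) the chain must make at least one jump, and I would bound the per-site transition probability from below by its order-$T$ one-jump term; the tilted numerator is dominated by the same one-jump contribution, giving a per-site factor of order $\theta\,\tgmax/\tgmin$ together with a holding correction $e^{2T\tilde{\delta}}$, i.e. $e^{\lambda_2(\theta)}$. The \emph{extra} jumps enter only as a relative correction weighed against this order-$T$ baseline, and therefore contribute a factor of order $e^{\lambda_1(\theta)T}$ in which $\theta\tgmax$ appears to first power. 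For a \emph{non-mutated} site ($x_i=y_i$) the zero-jump term of order one dominates the denominator, while any contribution to the tilted numerator requires at least a jump-and-return excursion carrying two rate factors and a double time integral over $[0,T]$; weighed against the order-one baseline this produces the $\theta^2\tgmax^2$ and $T^2$ behaviour of $\lambda_3(\theta)$ and a per-site bound $e^{\lambda_3(\theta)T^2}$. The explicit forms of $\lambda_1,\lambda_2,\lambda_3$ are read off from these three elementary estimates.

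Assembling the pieces, multiplying the $r_{\calA}$ mutated-site bounds $e^{\lambda_2+\lambda_1 T}$ and the $n_{\calA}-r_{\calA}$ non-mutated-site bounds $e^{\lambda_3 T^2}$ gives $e^{\lambda_2 r_{\calA}+\lambda_1 r_{\calA}T+\lambda_3(n_{\calA}-r_{\calA})T^2}$, and replacing each of $\lambda_1,\lambda_2,\lambda_3$ by their maximum $\lambda(\theta)$ collapses this to $e^{\lambda(\theta)\zeta_{\calA}}$ with $\zeta_{\calA}$ as in \eqref{eqn: zeta definition appdx}. Combined with the edge factor this yields the stated bound, and the neighbourhood specialization \eqref{Eqn:DSMMGFBound} follows because under Assumption~\ref{assumption:neighborhood context} a contiguous block $\calA$ has at most $k$ edge sites, so $|\calA_{\text{e}}|\le k$.

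The step I expect to be the main obstacle is decoupling the sites while preserving the \emph{sharp} $\zeta_{\calA}$ structure. As emphasised in the introduction, a naive change of measure to the ISM fails here: the weight $d\pi/d\mu$ is not uniformly bounded, because the number of extra (revert) jumps along a path is unbounded. The comparison must therefore be carried out at the level of the rates appearing in each jump and holding factor, so that the unbounded extra jumps are integrated out by the convergent single-site generating series (yielding the finite exponents $\lambda_1,\lambda_3$) rather than controlled by a uniform density ratio. Keeping the exit-rate corrections — which couple all sites through the total rate $\tg(\cdot;\x_t)$ — under control through the $\tilde{\delta}$ bound (source of the $e^{2T\tilde{\delta}}$ and $e^{T\tilde{\delta}}$ factors), and cleanly isolating their boundary contribution at the edge sites, is the delicate bookkeeping on which the whole estimate rests.
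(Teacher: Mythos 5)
Your overall strategy coincides with the paper's: both compare the DSM to a context-free reference model on $\calA$, control the jump-rate factors through $[\tgmin,\tgmax]$, control the holding-rate variation through the per-jump bound $\tilde{\delta}$ of Lemma~\ref{lemma: Uniform Bounds on Delta}, isolate the edge sites into the factor $e^{Tq|\calA_{\text{e}}|(\tgmax-\tgmin)}$, and assemble per-required-jump and per-site estimates into $\zeta_{\calA}$. Your reading of where $\lambda_1,\lambda_2,\lambda_3$ come from is also exactly right (required-jump ratio, extra-jump correction at mutated sites, excursion terms at conserved sites). The only cosmetic difference on this front is that the paper does not re-derive the single-site generating-function estimates: it obtains them by applying Lemma~\ref{lemma:expectation and pr bound} (the ISM MGF bound and the lower bound on $p_r$, quoted from \citet{Mathews3:2025}) to the reference model with the tilted arguments $\theta\tgmax e^{T\tilde{\delta}}$ in the numerator and $\tgmin e^{-T\tilde{\delta}}$ in the denominator.

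However, your decoupling step has a genuine flaw as stated. You propose to bound the numerator of $\E_{\pi}[\theta^{m(\Pa_{\calA})}]$ \emph{above} and the denominator \emph{below} by products of single-site endpoint-conditioned integrals, and then to cancel the $\calA^c$ factors ``exactly.'' But upper and lower bounds of a context-dependent density by context-free surrogates require oppositely biased surrogates (jump factors pushed toward $\tgmax$ and exit rates pushed down for the upper bound, and the reverse for the lower bound), so the $\calA^c$ factors appearing in your two bounds are integrals under \emph{different} single-site models: they do not cancel, and the residual is a factor of order $e^{c\,\zeta_{\calA^c}}$. Since the lemma is stated for arbitrary $T$ (it is not protected by Assumption~\ref{assump: T assumption}) and its right-hand side must depend on $\calA$ alone, such a residual is fatal. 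The paper's construction avoids this entirely: the modified rate matrix $\tQ^{\prime}$ in \eqref{eqn: pi hat rates} changes rates \emph{only} at sites in $\calA$ (setting them to the unit-rate symmetric model) and leaves every $\calA^c$ rate exactly equal to its value under $\tQ$. Then every factor of the path density attached to jumps at $\calA^c$ sites is literally identical in $\tP_{(T,\tQ)}(\y,\Pa\mid\x)$ and $\tP_{(T,\tQ^{\prime})}(\y,\Pa\mid\x)$, so the likelihood ratio depends only on the jumps at $\calA$ sites and on the exit-rate sums over $\calA$ --- split into interior terms, controlled by $T\tilde{\delta}\,m(\Pa_{\calA})$ via a telescoping argument, and edge terms, giving $Tq|\calA_{\text{e}}|(\tgmax-\tgmin)$. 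This uniform two-sided bound on the ratio is what lets the $\calA^c$ degrees of freedom integrate out identically on both sides, reducing everything to expectations of functions of $m(\Pa_{\calA})$ under $\pi^{\prime}$, which are then evaluated under the symmetric bridge law. Your closing paragraph shows you sensed that the comparison must happen factor-by-factor rather than through a uniform density ratio, but the specific device that makes the cancellation exact --- modify $\calA$ only, touch nothing in $\calA^c$ --- is the missing idea, and the argument as you describe it does not go through without it.
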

\begin{remark}
\label{remark: lambda complexity}
Under Assumption~\ref{assump: T assumption}, $\lambda(\theta) = \bigO(\log(\tg_{\star}))$ when $\max\{\tgmax,\theta\} \ll \tg_{\star}$,
where
$\tg_{\star} := \tgmax/\tgmin$ is the ratio of the maximum and minimum DSM rates. Indeed, recall that under Assumption~\ref{assump: T assumption}, $T = \bigO(r/n)$ and $r \leq n^{\frac{1}{2}}$, in which case $T = o(1)$. Hence, $\lambda(\theta)$ is dominated by the $\bigO(\log(\tg_{\star}))$ term in $\lambda_2(\theta)$ after ignoring $o(1)$ terms involving $T$. 
\end{remark}

The proof of Lemma~\ref{lemma:DSM MGF Subset} will take advantage of two lemmas of \citet{Mathews3:2025}. The first provides an upper bound on the MGF of $m(\Pa)$ and a lower bound on $p_r \defeq \Prob_{\mu}(m(\Pa) = r)$, the probability of \textit{exactly} $r$ mutations, under an ISM $\mu$:
\begin{lemma}(\citet{Mathews3:2025})
\label{lemma:expectation and pr bound}
Let $\theta \in \mathbb{R}$ and $c = \gmax^2/\gmin q^2 e^{Tq(\gmax - \gmin)}$. Then
\begin{align*}
\E_{\mu}[\theta^{m(\Pa)}] \leq \theta^r\exp\left(rT \theta c \exp(Tq \theta) + (n-r)T^2 \theta^2 c\gmin \exp(Tq\theta )  \right).
\end{align*}
In addition, the following lower bound on $p_r \defeq \Prob_{\mu}(m(\Pa) = r)$ holds
\begin{align*}
p_r \geq 
\exp\left(- rTc\exp(T q \gmax) - (n-r)T^2c \exp(Tq\gmax)\gmin \right).
\end{align*}
\end{lemma}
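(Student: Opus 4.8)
The plan is to exploit the product structure of the ISM. Since $\mu = \mu_1 \times \cdots \times \mu_n$ by \eqref{eqn: Norm Constant Mu} and $m(\Pa) = \sum_{i=1}^n m(\Pa_i)$ with the site-wise jump counts independent, the generating function factorizes as $\E_{\mu}[\theta^{m(\Pa)}] = \prod_{i=1}^n \E_{\mu_i}[\theta^{m(\Pa_i)}]$. I would then treat the $r$ mutated sites ($i \in \calS$, $x_i \neq y_i$) and the $n-r$ conserved sites ($i \notin \calS$, $x_i = y_i$) separately: a mutated site needs at least one jump to reach $y_i$, while a conserved site can have zero jumps, its smallest nonzero contribution coming from $m=2$ (a single jump cannot return to $x_i$ in \eqref{eqn:Path Density ISM}).

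Next, I would write each site-wise generating function as a ratio of matrix exponentials. Multiplying the path density \eqref{eqn:Path Density ISM} by $\theta^{m_i}$ scales each of the $m_i$ jump-rate factors $\gamma_i(b^j;b^{j-1})$ by $\theta$ while leaving the holding-time exponentials untouched; integrating over $\scrP_i$ therefore yields $\E_{\mu_i}[\theta^{m(\Pa_i)}] = (e^{TA_i^\theta})_{x_i,y_i}/(e^{T\Q_i})_{x_i,y_i}$, where $A_i^\theta$ agrees with $\Q_i$ on the diagonal but has off-diagonal entries scaled by $\theta$, and the denominator is the normalizing constant \eqref{eqn: Norm Constant Mu}. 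Expanding the numerator as a sum over the number of jumps $m$, I would bound the number of length-$m$ jump sequences ending at $y_i$ by $q^{m-1}$, each jump rate by $\gmax$, each holding-time factor by $1$, and the ordered-time simplex by $T^m/m!$.

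The crux is then to separate the minimal-jump term from the higher-order tail at each site. For a mutated site the $m=1$ term of the numerator coincides with a summand of the denominator, so after bounding the $m\geq 2$ tail via $e^x - 1 - x \leq \tfrac{x^2}{2}e^x$ and lower-bounding the denominator by the single direct-jump contribution, one obtains $\E_{\mu_i}[\theta^{m_i}] \leq \theta\,(1 + T\theta\, c\, e^{Tq\theta})$, the constant $c$ collecting the factors $\gmax^2/(\gmin q^2)$ and $e^{Tq(\gmax-\gmin)}$. For a conserved site the smallest nonzero contribution has $m=2$, giving $\E_{\mu_i}[\theta^{m_i}] \leq 1 + T^2\theta^2 c\gmin e^{Tq\theta}$. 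Multiplying over sites and using $\prod(1+a_i)\leq e^{\sum a_i}$ collects the $r$ mutated factors into $\theta^r\exp(rT\theta c\, e^{Tq\theta})$ and the $n-r$ conserved factors into $\exp((n-r)T^2\theta^2 c\gmin e^{Tq\theta})$, which is the stated MGF bound. For the lower bound on $p_r$, I would observe that since each mutated site contributes at least one jump and there are exactly $r$ of them, the event $\{m(\Pa)=r\}$ forces every mutated site to have exactly one jump and every conserved site none, so $p_r = \prod_{i\in\calS}\Prob_{\mu_i}(m_i=1)\prod_{i\notin\calS}\Prob_{\mu_i}(m_i=0)$; each factor is lower bounded by upper-bounding $(e^{T\Q_i})_{x_i,y_i}$ by the minimal contribution times $(1+O(T))$ (mutated) or $(1+O(T^2))$ (conserved), so that the factors are $\geq e^{-O(T)}$ and $e^{-O(T^2)}$ respectively, reproducing the claim.

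\textbf{Main obstacle.} The delicate part is the relative bounding in the last two steps: controlling the ratio of the extra-jump contributions to the minimal-jump contribution sharply enough to recover the correct $T$-scaling — linear for mutated sites, quadratic for conserved sites — together with the explicit constant $c$. In particular, obtaining the $e^{Tq(\gmax-\gmin)}$ factor requires the denominator bound to retain the exit-rate dependence rather than the crude estimate $(e^{T\Q_i})_{x_i,y_i}\leq 1$, and the bookkeeping through the matrix-exponential series must be carried out carefully to keep the powers of $\theta$ and $T$ aligned with the two regimes.
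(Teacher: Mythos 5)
The first thing to note is that this paper never proves Lemma~\ref{lemma:expectation and pr bound}: it is quoted verbatim from \citet{Mathews3:2025} and used as a black box in the proof of Lemma~\ref{lemma:DSM MGF Subset}, so there is no in-paper argument to compare yours against line by line. Judged on its own merits, your proposal follows the natural route and is structurally sound. The factorization $\E_{\mu}[\theta^{m(\Pa)}] = \prod_{i=1}^n \E_{\mu_i}[\theta^{m(\Pa_i)}]$ is valid because $\mu = \mu_1 \times \cdots \times \mu_n$ and $m(\Pa) = \sum_i m(\Pa_i)$; the tilted-matrix-exponential representation of each site factor is the standard path-series (Dyson) expansion; the split between sites in $\calS$ (tail beyond the forced $m_i = 1$ term, hence linear in $T$) and sites outside $\calS$ (tail beyond $m_i = 0$ starting at $m_i = 2$, hence quadratic in $T$) is exactly what produces the two exponents in the statement; and the observation that $\{m(\Pa) = r\}$ forces $m_i = \ind(i \in \calS)$ at every site correctly reduces the $p_r$ bound to the same per-site estimates evaluated at $\theta = 1$, combined with $(1+x)^{-1} \geq e^{-x}$.

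Three points of caution on the constants. First, the crude bounds you state mid-proof (each holding-time factor $\leq 1$, denominator $\geq T\gmin e^{-Tq\gmax}$) would produce $e^{Tq\gmax}$ where the lemma's $c$ has $e^{Tq(\gmax-\gmin)}$; you flag this yourself in the closing paragraph, and the fix is the one you indicate: bound each numerator path's exponential factor by $e^{-Tq\gmin}$ (every exit rate is at least $q\gmin$), so that only the ratio $e^{Tq(\gmax-\gmin)}$ survives against the denominator's $e^{-Tq\gmax}$. Second, your parsing of $c$ as collecting ``$\gmax^2/(\gmin q^2)$'' is off: the paper's later evaluation in the proof of Lemma~\ref{lemma:DSM MGF Subset}, where $c' = q^2$ when $\gmax = \gmin = 1$, shows that $q^2$ multiplies rather than divides, i.e. $c = (\gmax^2/\gmin)\, q^2\, e^{Tq(\gmax - \gmin)}$. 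Third, the tail estimate $\sum_{m \geq 2} x^m/m! \leq \tfrac{1}{2}x^2 e^x$ with $x = Tq\theta\gmax$ naturally yields a factor $e^{Tq\theta\gmax}$, not the stated $e^{Tq\theta}$; these coincide only under a normalization such as $\gmax \leq 1$, so reproducing the lemma's exact constants requires either such a convention or absorbing the discrepancy into $c$. None of this affects the structure of the result --- the $\theta^r$ prefactor, the $O(rT)$ and $O((n-r)T^2)$ exponents, and the correct $\gmin$, $\gmax$ dependence --- all of which your argument recovers.
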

The second lemma will be used to bound the exponential term appearing in the path density \eqref{eqn:Path Density} under the ISM and DSM. Let
\begin{align*}
\Delta^{\tg}(j) := \Delta^{\tg}(j; \Pa) &=  \tg(\cdot; \x^j) - \tg(\cdot; \x^{j-1}) \\
\Delta^{\gamma}(j) := \Delta^{\gamma}(j; \Pa) &=  \gamma(\cdot; \x^j) - \gamma(\cdot; \x^{j-1})
\end{align*}
and 
\begin{align*}
\tpsi(\s(\Pa),\bb(\Pa)) \defeq \sum^{m(\Pa)}_{j=1} t^j \Delta^{\tg}(j) \qquad\text{ and }\qquad \psi(\s(\Pa),\bb(\Pa)) \defeq \sum^{m(\Pa)}_{j=1} t^j \Delta^{\gamma}(j).
\end{align*}
\begin{restatable}{lemma}{LemmaOne}(\citet{Mathews3:2025})
\label{lemma: Uniform Bounds on Delta}
Let $q = a-1$ and define
\begin{align*}
\delta := q(\gmax - \gmin)
\qquad \text{and} \qquad 
\tilde{\delta} := q(k+1) (\tgmax - \tgmin).
\end{align*}
Then the following statements hold for the random variables $\Delta^{\tg}(j)$ and $\Delta^{\gamma}(j)$: 
\begin{enumerate}
\item $\Prob_{\mu}(|\tpsi(\s(\Pa),\bb(\Pa))| \leq m(\Pa) T\tilde{\delta} ) = 1 $
\item $\Prob_{\mu}(|\psi(\s(\Pa),\bb(\Pa))| \leq m(\Pa)T\delta ) = 1 $
\item $\Prob_{\mu}(|\tpsi(\s(\Pa),\bb(\Pa)) - \psi(\s(\Pa),\bb(\Pa))| \leq m(\Pa)T(\tilde{\delta}+\delta)) = 1 $.
\end{enumerate}
\end{restatable}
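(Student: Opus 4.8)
The plan is to prove all three statements as \emph{deterministic} (pathwise) inequalities that hold for every admissible path $\Pa$, so that each holds with $\mu$-probability one regardless of the sampling mechanism. Two structural facts drive the argument. First, because simultaneous substitutions are excluded, consecutive states along any path differ at exactly one site, i.e. $\text{d}_{\text{H}}(\x^{j-1},\x^j)=1$ with the single differing coordinate being $s^j$. Second, the ordering constraint $0<t^1<\cdots<t^m<T$ guarantees $t^j\le T$ for every jump. Given these, each statement reduces to bounding the per-jump rate increment $\Delta^{\tg}(j)$ or $\Delta^{\gamma}(j)$ uniformly and then summing against the times.

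I would begin with statement (2), the ISM case, which is the simplest. Under the ISM each site's rate depends only on its own value, so a single-site change at position $s^j$ alters only the one summand indexed by $s^j$ in the total rate $\gamma(\cdot;\x)=\sum_i\gamma_i(\cdot;x_i)$. Each exit rate $\gamma_i(\cdot;b)=\sum_{b'\neq b}\gamma_i(b';b)$ is a sum of $q=a-1$ terms each lying in $[\gmin,\gmax]$, hence $\gamma_i(\cdot;b)\in[q\gmin,q\gmax]$, so $|\Delta^{\gamma}(j)|\le q(\gmax-\gmin)=\delta$. Summing and using $t^j\le T$ yields $|\psi|\le\sum_j t^j|\Delta^{\gamma}(j)|\le m(\Pa)T\delta$, which is (2). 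Statement (1) proceeds identically but must account for the spread of dependence induced by context: when site $s^j$ changes, the affected summands in $\tg(\cdot;\x)=\sum_i\tg_i(\cdot;\tx_i)$ are exactly those $i$ with $s^j\in\C_i$. Under Assumption~\ref{assumption:neighborhood context} this is the set $\{s^j-k/2,\ldots,s^j+k/2\}$, of cardinality at most $k+1$, and each affected exit rate lies in $[q\tgmin,q\tgmax]$ and thus changes by at most $q(\tgmax-\tgmin)$. Hence $|\Delta^{\tg}(j)|\le(k+1)q(\tgmax-\tgmin)=\tilde{\delta}$, and summing with $t^j\le T$ gives $|\tpsi|\le m(\Pa)T\tilde{\delta}$.

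Statement (3) then follows immediately from (1) and (2) by the triangle inequality, $|\tpsi-\psi|\le|\tpsi|+|\psi|\le m(\Pa)T(\tilde{\delta}+\delta)$; since (1) and (2) each hold for all paths, all three events hold simultaneously with probability one under $\mu$. I expect the only genuine content to lie in the counting step within (1): verifying that a single substitution perturbs at most $k+1$ of the per-site exit rates, and in particular fixing the constant $k+1$ correctly so as to include both the mutated site itself and its $k$ neighbors in whose context it appears. Everything else is the routine bounding of an exit rate as a sum of $q$ rates in $[\tgmin,\tgmax]$ together with the elementary bound $t^j\le T$.
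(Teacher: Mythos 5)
Your proposal is correct and follows essentially the same route as the source argument: the paper does not reproduce a proof (the lemma is imported from \citet{Mathews3:2025}), but the analogous reasoning it does spell out (e.g.\ in Lemma~\ref{lemma: hat weight uniform bound}, where a jump at $s^l$ is said to change the context of at most $k+1$ sites --- the $k$ sites in its context plus $s^l$ itself) is exactly your pathwise counting argument: each affected exit rate is a sum of $q$ rates in $[\tgmin,\tgmax]$ (resp.\ $[\gmin,\gmax]$), at most $k+1$ (resp.\ one) summands change per jump, $t^j \le T$, and part~(3) follows by the triangle inequality. Your framing of all three statements as deterministic inequalities holding for every admissible path, hence with probability one under any measure, is also the right way to read the ``$\Prob_\mu(\cdot)=1$'' formulation.
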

%
%
The proof of the DSM MGF bound \eqref{Eqn:DSMMGFBound} proceeds by relating the DSM $\pi$ to a modified DSM where the subset of sites in $\calA$ evolve according to a \textit{standard symmetric evolution model}, while the remaining sites in $\calA^c$ continue to evolve under the original DSM rates. The standard symmetric evolution model $\Q^{\text{sym}}$ is defined by
\begin{align}
\label{eqn:Standard Symmetric}
\Q_{\x,\x^{\prime}}^{\text{sym}} = 1 \text{ if } \text{d}_{\text{H}}(\x,\x^{\prime}) = 1 \quad\quad \Q_{\x,\x^{\prime}}^{\text{sym}} = 0 \text{ if } \text{d}_{\text{H}}(\x,\x^{\prime}) > 1,
\end{align}
%
For example, the standard symmetric evolution model for DNA ($\mathscr{A} = \{\tA,\tG,\tC,\tT\}$) is the JC69 model \cite{Jukes:1969} with unit rate ($\gamma \equiv 1$). Observe that any standard symmetric evolution model is an ISM and the number of non-zero elements along a given row of $\Q^{\text{sym}}$ is equal to $nq = n(\abs{\mathscr{A}} - 1)$.
\begin{proof}
We first define the modified DSM model; from there we can then apply  Lemma~\ref{lemma:expectation and pr bound}. Let $\pi^{\prime}$ be a DSM with rate matrix $\tQ^{\prime}$ such that $\tQ^{\prime}_{\x,\x^{\prime}} = 0$ if $\text{d}_{\text{H}}(\x,\x^{\prime}) > 1$, and define $\tQ^{\prime}_{\x,\x^{\prime}}$ for $\text{d}_{\text{H}}(\x,\x^{\prime}) = 1$ by
\begin{align}
\label{eqn: pi hat rates}
\tg^{\prime}_i(b; \tx_i) =
\begin{cases} \tg_i(b; \tx_i) & \text{ for } i \notin \calA \\
1 & \text{ for } i \in \calA .
\end{cases}
\end{align}
%
%
so $\Pa_{\calA}$ is  distributed according to a standard symmetric evolution model with endpoint constraints $\x_{\calA}$ and $\y_{\calA}$.
We will bound the likelihood ratio $\tP_{(T, \tQ)}(\y, \Pa \mid \x)/\tP_{(T, \tQ^{\prime})}(\y, \Pa \mid \x)$,
 considering the terms in \eqref{eqn:Path Density} in turn. We first consider the product of rates, where we have
\begin{align}
\label{eqn: product gamma upper bound}
\prod^{m(\Pa)}_{l=1} \tg_{s^l}(b^l; \tx^{l-1}_{s^l}) &= \prod_{l : s^l \in \calA}\tg_{s^l}(b^l; \tx^{l-1}_{s^l}) \prod_{l : s^l \notin \calA}\tg_{s^l}(b^l; \tx^{l-1}_{s^l}) \leq \tg^{m(\Pa_{\calA})}_{\max}  \prod^{m(\Pa)}_{l=1} \tg^{\prime}_{s^l}(b^l; \tx^{l-1}_{s^l}),
\end{align}
with the last inequality holding since $\prod_{l : s^l \in \calA} \tg_{s^l}'(b^l; \tx_{s^l}^{l-1}) = 1$. Similarly,
\begin{align}
\label{eqn: product gamma lower bound}
\prod^{m(\Pa)}_{l=1} \tg_{s^l}(b^l; \tx^{l-1}_{s^l}) \geq \tg^{m(\Pa_{\calA})}_{\min}  \prod^{m(\Pa)}_{l=1} \tg^{\prime}_{s^l}(b^l; \tx^{l-1}_{s^l}).
\end{align}
%
Next we consider the exponential terms in \eqref{eqn:Path Density}. First recall that $\calA_{\text{e}} = \{i \in \calA :  \C_i \cap \calA^c \neq \emptyset\}$ denotes the \textit{edge} sites of $\calA$, and let $\calA_{\text{int}} = \calA \setminus \calA_{\text{e}}$ the \textit{interior} sites,  so $\calA = \calA_{\text{int}} \cup \calA_{\text{e}}$ is a partition of $\calA$. Note that $|\calA_{\text{e}} | \leq k$ if the sites in $\calA$ are assumed to be contiguous,
and the context of each site is limited to its $k$-neighborhood (Assumption~\ref{assumption:neighborhood context}). Under $\pi^{\prime}$, the rate that site $i \in \calA$ mutates is $\gamma^{\prime}_i(\cdot; x_i(t)) \equiv q = |\mathscr{A}| - 1$ since $\gamma^{\prime}_i(b;x_i(t)) \equiv 1$ for $i \in \calA$ and $b \neq x_i(t)$ by \eqref{eqn: pi hat rates}. Therefore, the rate at which $\x_{\calA}(t)$ mutates under the modified DSM $\pi^{\prime}$ is
\begin{align*}
\sum_{i \in \calA} \gamma^{\prime}_i(\cdot; x_i(t)) = \sum_{i \in \calA} q = q|\calA|.
\end{align*}
%
Now returning to the exponential terms in \eqref{eqn:Path Density}, write
\begin{align}
\label{eqn: seq mutation rate tilde and prime}
\tg(\cdot; \x^{l-1}) = \sum^n_{i=1} \tg_i(\cdot; \tx^{l-1}_i) \nonumber &= \tg^{\prime}(\cdot; \x^{l-1}) + \sum_{i \in \calA} \tg_i(\cdot; \tx^{l-1}_i) - q\abs{\calA} \nonumber \\ 
&\leq \tg^{\prime}(\cdot; \x^{l-1}) + 
 q\abs{\calA_{\text{e}}} \tgmax + \sum_{i \in \calA_{\text{int}}} \tg_i(\cdot; \tx^{l-1}_i)  - q\abs{\calA}.
\end{align}
Applying the upper bound \eqref{eqn: seq mutation rate tilde and prime} we obtain
\begin{align}
\label{eqn: exponential term tilde and prime upper bound}
\sum^{m(\Pa)}_{l=1} \Delta^t(l) \tg(\cdot; \x^{l-1}) &\leq \sum^{m(\Pa)}_{l=1} \Delta^t(l) \tg^{\prime}(\cdot; \x^{l-1}) + Tq(\abs{ \calA_{\text{e}}}\tgmax - |\calA|) \\
&+ \sum^{m(\Pa)}_{l=1} \Delta^t(l) \sum_{i \in \calA_{\text{int}}} \tg_i(\cdot; \tx^{l-1}_i).
\end{align}
The right hand side of \eqref{eqn: exponential term tilde and prime upper bound} can be upper bounded using Lemma~\ref{lemma: Uniform Bounds on Delta} (recall $\tilde{\delta} := q(k+1)(\tgmax - \tgmin)$):
\begin{align}
\label{eqn: bound on interior mutation rate}
\sum^{m(\Pa)}_{l=1} \Delta^t(l) \sum_{i \in \calA_{\text{int}}} \tg_i(\cdot; \tx_i^{l-1})  &=  \sum^{m(\Pa)}_{l=1} t^{l} \sum_{i \in \calA_{\text{int}}} (\tg_i(\cdot; \tx^{l}_i) - \tg_i(\cdot; \tx^{l-1}_i)) - T\sum_{i \in \calA_{\text{int}}} \tg_i(\cdot; \tilde{y}_i) \nonumber \\
&= \sum^{m(\Pa)}_{\{l: s^{l} \in \calA \}} t^{l} \sum_{i \in \calA_{\text{int}}} (\tg_i(\cdot; \tx^{l}_i) - \tg_i(\cdot; \tx^{l-1}_i)) - T\sum_{i \in \calA_{\text{int}}} \tg_i(\cdot; \tilde{y}_i) \nonumber \\
&\leq T\tilde{\delta} m(\Pa_{\calA})  - T\sum_{i \in \calA_{\text{int}}} \tg_i(\cdot; \tilde{y}_i).
\end{align}
The second equality follows since $\sum_{i \in \calA_{\text{int}}} (\tg_i(\cdot; \tx^{l}_i) - \tg_i(\cdot; \tx^{l-1}_i)) = 0$ if $s^{l} \notin \calA = \calA_{e} \cup \calA_{\text{int}}$ as the mutation rates of sites in $\calA$ are unchanged in this case. The final inequality follows by Lemma~\ref{lemma: Uniform Bounds on Delta} since
\begin{align*}
\sum_{i \in \calA_{\text{int}}} (\tg_i(\cdot; \tx^l_i) - \tg_i(\cdot; \tx^{l-1}_i)) \leq \tg(\cdot; \x^l) - \tg(\cdot; \x^{l-1}) \leq \tilde{\delta}.
\end{align*}
For brevity, denote the constant $c = T(q\abs{\calA} + \sum_{i \in \calA_{\text{int}}} \tg_i(\cdot; \tilde{y}_i))$. Using the bound \eqref{eqn: bound on interior mutation rate}, we obtain by \eqref{eqn: exponential term tilde and prime upper bound}
\begin{align*}
\sum^{m(\Pa)}_{l=1} \Delta^t(l) \tg(\cdot; \x^{l-1}) \leq \sum^{m(\Pa)}_{l=1} \Delta^t(l) \tg^{\prime}(\cdot; \x^{l-1}) + Tq\abs{ \calA_{\text{e}}}\tgmax + T\tilde{\delta} m(\Pa_{\calA}) - c,
\end{align*}
%
%
yielding a lower bound for the exponential terms in \eqref{eqn:Path Density}:
\begin{align}
e^{-\sum^{m(\Pa)}_{l=1} \Delta^t(l) \tg(\cdot; \x^{l-1}) } \geq e^{ -\sum^{m(\Pa)}_{l=1} \Delta^t(l) \gamma^{\prime}(\cdot; \x^{l-1}) } e^{-Tq| \calA_{e}|\tgmax -T\tilde{\delta}m(\Pa_{\calA}) + c }.
\label{eqn: sum gamma lower bound}
\end{align}
A similar argument yields the upper bound
\begin{align}
 e^{-\sum^{m(\Pa)}_{l=1} \Delta^t(l) \tg(\cdot; \x^{l-1}) } \leq e^{ -\sum^{m(\Pa)}_{l=1} \Delta^t(l) \gamma^{\prime}(\cdot; \x^{l-1}) } e^{-Tq| \calA_{e}|\tgmin + T\tilde{\delta} m(\Pa_{\calA}) + c}.
\label{eqn: sum gamma upper bound}
\end{align}
Combining (\ref{eqn: product gamma upper bound},\ref{eqn: product gamma lower bound}) and (\ref{eqn: sum gamma lower bound}, \ref{eqn: sum gamma upper bound}) and applying to \eqref{eqn:Path Density} yields the uniform bounds
\begin{align*}
\tg^{m(\Pa_{\calA })}_{\min} e^{-T \tilde{\delta} m(\Pa_{ \calA }) }  e^{-Tq| \calA_{\text{e}}|\tgmax + c} \; \leq \; 
 \frac{\tP_{(T, \tQ)}(\y, \Pa \mid \x)}{\tP_{(T, \tQ^{\prime})}(\y, \Pa \mid \x)} \; \leq \; \tg^{m(\Pa_{\calA})}_{\max} e^{T \tilde{\delta} m(\Pa_{\calA}) }  e^{-Tq| \calA_{e}|\tgmin + c} .
\end{align*}
It follows that
\begin{align*}
\E_{\pi}[\theta^{m(\Pa_{\calA })}] &= \frac{\int_{\scrP} \theta^{m(\Pa_{ \calA})} \tP_{(T, \tQ)}(\y, \Pa \mid \x) \nu(d\Pa)}{\int_{\scrP}  \tP_{(T, \tQ)}(\y, \Pa \mid \x) \nu(d\Pa)} \\
&\leq e^{Tq\abs{\calA_{e}}(\tgmax - \tgmin)} \frac{\E_{\pi^{\prime}}[ (\theta\tgmax e^{T\tilde{\delta}})^{m(\Pa_{\calA })}]}{\E_{\pi^{\prime}}[(\tgmin e^{-T \tilde{\delta}})^{m(\Pa_{\calA })}]}.
\end{align*}
Now recalling that $\Pa_{\calA}$ has marginal distribution under the modified DSM $\pi^{\prime}$ given by a standard symmetric evolution model, we can apply Lemma~\ref{lemma:expectation and pr bound} to the denominator. In particular, letting $\mu^{\text{Sym}}(\cdot \mid \x,\y)$ denote an endpoint-constrained ISM with rate matrix $\Q^{\text{Sym}}$ defined in \eqref{eqn:Standard Symmetric} we have 
%
\begin{align*}
\E_{\pi^{\prime}}[(\tgmin e^{-T \tilde{\delta}})^{m(\Pa_{\calA })}] &= \E_{\mu^{\text{Sym}}}[(\tgmin e^{-T \tilde{\delta}})^{m(\Pa_{\calA })}] \\
&\geq \tgmin^{r_{\calA}} \me^{-r_{\calA}T\tilde{\delta}} \Prob_{\mu^{\text{sym}}}(m(\Pa_{\calA}) = r_{\calA} \mid \x_{\calA}, \y_{\calA}).
\end{align*}
By Lemma~\ref{lemma:expectation and pr bound}
\begin{align*}
\Prob_{\mu^{\text{sym}}}(m(\Pa_{\calA}) = r_{\calA} \mid \x_{\calA}, \y_{\calA}) &\geq 
\exp\left(- r_{\calA} c^{\prime} \exp(T q \gmax)T\right) \\
&\times \exp\left(- (n_{\calA}-r_{\calA}) c^{\prime} \exp(T q\gmax)\gmin T^2\right) \\
&= \exp(-r_{\calA}q^2\exp(Tq) T - (n_{\calA} - r_{\calA})q^2\exp(Tq)T^2),
\end{align*}
where $c^{\prime} = \gmax^2/\gmin q^2 e^{Tq(\gmax - \gmin)} = q^2$ since $\gmax = \gmin = 1$ under $\Q^{\text{sym}}$.
Applying Lemma~\ref{lemma:expectation and pr bound} again to the numerator, we obtain
\begin{align*}
\E_{\pi^{\prime}}[(\theta \tgmax e^{T\tilde{\delta}})^{m(\Pa_{\calA})} ] =  \E_{\mu^{\text{sym}}}[(\theta \tgmax e^{T\tilde{\delta}})^{m(\Pa_{\calA})} ]  
& \leq e^{ r_{\calA} c_1 + r_{\calA}T c_2 +  (n_{\calA}-r_{\calA})T^2 c_3 },
\end{align*}
%
%
where
\begin{align*}
c_0 = \theta \tgmax e^{T\tilde{\delta}} \qquad c_1 = \log(c_0) \qquad c_2 = c_0 q^2 e^{c_0 Tq} \qquad c_3 = c_0 c_2.
\end{align*}
By the definition of $\lambda(\theta)$
\begin{align*}
\E_{\pi}[\theta^{m(\Pa_{\calA})}] &\leq e^{Tq\abs{\calA_{\text{e}}}(\tgmax - \tgmin)} \frac{\E_{\pi^{\prime}}[ (\theta\tgmax e^{T\tilde{\delta}})^{m(\Pa_{\calQ})}]}{\E_{\pi^{\prime}}[(\tgmin e^{-T \tilde{\delta}})^{m(\Pa_{\calA})}]} \leq e^{Tq\abs{\calA_{\text{e}}}(\tgmax - \tgmin)}e^{\lambda(\theta) \zeta_{\calA}}.
\end{align*}
%
The stated bound follows.
\end{proof}

\subsection{Proof of Theorem~\ref{thm: SMC Complexity Bound}}

\begin{proof}(Theorem~\ref{thm: SMC Complexity Bound})
The proof follows by combining the bound on $\max_v L^{2}(\pi_v,\pi_{v-1})$ obtained in  Theorem~\ref{thm:chi square bound tempering} and the $\omega$-warm mixing time bound for arbitrary DSMs obtained in Lemma~\ref{lemma:Mixing Time} to satisfy the two conditions of Theorem~\ref{thm:Marion SMC}. Indeed, first observe that by Theorem~\ref{thm:chi square bound tempering} we can guarantee $\max_v L^{2}(\pi_v,\pi_{v-1}) = \bigO(1)$ by choosing $V = \bigO(\zeta)$, satisfying the first condition of Theorem~\ref{thm:Marion SMC} by choosing $N = \bigO(\epsilon^{-2}V^{3}) = \bigO(\epsilon^{-2} \zeta^3)$. Next, we need to bound $\max_v \tau_v(\frac{\delta}{5NV}, 2)$, where $\delta \in (0,1)$ and $\tau_v$ is the warm mixing time for the kernel $\tK_v$ targeting $\pi_v$. To bound this quantity, recall that Lemma~\ref{lemma:Mixing Time} holds for $\tK$ targeting an arbitrary DSM $\pi$. Hence, consider $\tK_v$ targeting $\pi_v$ \eqref{Eqn:TemperedPi} and let $c_{v,1}$, $c_{v,2}$, and $c_{v,3}$ be the corresponding constants defined in  Lemma~\ref{lemma:Mixing Time}. Applying the bound of Lemma~\ref{lemma:Mixing Time} with $c'_1 = \max c_{v,1}$, $c'_2 = \max_v c_{v,2}$, and $c'_3 = \max c_{v,3}$
and choosing the warmness parameter $\omega = 2$ and $N= \delta/5V\epsilon$ for $\delta \in (0,1)$, yields the bound on $\max_v \tau_v(\frac{\delta}{5NV}, 2)$, satisfying the second condition Theorem~\ref{thm:Marion SMC}.
\end{proof}

We note that $c^{\prime}$ in Theorem~\ref{thm: SMC Complexity Bound} is $\bigO(k\log(\phi_{\star})\log(\tg_{\star}))$ under certain conditions.
Specifically, assuming $\max\{\tg_{\max},e\} \ll \phi_{\star}$, then we have under Assumption~\ref{assump: T assumption} that for $c^{\prime}_1,c^{\prime}_2$ and $c^{\prime}_3$ defined above in the proof of Theorem~\ref{thm: SMC Complexity Bound} 
\begin{align*}
\max\{c^{\prime}_1,c^{\prime}_2,c^{\prime}_3\} = \bigO(k\log(\phi_{\star})\log(\tg_{\star})),
\end{align*}
where we used Remark~\ref{remark: lambda complexity}, which implies $\lambda(e) = \bigO(\log(\tg_{\star}))$ and so $c^{\prime}_1 = \bigO(\log(\tg_{\star})\log(\phi_{\star}))$ and $c^{\prime}_2 =  \bigO(\log(\tg_{\star}) k)$ (since $\tilde{\delta} = \bigO(k)$) by Lemma~\ref{lemma:Mixing Time}. 


\end{document}